\newtheorem{prop}{Proposition}
\newtheorem{lemma}{Lemma}
\newtheorem{thm}{Theorem}
\theoremstyle{remark}
\newtheorem{remark}{Remark}
\title{Dividend Policy and Capital Structure of a Defaultable Firm}
\thanks{The author would like to thank David Hobson, Bart Lambrecht, Harry Zhang and the seminar participants at Imperial College London for the useful discussions and comments.}
\author{Alex S.L. Tse}
\thanks{
	Department of Mathematics, Imperial College London, London SW7 2AZ, UK. a.tse@imperial.ac.uk}
\date{\today}
\begin{document}

\keywords{Dividend policy, capital structure, default risk, singular stochastic control, HJB equation.}

\maketitle

\begin{abstract}
	Default risk significantly affects the corporate policies of a firm. We develop a model in which a limited liability entity subject to Poisson default shock jointly sets its dividend policy and capital structure to maximize the expected lifetime utility from consumption of risk averse equity investors. We give a complete characterization of the solution to the singular stochastic control problem. The optimal policy involves paying dividends to keep the ratio of firm's equity value to investors' wealth below a critical threshold. Dividend payout acts as a precautionary channel to transfer wealth from the firm to investors for mitigation of losses in the event of default. Higher the default risk, more aggressively the firm leverages and pays dividends.
	
\end{abstract}
	
\section{Introduction}

Since the capital structure irrelevance principle and dividend irrelevance principle of \cite{modigliani-miller58} and \cite{miller-modigliani61}, a vast literature has emerged to explore the factors driving corporate policies observed in practice. Historically, some important considerations include tax benefit, asymmetric information, signaling motive, agency costs, financial distress costs, managerial risk aversion and etc. While market frictions and strategic interaction among agents are all realistic concerns, very fundamental factor such as default risk could indeed also play a crucial role behind corporate finance decisions. In this paper, we examine the impact of default risk of a firm on its joint decision of dividend policy and capital structure as well as equity investors' consumption behavior. 

Our model features a limited liability firm and risk averse equity investors. At each point of time given the amount of equity capital in place, the firm simultaneously decides how much to invest in a risky asset (which implies the amount of debt required and in turn its choice of capital structure) and how much to pay out to investors with logarithm utility function. Investors can deposit the dividends received in a riskfree retail saving account and consume to derive utility flow. The interests of firm managers (who set the capital structure decision and payout policy) and investors (who choose their own consumption policy) are perfectly aligned such that their joint economic objective here is to maximize investors' expected lifetime utility of consumption. 

We assume wealth can only flow from the firm to investors in form of dividends but capital cannot be injected into the firm from investors via equity issuance. A reduced form approach is adopted to model firm's default where the equity value of the firm jumps to zero upon the arrival of a Poisson shock. As a result, our model is perhaps the most suitable to describe the corporate policies of small businesses and start-ups which face a high barrier to equity financing and a high level of operational or financial uncertainty (and hence are prone to default). 

The incentive to pay out dividends in presence of default risk is intuitive. A unit of wealth retained in the firm can generate economic value via investment in the risky asset but it comes at the cost of the potential loss when default arrives. Equity investors, however, can exploit the limited liability structure of the entity by transferring some of the firm's equity to their private pocket via dividends to earn a mediocre but safe retail interest rate, rather than leaving all the money on the table just to be wiped out (for example, to be seized by the creditors) when the firm goes bust.

The underlying optimization problem turns out to be a singular stochastic control problem. In contrast to many other models of investment and dividend distribution, the optimal payout policy in our setup is to pay dividends as to keep the ratio of the firm's equity value to investors' wealth level below a certain critical threshold, rather than just to pay out the cash to keep the equity reserve below a constant level. Moreover, the optimal leverage level is state-dependent with its magnitude being decreasing in the equity-to-wealth ratio instead of being a constant. 

Analysis of such a singular stochastic control problem is not straightforward in general and the concept of viscosity solutions sometimes has to be invoked. Nonetheless, based on the transformation techniques introduced in the recent work of \cite{hobson-zhu16} and \cite{hobson-tse-zhu18}, we show that the associated HJB equation can indeed be reduced to a first order crossing problem. The critical dividend payment boundary can be read from the point at which the solution to a first order differential equation first crosses a given analytical function. An important advantage of this approach is that it is relatively easy to deduce the comparative statics of the model. We find that a higher default risk leads to a lower critical equity-to-wealth ratio for dividend payout, a higher leverage level and a lower consumption rate of investors.

We close the introduction by relating our work to the existing literature. Both payout policy and capital structure decision are long-standing research topics in corporate finance and it is impossible to give a full account of the theory development here. Instead, we refer readers to the excellent surveys by \cite{harris-raviv91} and \cite{allen-michaely03}. Mainstream finance literature often studies payout policy and capital structure decision separately but not their joint interaction, as highlighted by \cite{lambrecht-myers12}. The economic foundation of our model is based on \cite{lambrecht-myers17} and \cite{lambrecht-tse18} where investment and payout policy are examined together in an inter-temporal investment/consumption model. Although our exposition assumes perfect coordination between managers and investors, the same modeling framework can be directly applied to an agency setup driven by utility maximization of self-interested risk averse managers (see Remark \ref{remark:agency} in Section \ref{sect:setup}).

Continuous-time portfolio selection, which is the skeleton of the modeling framework in this paper, is of course another enormous field in the mathematical finance and stochastic control literature. Existing work usually focuses on optimal consumption/investment models as per the seminal work of Merton (\citeyear{Merton:69}, \citeyear{Merton:71}) and its many other variants, or optimal risk control/dividend distribution models which are studied extensively in the field of insurance [e.g. \cite{radner-shepp96}, \cite{browne97} and \cite{jgaard-taksar99}]. In a certain sense, these two classes of models are equivalent because consumption and dividend are usually viewed interchangeably, as stated by \cite{taksar00} that optimal risk control/dividend models in most instances are consumption/investment models with linear utility function and arithmetic Brownian motion return. Implicitly, these models assume that payout from a firm has to be consumed immediately and therefore discard the possibility that dividends can at least be deposited for consumption later. We disentangle the effect of dividend payout and consumption by introducing a riskfree retail saving account to investors as an outside option. One novel prediction as a result of this flexibility is that a bad firm will be voluntarily liquidated because there is no longer the need to keep a bad firm alive just for the purpose of generating dividends over time to match the smooth consumption required by risk averse investors. To the best of our knowledge, our current paper is the first one to consider a joint, dynamic model of capital structure (i.e. investment/risk control), dividend payout and individuals' consumption. 

We also examine the joint impact of default risk on the firms' corporate policies and consumption pattern of investors. Incorporation of exogenous default risk is not a new mathematical feature - optimal portfolio choice and consumption problems with random termination time have been considered in life insurance models [see for example \cite{richard75} and \cite{pliska-ye07}]. In our current context of corporate finance, nonetheless, consideration of default risk leads to some interesting economic phenomena as revealed by the comparative statics.\footnote{The interaction among dividend, leverage and firm's default is also explored in a one-period signaling model of \cite{kucinskas18}, where high dividend is a bad signal for firms with high leverage because the payout can be driven by the motivation of ``cash out'' prior to bankruptcy.}

One important aspect of our model is that equity financing is not possible. Thus our model is somewhat similar to a Merton problem with transaction costs as in \cite{magill-constantinides76}, \cite{davis-norman90} and \cite{shreve-soner94}. More precisely, the special case studied by \cite{hobson-zhu16}, where transaction cost is zero on sale and infinite on purchase, is comparable to our model in which equity capital can only be passed to investors as dividends but fresh capital cannot be injected into the firm. Consequently, our optimal dividend strategy is similar to the investment strategy obtained by \cite{hobson-zhu16}. However, our model is inherently a higher dimensional one because of the leverage decision involved and thus is not a trivial extension of their model. See the discussion in Section \ref{sect:setup}. Broadly speaking, our work contributes to the growing literature on solving a singular stochastic control problem via reduction to a first order crossing problem [e.g. \cite{hobson-zhu16}, \cite{hobson-tse-zhu18} and \cite{hobson-tse-zhu16}]. It showcases the mathematical techniques are amendable outside the context of portfolio selection under transaction costs and how powerful comparative statics can be derived based on simple comparison principles. 

The rest of the paper is organized as follows. Section \ref{sect:setup} introduces the modeling setup. Section \ref{sect:mainresult} gives the main results of the paper and their economic significance. A heuristic derivation of the solution is provided in Section \ref{sect:heuristics} and we give a full verification argument of the candidate solution in Section \ref{sect:veri}. Section \ref{sect:conclude} concludes. Some proofs in the main body of the paper are deferred to the appendix.

\section{The setup}
\label{sect:setup}
	
	Throughout this paper we work with $(\Omega,\mathcal{F},\{\mathcal{F}_t\},\mathbb{P})$ a filtered probability space satisfying the usual conditions which supports a one-dimensional Brownian motion $B=(B_t)_{t\geq 0}$.
	
	A firm can invest in two classes of asset: a bond instrument with interest rate $\rho$, and a risky asset which price process is a geometric Brownian motion with drift $\mu$ and volatility $\sigma>0$. For every unit of equity within the firm at time $t$, an amount of $\pi_t$ is invested in the risky asset whereas $1-\pi_t$ is invested in the bond. A choice of $\pi_t>1$ corresponds to a levered firm which borrows an amount of $\pi_t-1$ to finance its investment in the risky asset. We call $\pi=(\pi_t)_{t\geq 0}$ an investment policy of the firm which is required to be adapted and satisfy $\int_{0}^t \pi^2_{u}(\omega) du<\infty$ for all $(t,\omega)$. Equity within the firm can also be distributed to risk averse equity investors in form of dividends. Let $\Phi=(\Phi_t)_{t\geq 0}$ with $\Phi_{0-}=0$ be an adapted, non-decreasing process representing the cumulative amount of dividends paid to the investors up to time $t$. The equity value of the firm $S=(S_t)_{t\geq 0}$ then evolves as
	\begin{align}
	dS_t&=\pi_t S_t(\mu dt + \sigma dB_t)+(1-\pi_t) S_t \rho dt - d\Phi_t \nonumber \\
	&=[\rho+(\mu-\rho)\pi_t]S_tdt+\sigma\pi_t S_t dB_t-d\Phi_t.
	\label{eq:EquityValSDE}
	\end{align}
	
	The risk averse investors have a logarithm utility function. They possess a private account which earns a retail riskfree rate of $r$ and they consume at the same time to derive utility flow continuously. A consumption policy $c=(c_t)_{t\geq 0}$ is a non-negative, adapted process with $\int_0 ^t c_u(\omega) du<\infty$ for all $(t,\omega)$. The investors' wealth level $X=(X_t)_{t\geq 0}$ then follows the dynamic
	\begin{align}
	dX_t=(rX_t-c_t)dt+d\Phi_t.
	\label{eq:PrivateWealthSDE}
	\end{align} 
	
	The firm is exposed to a Poisson shock with intensity $\lambda>0$ which causes the firm to default and wipes out its equity entirely. Equity investors are protected by the limited liability structure of the entity and their private wealth will remain intact. After the firm's default, there is no other investment opportunity available to the investors except their private saving account. Hence their optimal consumption strategy post-default can be derived by solving the deterministic control problem of
	\begin{align}
	F(x):=\sup_{c_t>0}\int_{0}^\infty e^{-\beta t}\ln c_t dt
	\label{eq:DeterControlProb}
	\end{align}
	under the dynamic $dX_t=(rX_t-c_t)dt$ with $X_{0}=x$. Here $\beta>0$ is the investors' subjective discount rate. The solution to \eqref{eq:DeterControlProb} is known as
	\begin{align}
	F(x)=\frac{1}{\beta}\ln x+\frac{1}{\beta}\left[\frac{r}{\beta}+\ln\beta-1\right]
	\end{align}
	and the corresponding optimal consumption strategy is given by $c_t^*=\beta X_t$.
		
	A collection of consumption, investment and dividend policies $(c,\pi, \Phi)$ is said to be admissible if $S_t$ and $X_t$ are non-negative with $(S_t,X_t)\notin (0,0)$ for all $t\geq 0$. Denote by $\mathcal{A}(s,x)$ the class of admissible strategies with initial value $(S_{0-} = s, X_{0-} = x)$. Prior to default, equity investors' expected discounted liftime utility from consumption under a given $(c,\pi, \Phi)$ is
	\begin{align*}
	J(s,x;c,\pi,\Phi)&:=\mathbb{E}\left[\int_0^{\infty}e^{-\beta t}\ln c_t dt\Biggr | S_{0-} = s, X_{0-} = x\right] \\
	&=\mathbb{E}\left[\int_0^{\tau}e^{-\beta t}\ln c_t dt+\int_{\tau}^{\infty}e^{-\beta t}\ln c_t dt\Biggr | S_{0-} = s, X_{0-} = x\right]
	\end{align*}
	where $\tau$ is an exponential random variable with parameter $\lambda$ defined on the same probability space and it is independent of the underlying Brownian motion $B$. Firm managers act in the best interest of the investors. Their joint objective is to solve
	\begin{align*}
	V(s,x):=\sup_{(c,\pi, \Phi)\in \mathcal{A}(s,x)}J(s,x;c,\pi,\Phi)
	\end{align*}
	which can be rewritten as
	\begin{align}
	V(s,x)=\sup_{(c,\pi, \Phi)\in \mathcal{A}(s,x)}\mathbb{E}\left[\int_0^{\tau}e^{-\beta t}\ln c_t dt+e^{-\beta \tau} F(X_{\tau})\Biggr | S_{0-} = s, X_{0-} = x\right]
	\label{eq:OptiProb}
	\end{align}
	due to dynamic programming principle [see \cite{jean-lakner-kadam04}].
		
	\begin{remark}
	\label{remark:tax}
	It is straightforward to introduce a dividend tax rate of $\kappa\in[0,1)$ in the model such that investors' wealth process, prior to firm's default, satisfies 
	\begin{align}
	dX_t=(rX_t-c_t)dt+(1-\kappa)d\Phi_t
	\label{eq:X_with_tax}
	\end{align}
	instead. Then under the transformation $\tilde{X}_t:=\frac{X_t}{1-\kappa}$ we can recover a version of the problem without tax.
	\end{remark}

	\begin{remark}
	\label{remark:agency}
	The optimization problem \eqref{eq:OptiProb} is a ``first best'' criteria where firm managers and investors can perfectly coordinate to jointly deduce the optimal corporate policies and consumption strategy to create maximum value for investors. It is indeed also possible to adopt our mathematical framework within an agency-based model featuring self-interested, risk averse managers as in Lambrecht and Myers (\citeyear{lambrecht-myers12}, \citeyear{lambrecht-myers17}). In this alternative setup, managers capture $1-\kappa$ fraction of the firm's total payout as a form of rent extraction and pass the remaining $\kappa$ fraction to investors as dividends. We could interpret $X=(X_t)_{t\geq 0}$ in \eqref{eq:X_with_tax} and $c=(c_t)_{t\geq 0}$ as the private wealth level and consumption rate of the managers respectively. The parameter $1-\kappa$ now reflects the bargaining power of the managers.\footnote{The sharing rule of the firm's payout can be justified by solving a repeated bargaining game between managers and investors. See Lambrecht and Myers (\citeyear{lambrecht-myers12}, \citeyear{lambrecht-myers17}).} They simultaneously set the corporate policies and their consumption strategy to maximize their own lifetime utility as in \eqref{eq:OptiProb}.
	\end{remark}

	It is constructive to compare our modeling setup to that of \cite{hobson-zhu16} who consider a Merton consumption and investment problem in which the risky asset can only be sold but not bought. Our problem is similar to theirs in the sense that the transfer of value also occurs in one direction only from the firm to the investors as dividends but capital cannot be injected into the firm from the investors. In other words, we rule out the possibility of equity issuance within our setup. It is not an unreasonable assumption as equity financing often involves expensive and time consuming procedures especially for smaller firms. 

	In \cite{hobson-zhu16} wealth is allocated between a risky asset and a riskfree cash account as in the standard Merton problem. Meanwhile, our model concerns wealth allocation between a risky firm and a riskfree cash account where the value of the former is not an exogenously given process but rather a controlled process based on the capital structure decision $\pi$. It is indeed possible to view our setup as a variant of the Merton problem with transaction costs [as considered by \cite{magill-constantinides76}, \cite{davis-norman90} and \cite{shreve-soner94}], albeit with a very special transaction costs structure. The economy features three distinct assets: a risky defaultable asset with drift $\mu$ and volatility $\sigma$, a defaultable debt instrument with constant yield $\rho$ and a non-defaultable cash account with interest rate $r$. The two defaultable securities can only be sold for cash but not bought with cash, i.e. transaction cost is infinite on purchase and zero on sale. However, these two securities are fully fungible which can be freely converted from one into another at their prevailing value without any friction. This unique transaction cost structure makes our problem non-trivial where the solution construction and the related economic properties do not follow from the existing literature of portfolio optimization under transaction costs.
	
	\section{Main results}
	\label{sect:mainresult}
	
	We state the key results of this paper where the proof is deferred to Section \ref{sect:veri}.
	
	\begin{thm}
		For the optimization problem \eqref{eq:OptiProb}:
		\begin{enumerate}
		\item If $\mu=\rho\leq \lambda +r$, the optimal strategy is to liquidate the firm immediately by distributing its entire equity to investors in form of dividends and then investors consume their wealth at a rate of $\beta$. The corresponding value function is given by
		\begin{align}
		V(s,x)=F(s+x)=\frac{1}{\beta}\ln (s+x)+\frac{1}{\beta}\left[\frac{r}{\beta}+\ln\beta-1\right].
		\label{eq:ValFunAlwaysSale}
		\end{align}
		
		\item If $\mu\neq \rho$ or $\mu=\rho>\lambda+r$, there exists a constant $z^*\in(0,\infty)$ such that the optimal strategy (prior to default of the firm) is not to pay any dividend when $\frac{S_t}{X_t}\leq z^*$. On this region, the optimal consumption strategy and investment policy are given by the feedback controls $c^{*}_t=c^*(S_t,X_t)$ and $\pi^{*}_t=\pi^*(S_t,X_t)$ where
		\begin{align}
		c^*(s,x):=\frac{1}{V_x(s,x)},\qquad \pi^*(s,x):=-\frac{(\mu-\rho)V_s(s,x)}{\sigma^2 sV_{ss}(s,x)}.
		\label{eq:FeedbackControls}
		\end{align}
		$V$ is the value function of the problem to be defined in Proposition \ref{prop:CandidateVProp} in Section \ref{sect:veri}.
		When $\frac{S_t}{X_t}> z^*$, the optimal strategy is to pay a discrete dividend of size $\frac{S_t-z^* X_t}{1+z^*}$ to the investors and then the strategies associated with the region of $\frac{S_t}{X_t}\leq z^*$ are followed thereafter. 
		
		\end{enumerate}
		
		\label{thm:main}
	\end{thm}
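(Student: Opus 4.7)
The plan is to construct a candidate value function by formal HJB analysis and then prove optimality through a verification argument, treating the two parts of the theorem as the degenerate and non-degenerate regimes of a single variational inequality. Since default is an exponential jump to an absorbing state where the post-default value is the known $F(x)$, the pre-default HJB takes the form of a (singular) variational inequality
\[
\min\!\Bigl\{\beta V - \ln c^* + c^* V_x - r x V_x - [\rho+(\mu-\rho)\pi^*]sV_s - \tfrac12\sigma^2(\pi^*)^2 s^2 V_{ss} - \lambda[F(x)-V],\; V_x - V_s\Bigr\} = 0,
\]
with $c^*,\pi^*$ from the first-order conditions \eqref{eq:FeedbackControls}; the inequality $V_x\geq V_s$ reflects that dividends pay one unit of equity into one unit of wealth, and the free boundary separates the no-dividend region from the lump-payment region.

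For Part 1, I would guess $V(s,x)=F(s+x)$, which saturates $V_x=V_s$ everywhere, so the dividend constraint is always active and the feasible strategy ``liquidate immediately, then consume at rate $\beta$'' attains this value. To see this is also an upper bound, I would apply It\^o to $e^{-\beta t}F(S_t+X_t)$ along an arbitrary admissible $(c,\pi,\Phi)$, compensate the default jump by the martingale $N_t-\lambda t$, and verify that the resulting drift is pointwise nonpositive precisely under the hypothesis $\mu=\rho\leq\lambda+r$: the $\pi$-dependent quadratic contribution vanishes because $\mu=\rho$, and the remaining linear drift $(\rho s - \lambda s + r x - \beta(s+x)\text{-type})$ is non-positive under the stated inequality. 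Taking expectations and letting $t\to\infty$ yields $V\leq F(s+x)$.

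For Part 2 I would exploit the homogeneity of $\log$-utility: since both $c\mapsto\ln c$ shifts additively under scaling of $(S,X)$ and $F$ is logarithmic, the ansatz $V(s,x)=F(x)+\frac{1}{\beta}g(s/x)$ reduces the HJB to a one-dimensional second-order ODE for $g$ on $[0,z^*]$ with value-matching $g(0)=0$, smooth pasting $g'(z^*)=\tfrac{1}{1+z^*}$ (from $V_x=V_s$), and an associated super-contact condition at $z^*$. Following the reduction technique of \cite{hobson-zhu16} and \cite{hobson-tse-zhu18}, I would introduce an auxiliary unknown essentially encoding a ratio of derivatives of $g$ to collapse the second-order ODE to a first-order one, whose integral curve is then compared against an explicit analytic curve encoding the free-boundary condition. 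The free boundary $z^*$ would be identified as the first point of crossing of these two curves, and existence of $z^*\in(0,\infty)$ follows by analyzing end-point behaviour (at $z=0^+$ and $z\to\infty$) under the non-degeneracy assumption $\mu\neq\rho$ or $\mu=\rho>\lambda+r$.

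The main obstacle I anticipate is twofold. First, on the analytic side, one must rigorously establish that the first-order crossing occurs in $(0,\infty)$ exactly under the complement of the Part 1 regime; the border case $\mu=\rho$ with $\rho>\lambda+r$ is delicate because the second-order ODE degenerates when $\pi^*$ becomes ill-defined, so the reduction needs a limiting argument. Second, on the verification side, I must justify the use of It\^o along singular controls $\Phi$ and lump dividends at $t=0$ when $s/x>z^*$: one localizes via stopping times, handles the bounded-variation term using $V_x\geq V_s$ (so that $(V_x-V_s)\,d\Phi_t\geq 0$ with equality under the candidate optimal dividend), and controls integrability through the logarithmic growth of $F$ and $g$. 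Establishing supermartingality of the discounted candidate for every admissible strategy and martingality under the feedback controls \eqref{eq:FeedbackControls} then closes the loop and identifies the candidate with $V$.
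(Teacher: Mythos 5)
Your overall architecture --- HJB variational inequality, reduction by the homogeneity of log utility to a one-dimensional free-boundary ODE, collapse to a first-order crossing problem \`a la Hobson--Zhu, then a supermartingale/martingale verification --- is exactly the route the paper takes, and the treatment of the degenerate regime in Part 1 by checking the drift of $e^{-\beta t}F(S_t+X_t)$ is the paper's Proposition \ref{prop:ValFunPropDegen} in disguise.

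There is, however, one concrete error: the gradient constraint in your variational inequality has the wrong sign. Paying a dividend $dD$ moves value from $S$ to $X$, so the directional derivative of $V$ along a payout is $(V_x-V_s)\,dD$; since payout is always feasible, the value function must satisfy $V_x\leq V_s$, with equality on the dividend-paying region (this is the paper's $\mathcal{M}V=V_x-V_s\leq 0$). You write $\min\{\cdots,\,V_x-V_s\}=0$, i.e.\ $V_x\geq V_s$, and later assert $(V_x-V_s)\,d\Phi_t\geq 0$. Taken literally this destroys the upper-bound argument: the bounded-variation term in the It\^o decomposition of the discounted candidate would then be nonnegative for an arbitrary admissible $\Phi$, so the process would not be a supermartingale and you could not conclude $V\leq V^C$. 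The fix is just to flip the inequality (and note that in Part 1, $V_s=V_x$ identically, which is why immediate liquidation is optimal there). Two smaller points your sketch under-specifies but which the paper needs: (i) a perturbation $\tilde V^C(s,x):=V^C(s,x+\epsilon)$ to bound the candidate below before passing to the limit in the supermartingale inequality (the raw candidate is unbounded below near $x=0$, so monotone/bounded convergence cannot be applied directly); and (ii) an admissibility check that the feedback-controlled state $(S^*,X^*)$ never reaches $(0,0)$, which the paper gets by showing the It\^o integrands of $V^C(S^*_t,X^*_t)$ are bounded while $V^C(0,0)=-\infty$.
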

	
	
	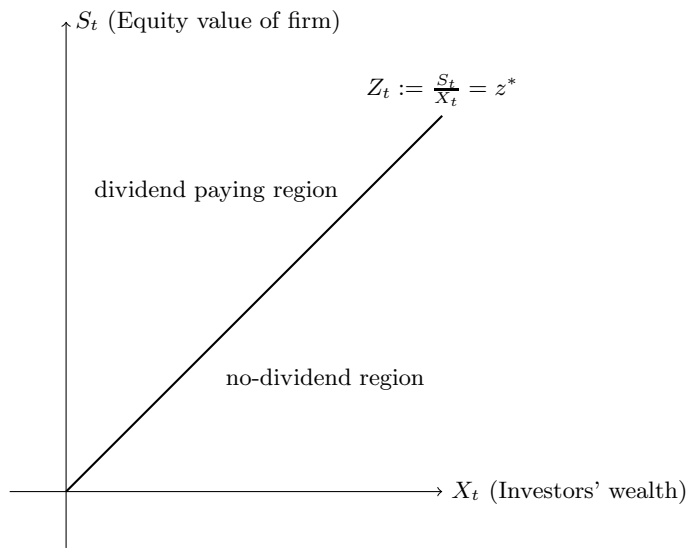
\begin{figure}[htbp]
		\center
		\begin{tikzpicture}[scale=5]
		\draw[->] (-0.15,0) -- (1,0) node [right] {\small $X_{t}$ (Investors' wealth)};
		\draw[->] (0,-0.15) -- (0,1.25) node[right] {\small $S_{t}$ (Equity value of firm)};
		\draw[thick][-] (0,0) -- (1,1) node[above] {\small $Z_{t}:=\frac{S_t}{X_t}=z^{*}$};
		\draw (0.4,0.3) node [right] {\small no-dividend region};
		\node at (0.4,0.8) {\small dividend paying region};
		\end{tikzpicture}
		\caption{A graphical illustration of the optimal dividend strategy. When the equity value of the firm is too high or investors' private wealth is too low such that $S_t/X_t> z^*$, a discrete amount of dividend is paid out by the firm to bring the ratio $S_t/X_t$ back to $z^*$. No dividend is paid when the state variables lie within the no-dividend region.}
	\label{fig:wedge}
	\end{figure}
	
	In our model, there are two economic motives for the equity investors to invest in the firm. The first motive is brought by the investment prospect of the firm which exists for as long as the excess return of the risky asset $\mu- \rho$ is non-zero. Note that investors are still willing to invest in the firm even if $\mu<\rho$ because it is possible for the firm to short sell the poor performing risky asset for value creation. The second motive lies within the funding advantage of the firm due to its access to corporate debt financing. At minimum, the firm can serve as a ``risky bank account'' with yield $\rho$ and default rate $\lambda$. This funding vehicle is superior to the investors' private saving account provided that the default-risk-adjusted yield $\rho-\lambda$ is larger than the retail saving rate $r$. For the parameter combination in part (1) of Theorem \ref{thm:main}, there is neither investment nor funding motive to invest in the firm and hence the optimal strategy is to liquidate the firm immediately.
	
	This very plausible prediction that a bad firm will be immediately liquidated is a unique feature relative to other models based on the standard Merton investment/consumption framework (such as \cite{lambrecht-myers17}). Under risk aversion, individuals demand a smooth consumption schedule. If dividends are tied with consumption, then risk aversion will force dividends to be smooth as well and hence a bad firm must be run continuously to generate cash flows over time. By distinguishing consumption and dividend via the possibility of depositing investors' private wealth in a retail account, individuals can opt to liquidate a bad firm and consume the proceeds over time optimally according to their own preference.
	
	For the more general parameter combination in part (2) of Theorem \ref{thm:main}, the optimal dividend strategy resembles the optimal investment strategy of a Merton problem with infinite transaction costs as in \cite{hobson-zhu16}. Rigorously speaking, the optimal dividend strategy $\Phi^*$ is a local time policy which keeps $Z_t:=\frac{S_t}{X_t} \leq z^*$, and it can be characterized by the solution to a Skorohod equation with reflecting boundary along $Z_t=z^*$. Dividends are paid when the firm value is too high or investors' private wealth level is too low in order to keep ratio of firm value to private wealth below a threshold. See Figure \ref{fig:wedge}. In particular, the payout trigger target is given by $S_t\leq z^* X_t$ where the right hand side is not a constant [as commonly seen in standard dividend distribution models such as \cite{radner-shepp96}] but instead it increases with the investors' private wealth. As investors become more wealthy, consumption can be adequately supported from their private account and hence a larger fraction of equity can be retained within the firm to further finance its investment activities. It is another unique feature of our model due to the disentanglement of dividend payout and consumption.
	
	\begin{remark}
	In the context of corporate finance, it might be more sensible to impose the constraint $\pi>0$ since there may not exist a realistic way for a firm to dis-invest in a project. This can be incorporated within our model, and Theorem \ref{thm:main} will then remain the same except that the conditions for case (1) and (2) shall be replaced by ``$\mu\leq\rho\leq \lambda+r$'' and ``$\mu>\rho$, or $\mu\leq \rho$ and $\rho>\lambda+r$'' respectively.
	\end{remark}

	Although the characterization of the optimal controls in Theorem \ref{thm:main} is somewhat abstract under the non-degenerate case (2), a lower bound of $z^*$ is available and the monotonicity of $\pi^*$ and $c^*$ with respect to the state variables can be established. Moreover, in the corner case of $\mu=\rho>\lambda+r$ the closed-form expressions of the optimal controls can indeed be derived. The results are summarized in the following two propositions which proofs are given in Appendix \ref{appsect:statevar}. 
	
	\begin{prop}
		
		Suppose $\mu\neq \rho$ and consider the optimal controls defined in case (2) of Theorem \ref{thm:main}:
		
		\begin{enumerate}
			
			\item The critical threshold of dividend payment $z^*$ satisfies  $\frac{(\rho-r-\lambda)^{+}}{\lambda}\leq z^*<\infty$.
			
			\item The optimal investment level $\pi^*(s,x)$ admits an expression $\pi^*(s,x)=\frac{\mu-\rho}{\sigma^2}\theta(s/x)$ where $\theta(z)$ is a bounded, positive and decreasing function on $0<z\leq z^*$ with $\theta(z^*)>1$.
			
			\item The optimal consumption rate per unit private wealth $\bar{c}(s,x):=\frac{c^*(s,x)}{x}$ is a function of $z=\frac{s}{x}$ only which is increasing on $0<z\leq z^*$ and $\bar{c}(z)\downarrow \beta$ as $z\downarrow 0$.
				
		\end{enumerate}
		
		\label{prop:policyrange}
	\end{prop}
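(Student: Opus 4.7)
The plan is to reduce the problem to a single variable via the scaling symmetry of the log-utility setup, then deduce monotonicity and bounds from a reduced first-order ODE in the spirit of \cite{hobson-zhu16}. The log utility and linear state dynamics \eqref{eq:EquityValSDE}, \eqref{eq:PrivateWealthSDE} give $V(\alpha s, \alpha x) = V(s,x) + (1/\beta)\ln\alpha$ for every $\alpha > 0$, so one may write $V(s, x) = (1/\beta)\ln x + h(z)$ with $z = s/x$. Direct substitution of this ansatz into the feedback formulas \eqref{eq:FeedbackControls} yields
\[
\bar{c}(s,x) \;=\; \frac{1}{(1/\beta) - z\, h'(z)}, \qquad \pi^*(s,x) \;=\; \frac{\mu-\rho}{\sigma^2}\,\theta(z), \qquad \theta(z) \;:=\; -\frac{h'(z)}{z\, h''(z)},
\]
so both $\bar{c}$ and $\theta$ depend on $z$ alone. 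This already establishes the one-variable representations in (2) and (3).

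Plugging the ansatz and the feedback controls into the HJB equation of Section \ref{sect:veri} produces a second-order nonlinear ODE for $h$ on $(0, z^*)$. The transformation technique used in \cite{hobson-zhu16} and \cite{hobson-tse-zhu18} recasts this as a first-order ODE for $\theta$ (or the related variable $q = 1/\theta$). From the explicit sign of the drift of the reduced equation one reads off directly that $\theta$ is decreasing, positive, and bounded on $(0, z^*]$. Monotonicity of $\bar{c}$ then follows by combining $\bar{c}(z)^{-1} = (1/\beta) - z h'(z)$ with the identity $(z h'(z))' = h'(z)\bigl(1 - 1/\theta(z)\bigr)$, so that $\bar{c}$ is strictly increasing provided $\theta > 1$.

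For the boundary behaviour: at $z = 0$ the problem degenerates to the pure-consumption problem, giving $V(0,x) = F(x)$ and hence $h(0) = (r/\beta + \ln\beta - 1)/\beta$; once boundedness of $h'$ near the origin is secured from the first-order analysis, $z h'(z) \to 0$ and consequently $\bar{c}(z) \downarrow \beta$. At $z = z^*$ the first-order smooth-pasting condition $V_s = V_x$ translates into $(1 + z^*)\, h'(z^*) = 1/\beta$; combining this with the reduced first-order ODE evaluated at the crossing point delivers $\theta(z^*) > 1$, and by monotonicity $\theta(z) > 1$ throughout $(0, z^*]$, so the monotonicity of $\bar c$ is confirmed.

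Finally, the bounds on $z^*$: finiteness of $z^*$ is an existence-of-crossing statement for the reduced ODE, analogous to the argument in \cite{hobson-zhu16}; absent such a crossing, the candidate value function would fail to dominate the lower bound $F(s+x)$. The lower bound $z^* \geq (\rho - r - \lambda)^+/\lambda$ should come from inserting the boundary data into the HJB identity at $z = z^*$ and rearranging, the economic content being that when $\rho - \lambda > r$ a safe firm (take $\pi \equiv 0$) outperforms the private account on a default-adjusted basis, so paying dividends cannot be optimal until the equity-to-wealth ratio exceeds this threshold. The main obstacle is the Hobson-Zhu transformation step itself: identifying the correct first-order auxiliary variable and verifying globally the sign of its drift, because once this is in hand every qualitative statement about $\theta$, $\bar{c}$, and $z^*$ follows cleanly from direct sign computations on a single first-order ODE.
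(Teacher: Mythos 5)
Your overall framework is the same as the paper's: the scaling ansatz $V(s,x) = \tfrac{1}{\beta}\ln x + g(s/x)$, the Hobson--Zhu style reduction to a first-order system in the variable $q = zg'(z)$, and the fact that $\bar{c}$, $\theta$ become functions of $z$ alone via $\bar{c} = \tfrac{1}{1/\beta - q}$ and $\theta = \tfrac{1}{1 - 1/N'(q)}$. The one-variable representations, the limit $\bar{c}\downarrow\beta$, and the range of $z^*$ (which in the paper falls out of Proposition \ref{prop:theODE} via $q^*\in(\tfrac{1}{\beta}\tfrac{(\rho-r-\lambda)^+}{(\rho-r-\lambda)^++\lambda},\tfrac{1}{\beta})$ and $z^* = \tfrac{\beta q^*}{1-\beta q^*}$) are all in line with the paper.

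The genuine gap is the claim that ``from the explicit sign of the drift of the reduced equation one reads off directly that $\theta$ is decreasing.'' This is the hardest part of the proposition and it is \emph{not} a sign-reading exercise. In the paper's coordinates, $\theta$ decreasing in $z$ is equivalent to $N'(q)$ increasing in $q$, i.e.\ $N''(q)\geq 0$, and a short computation reduces this to showing $\chi'(q)\leq \chi(q)/q$ for $\chi := n-\ell$. The paper proves this in two stages: first an auxiliary Lemma (\ref{lemma:nbound}) establishes the linear bound $\chi(q)\leq\alpha q$ via a second-derivative computation at the singular origin and a downcrossing argument against the line $L(q)=\alpha q$; then the proposition proof argues by contradiction, supposing $\chi'(\bar q) > \chi(\bar q)/\bar q$ at some $\bar q$, constructing an intermediate downcrossing point $\tilde q < \bar q$ of a secondary line, and showing $\hat{O}(\bar q, \bar\alpha \bar q) < \hat{O}(\tilde q, \bar\alpha \tilde q)$ to get a contradiction. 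None of this is visible from the sign of the drift of the ODE for $\theta$ or $q$; you would need to reproduce this comparison argument (or find a genuine alternative) for part (2) to stand.

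Two smaller points. First, $\theta>1$ on all of $(0,z^*]$ is actually immediate once you have $n'\geq 0$: $N'(q) = n'(q)/\beta + \tfrac{1}{1-\beta q} > 1$, so there is no need to establish it at the crossing point and then propagate by monotonicity (indeed, since monotonicity is the missing step, it would be circular to do so). Second, your proposed argument for $\bar c$ increasing via $(zg')' = g'(1-1/\theta)$ does work and is equivalent to the paper's use of the fact that $q=q(z)$ is increasing (Lemma \ref{lemma:zqlink}); both ultimately rest on $n'\geq 0$, so this piece is fine as a restatement of the paper's reasoning rather than a different proof.
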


	\begin{prop}
	
	Suppose $\mu= \rho>\lambda+r$ and consider the optimal controls defined in case (2) of Theorem \ref{thm:main}. Then
	\begin{align*}
	z^*=\frac{\rho-r-\lambda}{\lambda},\qquad \pi^*(s,x)=0,\qquad c^*(s,x)=\frac{x}{1/\beta-q(s/x)}
	\end{align*}
	where $q=q(z)$ is an implicit function defined as the solution to
	\begin{align}
	z=\left(\frac{\rho-r-\lambda}{\lambda}\right)^{\frac{\lambda}{\beta+\lambda}}\left(\frac{\rho-r-\lambda}{\beta(\rho-r)}\right)^{-\frac{\rho-r}{\beta+\lambda}}\left(\frac{\beta q}{1-\beta q}\right)^{\frac{\beta}{\beta+\lambda}}q^{\frac{\rho-r}{\beta+\lambda}}.
	\label{eq:closeform_zq}
	\end{align}
	\label{prop:PolicySpecialCase}
\end{prop}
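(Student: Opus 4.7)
The plan is to exploit the fact that $\mu=\rho$ makes the risky investment redundant and forces $\pi^*=0$, which collapses the HJB equation to a first-order PDE that is solvable in closed form via a homothetic ansatz.

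First, the $\pi$-dependent piece of the HJB operator (as derived for the general case in Section~\ref{sect:heuristics}) is $(\mu-\rho)sV_s\pi+\tfrac{1}{2}\sigma^2 s^2 V_{ss}\pi^2$; under strict concavity of $V$ in $s$ the supremum is attained at $\pi^*=-(\mu-\rho)V_s/(\sigma^2 s V_{ss})$, which vanishes identically when $\mu=\rho$. Feeding $\pi^*=0$ and the log-utility first-order condition $c^*=1/V_x$ into the HJB gives the first-order PDE
\begin{equation*}
-(\beta+\lambda)V-\ln V_x-1+\rho s V_s+r x V_x+\lambda F(x)=0
\end{equation*}
on the no-dividend cone $\{s/x\le z^*\}$, together with $V_s=V_x$ along the free boundary. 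Motivated by the log-utility scaling and the form of $F$, I would try $V(s,x)=\frac{1}{\beta}\ln x+\Psi(z)$ with $z=s/x$. The coefficient of $\ln x$ in the reduced PDE collapses to $1-\tfrac{\beta+\lambda}{\beta}+\tfrac{\lambda}{\beta}=0$, so the ansatz is consistent and the PDE becomes an ODE in $z$ alone.

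Working with the auxiliary function $q(z):=z\Psi'(z)$, so that $V_s=q/s$ and $V_x=(1/\beta-q)/x$, differentiating the reduced equation once in $z$ and simplifying yields the separable ODE
\begin{equation*}
\Bigl[\frac{\beta+\rho-r}{q}+\frac{\beta}{1/\beta-q}\Bigr]dq=(\beta+\lambda)\,\frac{dz}{z},
\end{equation*}
which after partial fractions integrates to an implicit relation of the form appearing in \eqref{eq:closeform_zq}, up to a multiplicative constant. The free boundary $z^*$ and that constant are then fixed by the data at $z^*$. Smooth pasting $V_s(s,x)=V_x(s,x)$ at $z=z^*$ gives $q(z^*)=z^*/[\beta(1+z^*)]$, while the super-contact condition $V_{ss}=V_{sx}$ (which, given $V_s=V_x$, also forces $V_{xx}$ to match) gives $\Psi''(z^*)=-1/[\beta(1+z^*)^2]$, equivalently $q'(z^*)=1/[\beta(1+z^*)^2]$. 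Substituting these two values for $q(z^*)$ and $q'(z^*)$ into the ODE evaluated at $z^*$ cancels the factor $z^*/[\beta(1+z^*)^2]$ and collapses to the single algebraic equation $\lambda(1+z^*)=\rho-r$, giving $z^*=(\rho-r-\lambda)/\lambda$. Plugging this $z^*$ and the corresponding $q(z^*)=(\rho-r-\lambda)/[\beta(\rho-r)]$ back into the integrated ODE pins down the multiplicative constant and reproduces \eqref{eq:closeform_zq} exactly; the formula $c^*(s,x)=x/[1/\beta-q(s/x)]$ then reads off from \eqref{eq:FeedbackControls}.

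The main obstacle is the regularity side rather than the computation: one must verify that \eqref{eq:closeform_zq} does implicitly define a smooth, strictly increasing bijection $q:(0,z^*]\to(0,q(z^*)]$ with $q(0+)=0$ so that $\Psi$, and hence the candidate $V$, is well-defined and $C^2$ on the no-dividend cone, and that this $V$ fits into the general verification framework constructed in Section~\ref{sect:veri}. Once these checks are in place, the optimality of the triple $(\pi^*,z^*,c^*)$ follows directly from Theorem~\ref{thm:main}.
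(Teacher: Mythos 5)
Your derivation is correct and all the key computations check out (the separable ODE $\bigl[\tfrac{\beta+\rho-r}{q}+\tfrac{\beta}{1/\beta-q}\bigr]dq=(\beta+\lambda)\tfrac{dz}{z}$ follows from differentiating \eqref{eq:HJBg} with $\mu=\rho$ and $q=zg'(z)$; the smooth-pasting value $q(z^*)=z^*/[\beta(1+z^*)]$ and the super-contact value $q'(z^*)=1/[\beta(1+z^*)^2]$ do collapse the ODE at $z^*$ to $\lambda(1+z^*)=\rho-r$; and the integration constant fixed by $q(z^*)=q^*$ reproduces \eqref{eq:closeform_zq}). However, your route is genuinely different from the paper's. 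The paper treats this proposition as a two-line corollary of machinery already in place: when $\mu=\rho$ the transformed solution satisfies $n=m$ identically, so $q^*$ is read off as the maximizer of $m$, namely $q^*=\tfrac{\rho-r-\lambda}{\beta(\rho-r)}$ (whence $z^*=\beta q^*/(1-\beta q^*)$), and \eqref{eq:closeform_zq} is obtained by substituting $n'=m'$ into the general $z$--$q$ link of Lemma \ref{lemma:zqlink} and evaluating the resulting elementary integral; $c^*$ then comes from \eqref{eq:consume}. Your approach re-derives the $z$--$q$ relation from scratch by integrating the differentiated first-order HJB and pins down $z^*$ via free-boundary conditions, which is more self-contained but duplicates work the general construction already provides. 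Two small points: differentiating the HJB discards the additive constant in $g$, so one must go back to the undifferentiated equation (or value matching at $z^*$) to recover it before claiming the candidate $V$ solves $\mathcal{L}V=0$; and the bijectivity/monotonicity of $q\mapsto z$ that you flag as ``the main obstacle'' is actually immediate here, either from Lemma \ref{lemma:zqlink} or by inspection of \eqref{eq:closeform_zq}, whose right-hand side is strictly increasing in $q$ on $(0,1/\beta)$ and vanishes as $q\downarrow 0$.
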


Analytical expressions of all the important control variables are available in the case of $\mu=\rho>\lambda+r$ where the investment motive vanishes.\footnote{Note that the case of $\mu=\rho\leq \lambda + r$ has been covered by case (1) of Theorem \ref{thm:main} where we can take $z^*=0$ corresponding to immediate liquidation.} The firm essentially becomes a pure funding vehicle (subject to default risk) without any investment in the risky asset as $\pi^*=0$. The optimal dividend policy is then entirely driven by the funding quality of the firm measured by $\frac{\rho-r}{\lambda}$, which can be interpreted as the corporate yield spread per unit default risk. Higher this ratio, less often the firm pays dividends since it is more efficient to retain the capital within the firm for value creation.

In the more general case of $\mu\neq \rho$, part (1) of Proposition \ref{prop:policyrange} implies that $z^*\uparrow\infty$ as $\lambda\downarrow 0$ provided that $\rho>r$. The economic interpretation of this limiting result is the following: as long as the funding advantage $\rho>r$ exists, it is in general suboptimal to pay out any dividend because a unit of wealth in the investors' private account can only earn the retail rate $r$ whereas a unit of equity within the firm can at least earn a better rate $\rho$ when default of firm is not a concern. Dividends should thus only be paid out after investors' private wealth has been entirely depleted. Once $X$ hits zero, the firm pays out dividend continuously to match the optimal consumption required by the investors. Although we do not explicitly consider $\lambda=0$ in this paper, the optimal strategy of this special case can be characterized rigorously similar to Theorem 10 of \cite{hobson-zhu16}.

The magnitude of the firm's investment level is decreasing in equity value $S_t$ and increasing in investors' wealth $X_t$. To understand this behavior, it is important to note that investors' exposure to the risky asset is determined by two factors: the fraction of wealth invested in the firm $Z_t=S_t/X_t$ and the investment level of the firm $\pi_t$. In general, risk averse investors desire to maintain a target exposure to the risky asset. A benchmark example is that if there is no friction in the economy such that investors do not need to rely on the firm as an intermediary to invest in the risky asset, then they will invest a constant fraction of their wealth in the risky asset given by the Merton ratio. In our model where equity financing cannot be performed freely, the firm thus has to be leveraged more aggressively when equity value declines or investors' wealth increases to provide an efficient risk-return exposure for the investors. Notice that $\theta(z^*)>1$ implies the investment level is always larger than the Merton ratio (in magnitude). 
 
Consumption rate rises as the state variable $Z_t=S_t/X_t$ approaches the critical threshold $z^*$ because investors anticipate a dividend payment is due soon which will boost their wealth level and hence a larger consumption today becomes sustainable. 

The next proposition highlights the key comparative statics of our model. The proof can be found in Appendix \ref{appsect:compstat}.

\begin{prop}
		Suppose we are in the non-degenerate case of either $\mu\neq \rho$ or $\mu=\rho>\lambda+r$. Then the following comparative statics hold:
		\begin{enumerate}
			\item The critical threshold of dividend payment $z^*$ is decreasing in $\lambda$, $r$ and $\sigma$, and is increasing (resp. decreasing) in $\mu$ when $\mu\geq \rho$ (resp. $\mu\leq \rho$).
			\item The optimal investment level $\pi^*(s,x)$ is increasing (resp. decreasing) in $\lambda$ and $r$ when $\mu> \rho$ (resp. $\mu< \rho$).
			\item The optimal consumption level $c^*(s,x)$ is decreasing in $\lambda$ and $r$.

		\end{enumerate}
		\label{prop:compstat}
\end{prop}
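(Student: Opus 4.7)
The plan is to exploit the reduction of the HJB equation to a first-order crossing problem, as developed in Sections \ref{sect:heuristics}--\ref{sect:veri} following the techniques of \cite{hobson-zhu16} and \cite{hobson-tse-zhu18}. Under that reduction, the threshold $z^*$ is characterized as the first $z>0$ at which the solution of a scalar first-order ODE, launched from a parameter-independent initial condition at $z=0$, meets an explicitly given ``target'' curve encoding the value-matching and smooth-pasting condition at the dividend boundary. Both the vector field driving the ODE and the target curve depend on the model parameters $(\lambda,r,\sigma,\mu)$ in elementary algebraic ways, and this is exactly what makes a comparison argument feasible.

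For part (1), I would fix all parameters but one, say $\lambda$, and compare the solutions $\omega_{\lambda_1}$ and $\omega_{\lambda_2}$ of the reduced ODE for $\lambda_1<\lambda_2$. The first step is a direct calculation verifying that the right-hand side of the ODE is monotone in $\lambda$ at every $(z,\omega)$; since the two solutions share the initial condition at $z=0$, a standard ODE comparison lemma then delivers a one-sided ordering of $\omega_{\lambda_1}$ and $\omega_{\lambda_2}$ on their common domain. An independent monotonicity check for the target curve in $\lambda$ (again purely algebraic) forces the first crossing to occur earlier or later in a definite direction, yielding the claimed sign of $\partial z^*/\partial\lambda$. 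The same recipe applies verbatim to $r$ and $\sigma$, and to $\mu$ with the case split $\mu\geq\rho$ versus $\mu\leq\rho$ arising because the coefficient $(\mu-\rho)^2$ is not monotone in $\mu$ across $\rho$.

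For parts (2) and (3), I would invoke the feedback representations of Proposition \ref{prop:policyrange}: $\pi^*(s,x)=\frac{\mu-\rho}{\sigma^2}\theta(s/x)$ and $c^*(s,x)=x\bar{c}(s/x)$. Both $\theta$ and $\bar{c}$ can be expressed as elementary algebraic functions of the ODE unknown $\omega$ and its derivative, so the ordering of $\omega$ with respect to each parameter (established in part (1)) transfers to a pointwise ordering of $\theta$ and $\bar{c}$ on the common sub-domain $[0,\min(z^*_1,z^*_2)]$. This restriction is harmless because the feedback control at a given state $(s,x)$ only reads off $\theta$ or $\bar{c}$ at the single point $z=s/x$. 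For $\pi^*$ the prefactor $(\mu-\rho)/\sigma^2$ explains why the monotonicity in $\lambda$ and $r$ flips sign with $\mu\gtreqless\rho$, whereas for $c^*$ the sign is unambiguous since the multiplicative factor $x$ is positive.

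The main obstacle, in my view, will be bookkeeping in the $\mu$-dependence: since $(\mu-\rho)$ enters both quadratically (in the HJB coefficients, and hence in the ODE) and linearly (in the feedback for $\pi^*$), keeping track of signs when perturbing $\mu$ is the most error-prone step. A secondary concern is to ensure that $z^*$ remains interior to $(0,\infty)$ under parameter perturbations, so that the comparison argument does not degenerate; the a priori bound $z^*\in[(\rho-r-\lambda)^+/\lambda,\infty)$ from Proposition \ref{prop:policyrange}(1) is precisely what certifies non-degeneracy throughout the relevant region of parameter space.
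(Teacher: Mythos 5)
Your overall strategy — compare solutions of a parameterized first-order ODE via a standard comparison lemma and read off the crossing point — is indeed the paper's strategy, but two implementation details are load-bearing and your proposal glosses over both.

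First, you assert that the ODE is ``launched from a parameter-independent initial condition at $z=0$.'' This is false for the natural formulation $n'=O(q,n)$, whose initial value $n(0)=\ell(0)=\frac{\lambda}{\beta+\lambda}\ln\frac{1}{\beta}$ depends on $\lambda$; moreover the crossing target $m(q)$ also depends on $\lambda$ and $r$. The paper's key trick is the substitution $b(q):=\frac{\beta+\lambda}{\beta}\bigl(m(q)-n(q)\bigr)$, which simultaneously makes the initial value $b(0)=\frac{(\mu-\rho)^2}{2\beta\sigma^2}$ independent of $\lambda$ and $r$ \emph{and} collapses the crossing target to the fixed level $b=0$. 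Only after this change of variable does the clean comparison you envision (monotone vector field $P(q,b)$, fixed IC, fixed target) actually apply. Without naming this substitution, your plan as written would force you to track a moving initial condition \emph{and} a moving target, which is precisely the complication the paper's reformulation is designed to eliminate. (For $\mu$ and $\sigma$ the IC $b(0)$ does shift, but the paper observes that the shift and the vector-field shift are co-monotone in $(\mu-\rho)^2/\sigma^2$, so the one-sided conclusion survives.)

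Second, for parts (2) and (3) your claim that the ODE ordering ``transfers to a pointwise ordering of $\theta$ and $\bar{c}$'' hides a change-of-variables subtlety. The ODE solution lives in the auxiliary $q$-coordinate, whereas the feedback controls must be compared at a fixed physical state $z=s/x$. The map $q\mapsto z(q;\lambda)$ from Lemma \ref{lemma:zqlink} depends on the parameters, so an ordering at fixed $q$ is not the same as an ordering at fixed $z$. The paper resolves this by combining three separate facts: $N'(q;\lambda)$ is decreasing in $\lambda$ at fixed $q$; $q(z;\lambda)$ is decreasing in $\lambda$ at fixed $z$; and $N'$ is increasing in $q$ (the last coming from the proof of Proposition \ref{prop:policyrange}(2)). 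The total derivative $\frac{d}{d\lambda}N'(q(z;\lambda);\lambda)$ then has the required sign. Your proposal does not acknowledge that this chain-rule step and the extra monotonicity of $N'$ in $q$ are needed, which leaves the transfer from $q$-ordering to $z$-ordering unjustified.
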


\begin{figure}[!htbp]
	\captionsetup[subfigure]{width=0.5\textwidth}
	\centering
	\subcaptionbox{Dividend payment boundary.}{\includegraphics[scale =0.515] {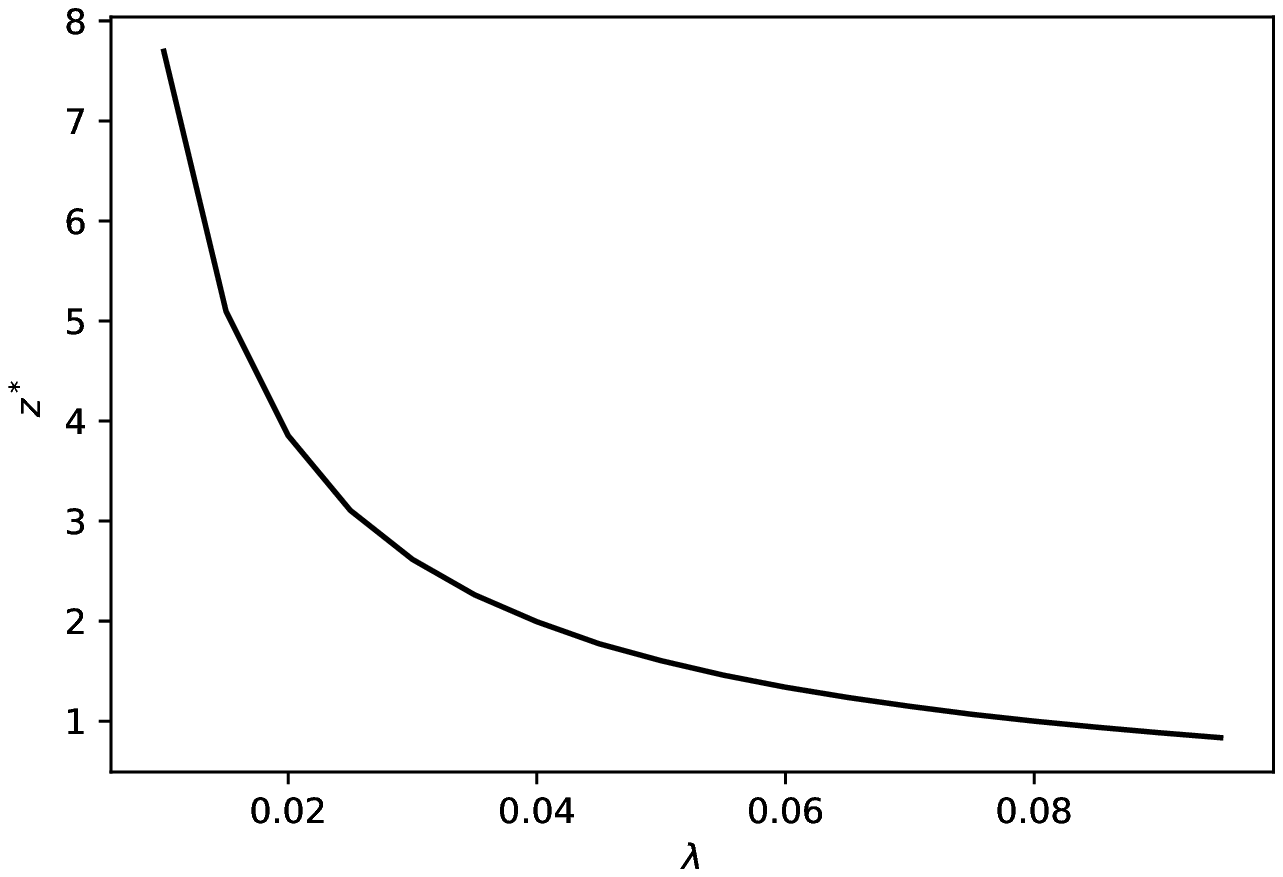}}
	\subcaptionbox{Optimal investment level.}{\includegraphics[scale =0.515]{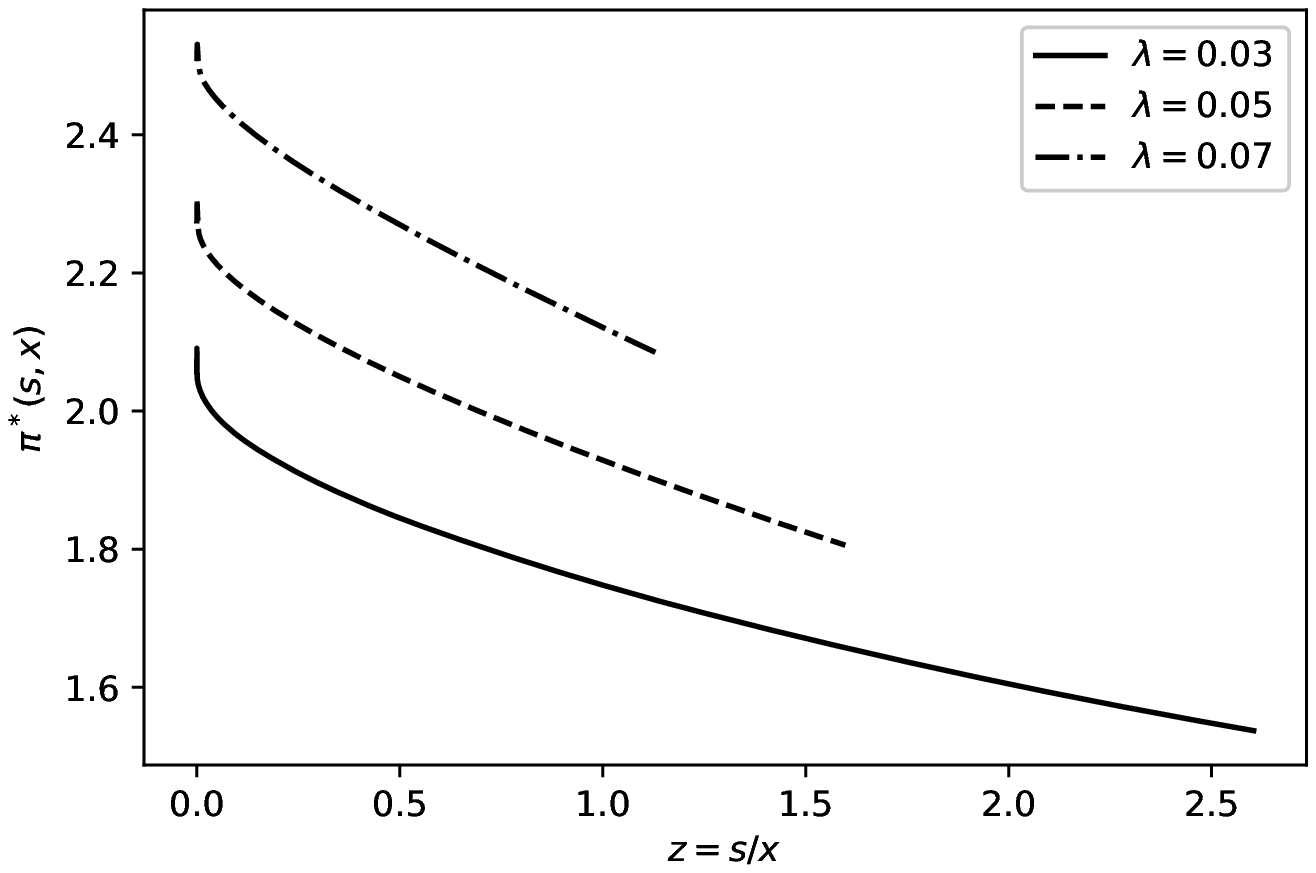}}
	\subcaptionbox{Optimal consumption.}{\includegraphics[scale =0.515]{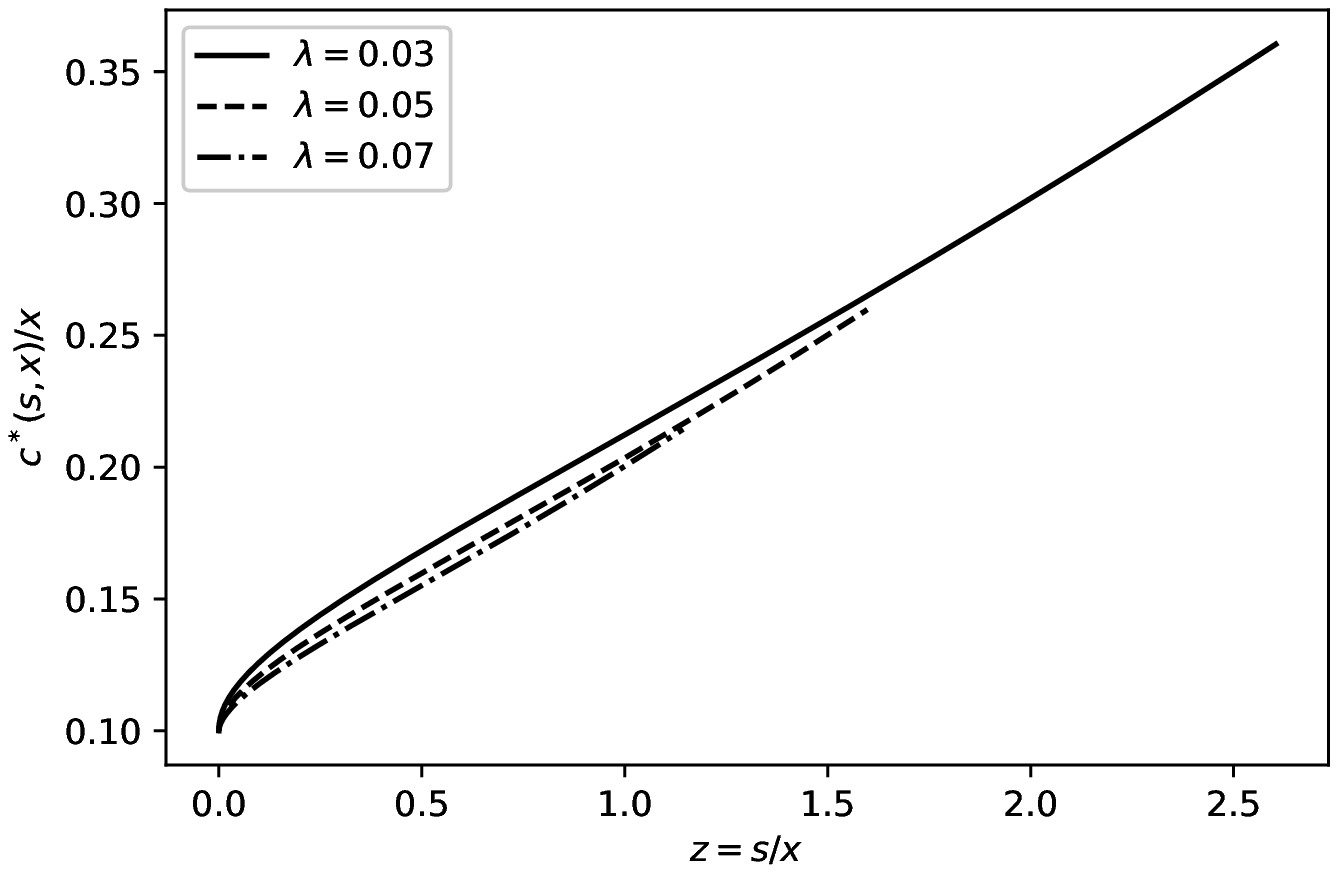}}
	
	\caption{Comparative statics of the optimal controls with respect to $\lambda$. Base parameters used are: $\mu=0.2$, $\sigma=0.3$, $\rho=0.1$, $r=0.02$ and $\beta=0.1$.}
	\label{fig:compstat}
\end{figure}

We focus on the comparative statics results with respect to $\lambda$ (see Figure \ref{fig:compstat} for some numerical illustrations). Firstly, with a higher default risk $\lambda$ the firm pays out dividends more aggressively as reflected by a lower critical threshold $z^*$. This can be understood as a phenomenon of moral hazard. When managers foresee that their firm is more likely to fail, there is a stronger precautionary incentive to transfer value within the firm to the investors' in form of dividends because the wealth in the private pocket of investors is left untouched due to limited liability while equity within the firm is seized or wiped out in the event of default.  

Among all the results in Proposition \ref{prop:compstat}, perhaps the most surprising one is that the magnitude of $\pi^*$ is increasing in $\lambda$. This may seem counter-intuitive in view of the risk averse nature of equity investors as the result here suggests the (absolute) investment level is higher for a riskier firm (in terms of default risk). Nonetheless, this result should be understood in conjunction with the impact of $\lambda$ on $z^*$. A higher $\lambda$ leads to a lower dividend threshold $z^*$ and as such a smaller pool of equity (relative to investors' wealth) is retained in the firm on average. To offset this effect of under-investment, the firm has be leveraged more aggressively to restore the overall exposure to the risky asset.

Finally, the effect of $\lambda$ on the optimal consumption is intuitive where a higher default risk of the firm encourages precautionary saving of the investors by reducing consumption.

The above results could be of interests to the area of corporate finance because they suggest aggressive payout policy and capital structure could potentially be driven by a high level of default risk where equity holders exploit the limited liability structure of the firm to retain value for themselves, thus a symptom of moral hazard. An important caveat behind the above interpretations is that we assume both $\mu$ and $\rho$ are fixed constants independent of $\lambda$. Implicitly, we assume asymmetric information between equity holders and bondholders. Only the former know the firm is subject to a Poisson default shock of intensity $\lambda$ while the latter do not take this default possibility into account when setting the debt yield. On the other hand, the default risk is assumed to be an idiosyncratic one which does not improve the risky asset return $\mu$. In a bank lending model of \cite{lambrecht-tse18}, both $\rho$ and $\mu$ are linked to $\lambda$ depending on how an insolvent bank is resolved during an economic downturn. In general, we could incorporate extensions of this kind by replacing $\rho$ and $\mu$ by some functions of $\lambda$. This will not significantly change the mathematical analysis of the stochastic control problem. However, the comparative statics with respect to $\lambda$ will now depend on the precise constructions of $\rho(\lambda)$ and $\mu(\lambda)$.
	
	\section{A heuristic derivation of the solution}
	\label{sect:heuristics}
	
	In this section we use heuristics to derive an equation that the value function should satisfy. Its optimality is then verified rigorously in Section \ref{sect:veri}.
	
	It is convenient to reformulate \eqref{eq:OptiProb} as an infinite horizon control problem. Note that
	\begin{align*}
	\mathbb{E}\left[\int_0^{\tau}e^{-\beta t}\ln c_t dt+e^{-\beta \tau} F(X_{\tau})\right] 
	&=\mathbb{E}\left[\int_0^{\infty}1_{(\tau>t)}e^{-\beta t}\ln c_t dt\right]+\mathbb{E}\left[e^{-\beta \tau} F(X_{\tau})\right] \nonumber\\
	&=\mathbb{E}\left[\int_0^{\infty}e^{-(\beta+\lambda) t}\ln c_t dt\right]+\int_0^\infty \lambda e^{-\lambda t}\mathbb{E}\left[e^{-\beta t} F(X_{t})\right]dt \nonumber\\
	&=\mathbb{E}\left[\int_{0}^\infty e^{-(\beta+\lambda)t}\left(\ln c_t+\lambda F(X_t)\right)dt\right]
	\end{align*}
	where we have used the fact that $\tau$ is an independent exponential random variable. Hence problem \eqref{eq:OptiProb} is equivalent to
	\begin{align}
	V(s,x)=\sup_{(c,\pi,\Phi)\in\mathcal{A}(s,x)}\mathbb{E}\left[\int_{0}^\infty e^{-(\beta+\lambda)t}\left(\ln c_t+\lambda F(X_t)\right)dt\Biggr | S_{0-} = s, X_{0-} = x\right]
	\label{eq:OptProbInfinite}
	\end{align}
	which is an infinite horizon problem with discount rate $\beta+\lambda$ and running reward $\ln c_t+\lambda F(X_t)$.
	
	Write
	\begin{align*}
	M_t:=\int_{0}^t e^{-(\beta+\lambda)u}[\ln c_u+\lambda F(X_u)]du+e^{-(\beta+\lambda)t}V(S_t,X_t). 
	\end{align*}
	Then we expect $M$ is a supermartingale in general, and is a true martingale under the optimal strategy. Suppose $V$ is a $C^{2\times 1}$ function, applying Ito's lemma we find
	\begin{align*}
	e^{(\beta+\lambda)t}dM_t&=\left\{\ln c_t-V_x c_t+r V_x X_t+\rho V_s S_t+(\mu-\rho)V_s S_t\pi_t+\frac{\sigma^2}{2}V_{ss} S_t^2\pi_t^2-(\beta+\lambda)V_t+\lambda F(X_t)\right\}dt\\
	&\qquad+(V_x-V_s)d\Phi_t+\sigma\pi_t V_s S_t dB_t.
	\end{align*}
	Further assume $V_x>0$ and $V_{ss}<0$, the drift term can be maximized with respect to $c$ and $\pi$. Then we expect the value function to solve the HJB variational inequality
	\begin{align}
	\min\left(-\mathcal{L}V, -\mathcal{M}V\right)=0
	\label{eq:HJBIneq}
	\end{align}
	where the operators $\mathcal{L}$ and $\mathcal{M}$ are defined as
	\begin{align}
	\mathcal{L}f&:=\sup_{c>0,\pi}\left\{\ln c-f_x c+r f_x x+\rho f_s s+(\mu-\rho)f_s s\pi+\frac{\sigma^2}{2}f_{ss} s^2\pi^2-(\beta+\lambda)f+\lambda F(x)\right\}\nonumber \\
	&=-\ln f_x-1+r f_x x+\rho f_s s-\frac{(\mu-\rho)^2 [f_s]^2}{2\sigma^2  f_{ss}}-(\beta+\lambda)f+\lambda F(x), \label{eq:LOperator} \\
	\mathcal{M}f&:= f_x-f_s.\label{eq:MOperator}
	\end{align}
	
	Inspired by \cite{magill-constantinides76}, we conjecture the optimal strategy is to pay dividends only when the ratio $Z_t:=\frac{S_t}{X_t}$ is above a certain threshold $z^*$ (refer to Figure \ref{fig:wedge} again). Following \cite{davis-norman90}, we postulate the value function has the form
	\begin{align}
	V(s,x)=\frac{1}{\beta}\ln x+g\left(\frac{s}{x}\right)
	\end{align}
	for some function $g$ to be determined. Write $z:=\frac{s}{x}$. Then over $z \leq z^*$ no dividend is paid and we expect $\mathcal{L}V=0$ which becomes
	\begin{align}
	-\ln \left(\frac{1}{\beta}-zg'(z)\right)+(\rho-r) zg'(z)-\frac{(\mu-\rho)^2[g'(z)]^2}{2\sigma^2 g''(z)} -(\beta+\lambda) g(z)+ \frac{\lambda}{\beta}\left(\frac{r}{\beta}+\ln\beta-1\right)+\frac{r}{\beta}-1=0.
	\label{eq:HJBg}
	\end{align}
	
	The system can be further simplified by a series of transformation used in \cite{hobson-zhu16} and \cite{hobson-tse-zhu18}. Write $u:=\ln z$ and let $h(u)=h(\ln z):= g(z)-\frac{1}{\beta+\lambda}\left[\frac{\lambda}{\beta}\left(\frac{r}{\beta}+\ln\beta-1\right)+\frac{r}{\beta}-1\right]$. Then
	\begin{align*}
	g'(z)&=\frac{d}{dz}g(z)=\frac{d}{dz}h(u)=\frac{d}{du}h(u)\frac{du}{dz}=\frac{h'(u)}{z}, \\
	g''(z)&=\frac{d}{dz}\frac{h'(u)}{z}=\frac{1}{z}\frac{d}{dz}h'(u)-\frac{h'(u)}{z^2}=\frac{h''(u)}{z^2}-\frac{h'(u)}{z^2}.
	\end{align*}
	\eqref{eq:HJBg} can then be reduced to
	\begin{align}
	-\ln \left(\frac{1}{\beta}-h'(u)\right)+(\rho-r) h'(u)-\frac{(\mu-\rho)^2[h'(u)]^2}{2\sigma^2 (h''(u)-h'(u))} -(\beta+\lambda) h(u)=0.
	\label{eq:HJBh}
	\end{align}
	Set $w(h):= \frac{dh}{du}$ such that $h''(u)=\frac{d}{du}w(h)=w'(h)h'(u)=w'(h)w(h)$. \eqref{eq:HJBh} then becomes
	\begin{align}
	-\ln \left(\frac{1}{\beta}-w(h)\right)+(\rho-r) w(h)-\frac{(\mu-\rho)^2 w(h)}{2\sigma^2 (w'(h)-1)} -(\beta+\lambda) h=0.
	\label{eq:HJBw}
	\end{align}
	Let $N$ be the inverse function of $w$, i.e. $N:= w^{-1}$, and write $q:=w(h)$. \eqref{eq:HJBw} can then be written as
	\begin{align}
	-\ln \left(\frac{1}{\beta}-q\right)+(\rho-r) q-\frac{(\mu-\rho)^2 q}{2\sigma^2 (1/N'(q)-1)} -(\beta+\lambda) N(q)=0.
	\label{eq:HJBN}
	\end{align}
	
	Suppose for now $\mu\neq\rho$. The special case of $\mu=\rho$ is discussed at the end of this section. Set $n(q):= \ln \left(\frac{1}{\beta}-q\right)+\beta N(q)$. After some algebra of substituting $N$ and $N'$ away by $n$ and $n'$ in \eqref{eq:HJBN}, we can obtain a first order ODE $n'(q)=O(q,n(q))$ where
	\begin{align}
	O(q,n):= \frac{\beta^2 q}{1-\beta q}\frac{m(q)-n}{n-\ell(q)}
	\label{eq:formO}
	\end{align}
	and
	\begin{align*}
	m(q)&:= \frac{\beta}{\beta+\lambda}\left\{(\rho-r) q+\frac{\lambda}{\beta}\ln\left(\frac{1}{\beta}-q\right)+\frac{(\mu-\rho)^2}{2\sigma^2\beta}\right\}, \\
	\ell(q)&:= \frac{\beta}{\beta+\lambda}\left\{\left[\rho-r+\frac{(\mu-\rho)^2}{2\sigma^2} \right]q+\frac{\lambda}{\beta}\ln\left(\frac{1}{\beta}-q\right)\right\}\\
	&=m(q)-\frac{\beta}{\beta+\lambda}\frac{(\mu-\rho)^2}{2\sigma^2}\left(\frac{1}{\beta}-q\right).
	\end{align*}
	
	We now derive the form of the value function on the dividend paying regime $\frac{s}{x}=z>z^*$. Under the conjectured strategy, a lump sum dividend $D$ is paid out by the firm to restore the equity value to investors' private wealth ratio back to $z^*$. $D$ should then solve $\frac{s-D}{x+D}=z^*$ which gives $D=\frac{s-z^*x}{1+z^*}$. The value function does not change on this corporate action and thus $V(s-D,x+D)=V(s,x)$ which is equivalent to
	\begin{align*}
	\frac{1}{\beta}\ln (x+D)+g(z^*)=\frac{1}{\beta}\ln x+g(z).
	\end{align*}
	From this we obtain
	\begin{align}
	g(z)&=\frac{1}{\beta}\ln\left(1+\frac{D}{x}\right)+g(z^*) 
	=\frac{1}{\beta}\ln\left(\frac{1+z}{1+z^*}\right)+g(z^*)  
	=\frac{1}{\beta}\ln(1+z)+A^*
	\label{eq:gfun_divregime}
	\end{align}
	on $z>z^*$ where $A^*$ is some constant.
	
	Now we apply the same set of transformations to \eqref{eq:gfun_divregime}. We have
	\begin{align*}
	h(u)&=g(e^u)-\frac{1}{\beta+\lambda}\left[\frac{\lambda}{\beta}\left(\frac{r}{\beta}+\ln\beta-1\right)+\frac{r}{\beta}-1\right] \\
	&=\frac{1}{\beta}\ln(1+e^u)+A^*-\frac{1}{\beta+\lambda}\left[\frac{\lambda}{\beta}\left(\frac{r}{\beta}+\ln\beta-1\right)+\frac{r}{\beta}-1\right]=:\frac{1}{\beta}\ln(1+e^u)+\bar{A}.
	\end{align*}
	Then
	\begin{align*}
	w(h)=h'(u)=\frac{1}{\beta}\frac{e^u}{1+e^u}=\frac{1}{\beta}\frac{e^{\beta (h-\bar{A})}-1}{e^{\beta( h-\bar{A})}}=\frac{1}{\beta}(1-e^{\beta(\bar{A}- h)})
	\end{align*}
	and the inverse function of $w$ is found to be
	\begin{align*}
	N(q)=w^{-1}(q)=\bar{A}-\frac{1}{\beta}\left(\ln \beta+\ln\left(\frac{1}{\beta}-q\right)\right).
	\end{align*}
	Finally,
	\begin{align}
	n(q)= \ln \left(\frac{1}{\beta}-q\right)+\beta N(q)=\beta\bar{A}-\ln\beta 
	&=\beta A^* -\left(\frac{r}{\beta}+\ln\beta -1\right)+\frac{\lambda}{\lambda+\beta}\ln\frac{1}{\beta} \nonumber \\
	&=\beta A^* -\left(\frac{r}{\beta}+\ln\beta -1\right)+\ell(0)
	\label{eq:nfun_div}
	\end{align}
	which is a constant. The above relationships hold as long as $z> z^*$, on which 
	\begin{align*}
	q= w(h)=\frac{1}{\beta}\frac{e^{u}}{1+e^{u}}=\frac{1}{\beta}\frac{z}{1+z}.
	\end{align*}
	Hence the equivalent range in the $q$-coordinate is $q>q^*:= \frac{1}{\beta}\frac{z^*}{1+z^*}$. 
	
	We expect the transformed value function $n$ to solve $n'=O(q,n)$ on $q\leq q^*$ and to be a constant on $q>q^*$. To solve such a free boundary value problem, we further require an initial value associated with the system. Along the boundary $s=0$ the equity value of the firm is zero and hence cannot invest in the risky asset nor the bond (if a firm with zero net equity attempts to borrow to invest or short sell the asset to support purchase of the bond, the Brownian nature of the asset price will make it impossible for the firm to maintain non-negative equity value). Then essentially the firm ceases to exist and the only feasible strategy is for the investors to consume their existing private wealth optimally. Hence $V(0,x)=F(x)=\frac{1}{\beta}\ln x+\frac{1}{\beta}\left[\frac{r}{\beta}+\ln\beta-1\right]$. This boundary condition translates into $g(0)=\frac{1}{\beta}\left[\frac{r}{\beta}+\ln\beta-1\right]$, $h(-\infty)=g(0)-\frac{1}{\beta+\lambda}\left[\frac{\lambda}{\beta}\left(\frac{r}{\beta}+\ln\beta-1\right)+\frac{r}{\beta}-1\right]=\frac{\ln\beta}{\beta+\lambda}$ and we expect $h'(-\infty)=0$. Then $w\left(\frac{\ln\beta}{\beta+\lambda}\right)=w(h(-\infty))=h'(-\infty)=0$, $N(0)=\frac{\ln\beta}{\beta+\lambda}$. and finally $n(0)=\ln\frac{1}{\beta}+\frac{\beta}{\beta+\lambda}\ln \beta =\frac{\lambda}{\beta+\lambda} \ln\frac{1}{\beta}=\ell(0)$.
	
	In summary, we are looking for a solution $n$ with initial value $n(0)=\ell(0)$ which solves $n'=O(q,n)$ on $0\leq q \leq q^*$ and $n(q)$ being a constant on $q>q^*$, where $q^*$ is an unknown boundary to be identified. The conjectured second order smoothness of the original value function $V$ now translates into the first order smoothness of $n$ and as such we expect $n'(q^*)=0$. But the form of \eqref{eq:formO} suggests that, away from $q=0$, $n'=O(q,n)=0$ if and only if $n=m$. Hence the boundary point $q^*$ must be given by the $q$-coordinate where the solution $n$ intersects the function $m(q)$.
	
	The next proposition confirms that a solution $n$ and the corresponding free boundary point $q^*$ indeed exist.
	\begin{prop}
		Suppose $\mu\neq \rho$. Consider an initial value problem
		\begin{align*}
		n'(q)=O(q,n(q)):= \frac{\beta^2 q}{1-\beta q}\frac{m(q)-n(q)}{n(q)-\ell(q)},\qquad n(0)=\ell(0) \text{ and }n'(0)>\ell'(0)
		\end{align*}
		where
		\begin{align*}
		m(q)&:= \frac{\beta}{\beta+\lambda}\left\{(\rho-r) q+\frac{\lambda}{\beta}\ln\left(\frac{1}{\beta}-q\right)+\frac{(\mu-\rho)^2}{2\sigma^2\beta}\right\}, \\
		\ell(q)&:= \frac{\beta}{\beta+\lambda}\left\{\left[\rho-r+\frac{(\mu-\rho)^2}{2\sigma^2} \right]q+\frac{\lambda}{\beta}\ln\left(\frac{1}{\beta}-q\right)\right\}.
		\end{align*}
		
		A unique solution to the above problem exists at least up to $0\leq q \leq q^*$ with $n'(0)>0$ and $q^*:=\inf\{q> 0: n(q)\geq m(q)\}\in(\frac{1}{\beta}\frac{(\rho-r-\lambda)^{+}}{(\rho-r-\lambda)^{+}+\lambda},\frac{1}{\beta})$. The solution $n$ is strictly increasing and lies between $\ell(q)$ and $m(q)$ on $(0,q^*)$.
		
		\label{prop:theODE}
	\end{prop}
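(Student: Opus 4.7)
The plan is to attack the problem in three stages: (i) construct a local solution near the singular initial point $q=0$; (ii) extend the solution and verify that it remains strictly between $\ell$ and $m$; and (iii) establish the two-sided bounds on the first crossing point $q^*$.

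The main obstacle is stage (i), because at $q=0$ the right-hand side $O(q,n)$ is of indeterminate $0/0$ form: both $\beta^2 q$ and $n-\ell$ vanish there. To resolve this I would introduce the auxiliary variable $r(q):=(n(q)-\ell(q))/q$. Writing $n=\ell+qr$ and substituting into $n'=O(q,n)$ yields, after simplification,
\begin{align*}
qr'(q)=\frac{\beta^2\bigl(m(q)-\ell(q)\bigr)}{(1-\beta q)r(q)}-\frac{\beta^2 q}{1-\beta q}-\ell'(q)-r(q).
\end{align*}
Requiring a finite limit $r_0:=r(0)$ by sending $q\to 0$ on the right-hand side forces the quadratic relation
\begin{align*}
r_0^2+\ell'(0)r_0-\beta^2\bigl(m(0)-\ell(0)\bigr)=0.
\end{align*}
Because $m(0)-\ell(0)=\frac{(\mu-\rho)^2}{2\sigma^2(\beta+\lambda)}>0$ whenever $\mu\neq\rho$, this quadratic has exactly one positive root $r_0^+$, and only this root is consistent with the stipulation $n'(0)-\ell'(0)=r_0>0$. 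Moreover $r_0^+>-\ell'(0)/2$, so a direct computation shows that the partial derivative with respect to $r$ of the right-hand side above, evaluated at $(q,r)=(0,r_0^+)$, equals $-(2r_0^++\ell'(0))/r_0^+<0$. This places the problem in the classical Briot--Bouquet ``stable node'' setting, and a standard contraction / Picard iteration on an interval $[0,\delta]$ delivers a unique $C^1$ solution $r$ with $r(0)=r_0^+$, hence a unique local solution $n$ with $n'(0)=r_0^++\ell'(0)$; positivity of $n'(0)$ follows again from $r_0^+>-\ell'(0)/2$.

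Once a local solution is in hand, classical ODE theory extends $n$ as long as the denominator $n-\ell$ stays strictly positive and $q<1/\beta$. The strip $\ell<n<m$ is invariant: on it the factors $\frac{\beta^2 q}{1-\beta q}$, $m-n$ and $n-\ell$ are all positive, so $O(q,n)>0$ and $n$ is strictly increasing; furthermore $n$ cannot touch $\ell$ from above, since near such a contact point $O(q,n)\to+\infty$, which would force $n'\gg \ell'$ and prevent the approach. Therefore the solution persists and remains strictly increasing with $\ell(q)<n(q)<m(q)$ on $(0,q^*)$, where $q^*:=\inf\{q>0:n(q)\geq m(q)\}$.

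For stage (iii), the upper bound $q^*<1/\beta$ is immediate: $m(q)\to-\infty$ as $q\uparrow 1/\beta$ thanks to the $\ln(1/\beta-q)$ term, while $n$ is increasing and hence bounded below by $n(0)=\ell(0)$, so $n$ must meet $m$ strictly before $q=1/\beta$. For the lower bound, a direct computation gives $m'(q)=\frac{\beta}{\beta+\lambda}\bigl((\rho-r)-\frac{\lambda}{1-\beta q}\bigr)$; when $\rho-r\leq\lambda$ one has $m'<0$ throughout and the claimed bound reduces to the trivial inequality $q^*\geq 0$, while if $\rho-r>\lambda$ then $m'>0$ exactly on $(0,q_\dagger)$ with $q_\dagger=\frac{\rho-r-\lambda}{\beta(\rho-r)}=\frac{(\rho-r-\lambda)^+}{\beta[(\rho-r-\lambda)^++\lambda]}$. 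At any first crossing $q_0$ of $n$ with $m$ the numerator of $O$ vanishes, so $n'(q_0)=0$; but $n\leq m$ on $[0,q_0]$ requires $(n-m)'(q_0)\geq 0$, forcing $m'(q_0)\leq 0$. This excludes $q_0\leq q_\dagger$ and delivers the claimed strict lower bound on $q^*$.
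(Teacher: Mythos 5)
Your proposal is correct and follows essentially the same route as the paper: resolve the $0/0$ singularity at $q=0$ by dividing $n-\ell$ by $q$, pin down the initial slope through a quadratic, extend the solution via invariance of the band $\ell<n<m$, conclude a crossing with $m$ must occur before $1/\beta$, and extract the lower bound on $q^*$ from the requirement $m'(q^*)\le 0$. The one substantive difference is cosmetic: where you appeal to Briot--Bouquet theory for the singular initial value problem (noting the negative eigenvalue $-(2r_0^++\ell'(0))/r_0^+$ and hence a stable node), the paper instead solves the constant-coefficient model $\chi'=A\frac{q}{\chi}+B$ explicitly by the substitution $y=\chi/q$ and partial fractions, obtaining $\chi=\alpha q$ in the model and then extending by a perturbation argument. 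Both deliver existence and uniqueness of the local solution with the prescribed slope; your version is more easily citable, theirs more self-contained.

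One small slip worth fixing: you justify $n'(0)>0$ by $r_0^+>-\ell'(0)/2$, but $n'(0)=\ell'(0)+r_0^+$, so that bound only gives $n'(0)>\ell'(0)/2$, which is inconclusive when $\ell'(0)<0$ (a case that can occur, e.g.\ if $\rho-r<\lambda$). The correct and simpler observation is that the quadratic relation itself reads $r_0^+\,n'(0)=\beta^2(m(0)-\ell(0))>0$, and since $r_0^+>0$ this forces $n'(0)>0$; this is exactly the L'Hopital identity $n'(0)\bigl(n'(0)-\ell'(0)\bigr)=\beta^2\bigl(m(0)-\ell(0)\bigr)$ the paper uses. (A further minor remark, applying equally to the paper: the first-crossing argument yields $m'(q^*)\le 0$, hence $q^*\ge q_\dagger$ rather than the strict $q^*>q_\dagger$ asserted; strictness needs a second-order comparison at $q_\dagger$ using $m''(q_\dagger)<0$ and the vanishing of $n''$ at a hypothetical tangency. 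Since the paper itself glosses over this, I would not count it against your proposal.)
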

	
	\begin{proof}
		
		We first show that a unique solution exists in the neighborhood of $q=0$. Let $\chi(q):=n(q)-\ell(q)$. Then the initial value problem is equivalent to
		\begin{align}
		\chi'(q)&=n'(q)-\ell'(q)=O(q,\chi(q)+\ell(q))-\ell'(q) \nonumber \\
		&=\frac{\beta^2}{1-\beta q}\left[\frac{\beta}{\beta+\lambda}\frac{(\mu-\rho)^2}{2\sigma^2}\left(\frac{1}{\beta}-q\right)-\chi(q)\right]\frac{q}{\chi(q)}-\frac{\beta}{\beta+\lambda}\left(\rho-r+\frac{(\mu-\rho)^2}{2\sigma^2}-\frac{\lambda}{1-\beta q}\right) \nonumber \\
		&=:A(q,\chi(q))\frac{q}{\chi(q)}+B(q):=:\hat{O}(q,\chi(q))
		\label{eq:chiODE}
		\end{align}
		subject to $\chi(0)=0$ and $\chi'(0)>0$.
		
		Consider first a simpler problem in form of
		\begin{align*}
		\chi'(q)=A\frac{q}{\chi(q)}+B,\qquad \chi(0)=0 \text{ and }\chi'(0)>0
		\end{align*}
		where $A,B$ are constants with $A>0$. Making use of the substitution $y=\frac{\chi}{q}$, we can obtain an ODE in terms of $y=y(q)$ as $y+q\frac{dy}{dq}=\frac{A}{y}+B$ and in turn
		\begin{align}
		\frac{y}{A+By-y^2}dy=\frac{1}{q}dq.
		\label{eq:ODEFraction}
		\end{align}
		Since $A>0$, the denominator on the left hand side of \eqref{eq:ODEFraction} admits an expression of $(\alpha-y)(\gamma+y)$ for some $\alpha,\gamma > 0$. \eqref{eq:ODEFraction} can then be solved via partial fraction which leads to the solution of $|q|^{\alpha+\gamma}|\alpha-y|^{\alpha}|\gamma+y|^{\gamma}=C$ for some arbitrary constant $C$, or equivalently
		\begin{align*}
		|\alpha q-\chi|^{\alpha}|\gamma q+\chi|^{\gamma}=C.
		\end{align*}
		The initial condition $\chi(0)=0$ forces $C=0$. Hence the solution is either $\chi(q)=\alpha q$ or $\chi(q)=-\gamma q$ where the former satisfies the initial condition $\chi'(0)>0$.
		
		Return to the original problem \eqref{eq:chiODE}, as $A(0,0)=\frac{\beta^2}{\beta+\lambda}\frac{(\mu-\rho)^2}{2\sigma^2}>0$ a small extension to the above argument shows that a unique solution to problem \eqref{eq:chiODE} exists satisfying $\lim_{q\downarrow 0}\frac{\chi(q)}{q}=\alpha$ where $\alpha$ is the positive root to the quadratic equation
		\begin{align}
		A(0,0)+B(0)y-y^2=0.
		\label{eq:QuadEq}
		\end{align}
		The conclusion that $n'(0)>0$ is now clear since an application of L'Hopital's rule to $n'(q)=O(q,n(q))$ around $q=0$ gives
		\begin{align*}
		n'(0)=\frac{\beta^2[m(0)-\ell(0)]}{n'(0)-\ell'(0)}=\frac{\beta^2}{\beta+\lambda}\frac{(\mu-\rho)^2}{2\sigma^2}\frac{1}{n'(0)-\ell'(0)}
		\end{align*}
		and thus $n'(0)>\ell'(0)$ implies $n'(0)>0$.
		
		Away from the singular initial point $(0,0)$, the existence and uniqueness of the solution to the ODE are guaranteed by standard theories. Since $n'(0)>\ell'(0)$ and $n(0)=\ell(0)<m(0)$, $n$ is initially lying between $\ell(q)$ and $m(q)$. It is trivial from the form of $O(q,n)$ that the solution $n$ cannot cross $\ell$, and that the solution is increasing for as long as $q<\frac{1}{\beta}$ and $n(q)$ stays between $m(q)$ and $\ell(q)$. As $m(q)\to -\infty$ when $q\to \frac{1}{\beta}$, $n$ must cross $m$ somewhere on $q<\frac{1}{\beta}$ which guarantees the existence of $q^*:=\inf\{q> 0: n(q)\geq m(q)\}$. Moreover, as $n(q)<m(q)$ on $q<q^*$ and the derivative of $n$ has to be zero when $n$ crosses $m$, we must have $m'(q^*)\leq 0$. A simple calculus exercise shows that $m(q)$ has an inverted U-shape when $\rho-r-\lambda>0$ with its maximum attained at $q=\frac{\rho-r-\lambda}{\beta(\rho-r)}$ and thus $q^*\geq \frac{\rho-r-\lambda}{\beta(\rho-r)}$. Otherwise if $\rho-r-\lambda\leq 0$ then $m(q)$ is decreasing for all $0\leq q<1/\beta$ and in this case we can only conclude $q^*>0$. Combining these two cases leads to $q^*\in(\frac{1}{\beta}\frac{(\rho-r-\lambda)^{+}}{(\rho-r-\lambda)^{+}+\lambda},\frac{1}{\beta})$.
		
	\end{proof}
	
	\begin{remark}
	Note that we have imposed an additional constraint of $n'(0)>\ell'(0)$ in Proposition \ref{prop:theODE}. If we instead pick the solution with $n'(0)<\ell'(0)$, then $n(q)$ will be initially below $\ell(q)$. The form of $O$ suggests that $n$ is decreasing and does not cross $\ell$, and in turn $m$, for all $q$. Then there does not exist a boundary point $q^*$ at which smooth pasting holds. The resulting $n$ therefore is not a sensible candidate solution.
	\end{remark}
	
	Finally, we consider the special case of $\mu=\rho$. We can indeed obtain an explicit solution for $n$ where \eqref{eq:HJBN} gives a closed-form expression of $N$ as
	\begin{align*}
	N(q)=\frac{1}{\beta+\lambda}\left[(\rho-r)q-\ln\left(\frac{1}{\beta}-q\right)\right]
	\end{align*}
	and thus
	\begin{align*}
	n(q)=\ln\left(\frac{1}{\beta}-q\right)+\beta N(q)=\frac{\beta}{\beta+\lambda}\left[(\rho-r)q+\frac{\lambda}{\beta}\ln\left(\frac{1}{\beta}-q\right)\right]=m(q)=\ell(q).
	\end{align*}
	This result should not be surprising. From Proposition \ref{prop:theODE}, $\ell(q)<n(q)<m(q)$ on $0<q<q^*$ and $m(q)-\ell(q)\to 0$ as $\mu\to\rho$, we must have $n(q)\to m(q)=\ell(q)$ when $\mu$ approaches $\rho$.
	
	What should be the correct value of $q^*$ when $\mu=\rho$? Again, we expect $n$ is a constant on $q>q^*$ and first order smoothness suggests $q^*$ should satisfy $n'(q^*)=m'(q^*)=0$. Thus $q^*$ should be the $q$-coordinate of the turning point of $m(q)$ if it exists. The existence condition is given by $m'(0)> 0$ which is equivalent to $\rho-r-\lambda> 0$ and the corresponding value of $q^*$ is $\frac{\rho-r-\lambda}{\beta(\rho-r)}$. 
	
	If $\rho-r-\lambda\leq 0$, then one cannot locate any positive boundary point at which the first order smoothness holds. Our conjecture in this case is that the no-dividend region vanishes which is economically equivalent to $q^*=0$. The firm is liquidated immediately at time zero and investors receive the entire equity of the firm as dividends. The value function should thus satisfy
	\begin{align*}
	V(s,x)=F(s+x)&=\frac{1}{\beta}\ln (s+x)+\frac{1}{\beta}\left[\frac{r}{\beta}+\ln\beta-1\right]\\
	&=\frac{1}{\beta}\ln x+\frac{1}{\beta}\ln (1+z)+\frac{1}{\beta}\left[\frac{r}{\beta}+\ln\beta-1\right]
	\end{align*}
	and thus $g(z)=\frac{1}{\beta}\ln (1+z)+\frac{1}{\beta}\left[\frac{r}{\beta}+\ln\beta-1\right]$. If we apply the transformation which takes \eqref{eq:gfun_divregime} to \eqref{eq:nfun_div}, we can obtain $n(q)=\frac{\lambda}{\lambda+\beta}\ln\frac{1}{\beta}=m(0)=\ell(0)$ for all $q\geq 0$.
	
	To summarize this section, all possible shapes of $n$, $m$ and $\ell$ under different parameter combinations are shown in Figure 2.
	
	\begin{figure}[!htbp]
		\captionsetup[subfigure]{width=0.5\textwidth}
		\centering
		\subcaptionbox{$\mu\neq \rho$ and $\rho>r+\lambda$.}{\includegraphics[scale =0.375] {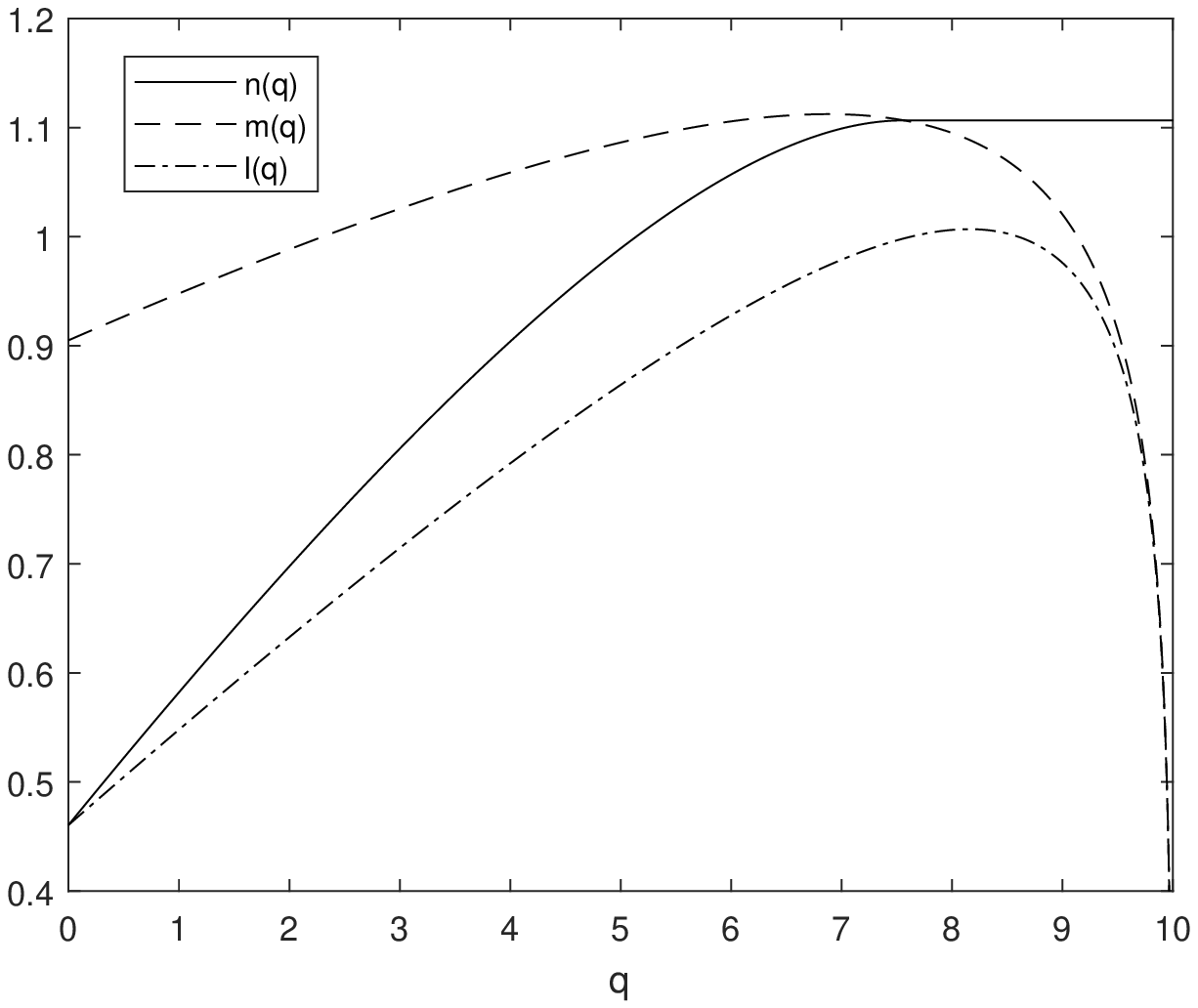}}
		\subcaptionbox{$\mu\neq \rho$ and $\rho\leq r+\lambda$.}{\includegraphics[scale =0.375]{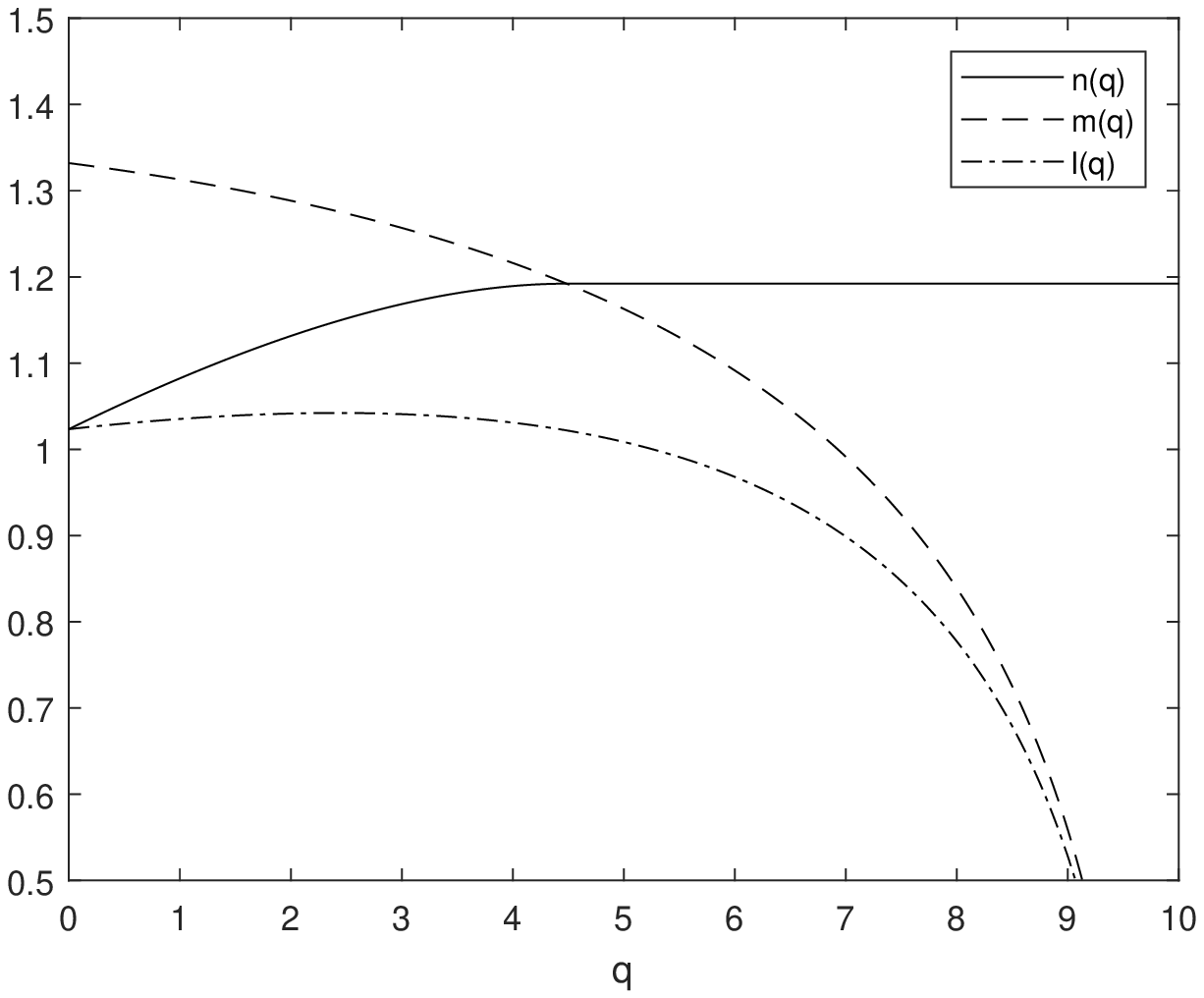}}
		\subcaptionbox{$\mu=\rho>r+\lambda$.}{\includegraphics[scale =0.375]{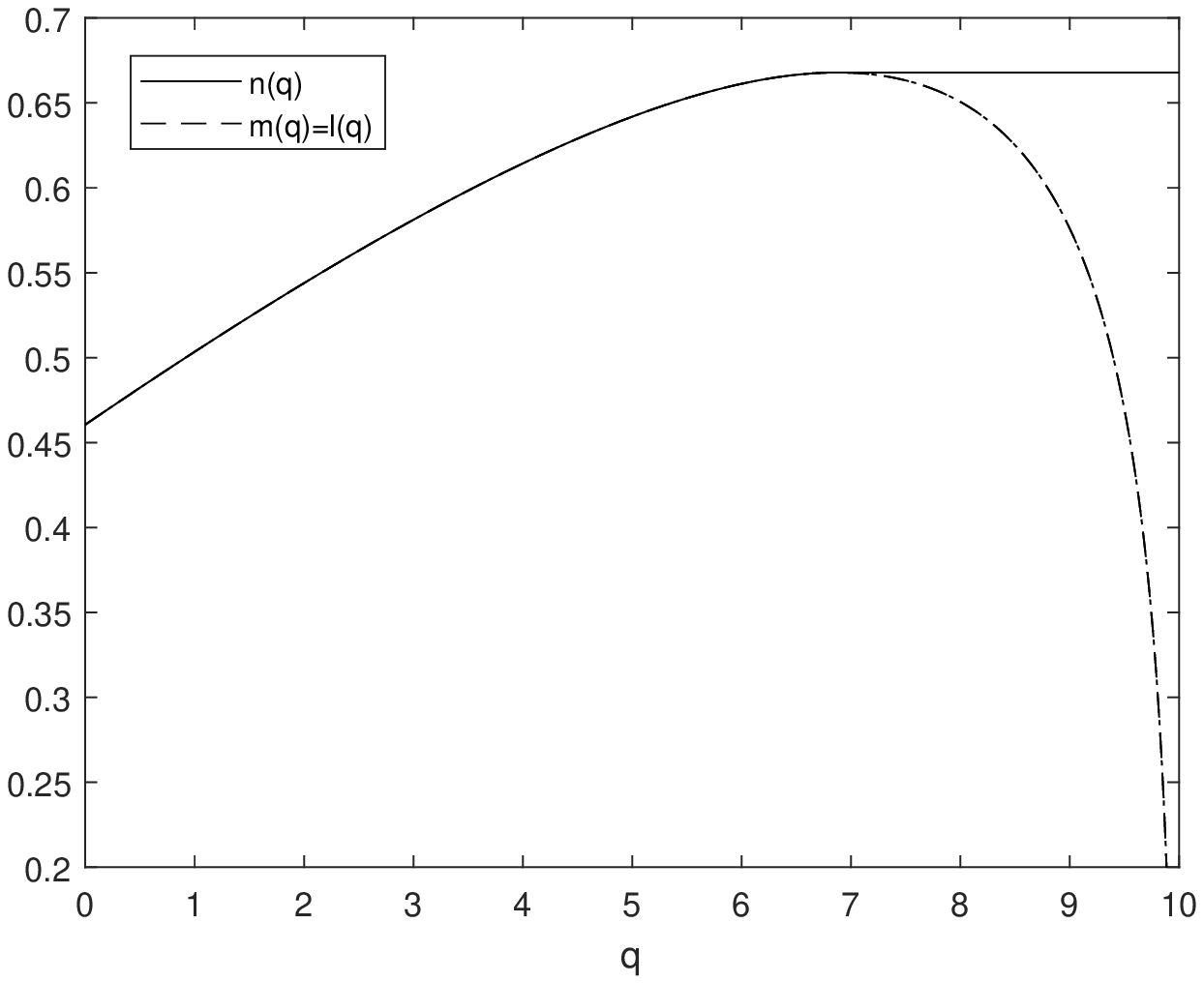}}
		\subcaptionbox{$\mu=\rho\leq r+\lambda$.}{\includegraphics[scale =0.375]{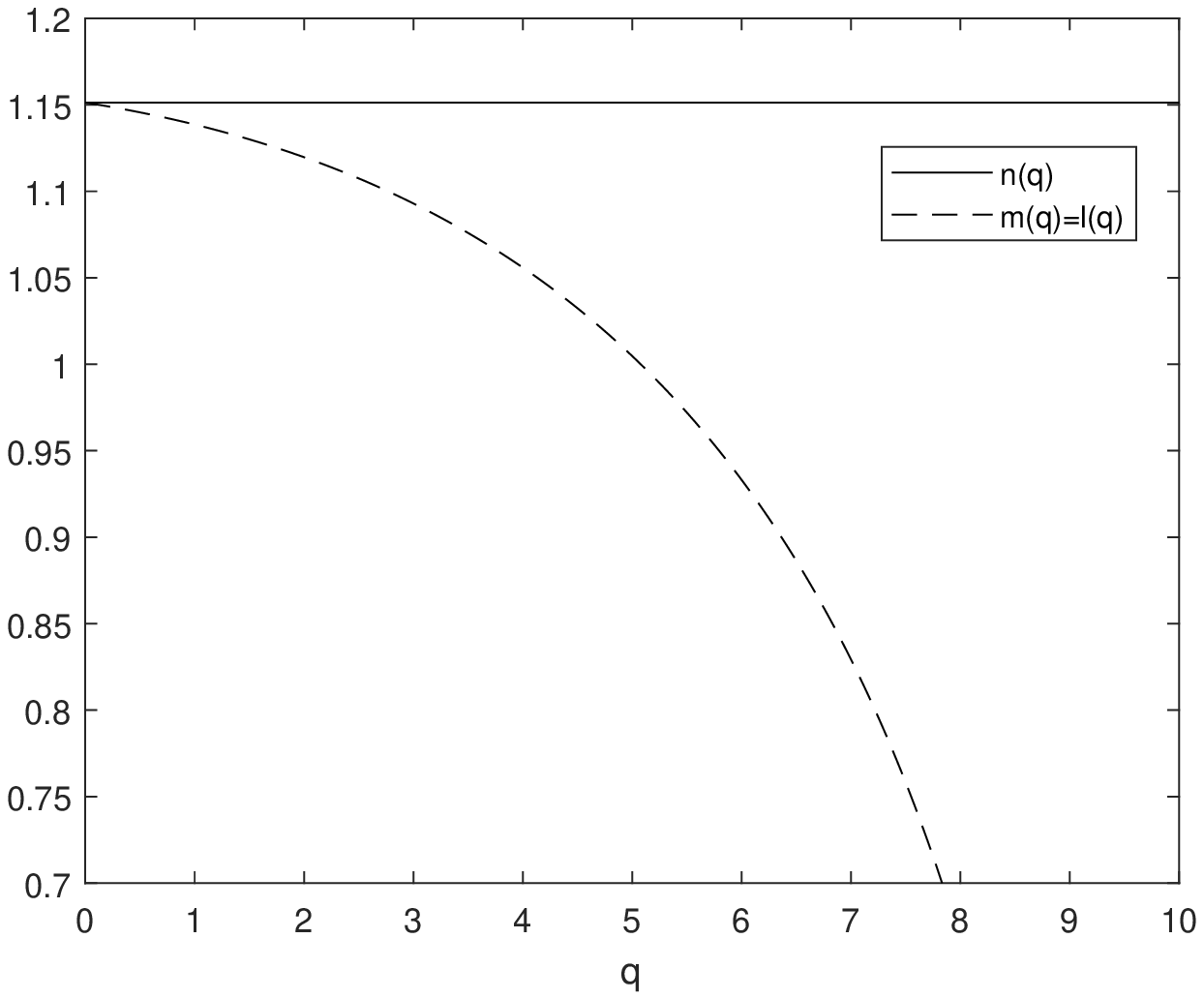}}
		
		\caption{The possible shapes of the transformed value function $n$ under different parameter combinations. When $\mu\neq \rho$, $n$ is first increasing and then becomes flat when it crosses $m$ at $q=q^*$. When $\mu=\rho$, $n$, $m$ and $\ell$ coincide and $q^*$ is the turning point of $m$ if exists. If $\mu=\rho\leq r+\lambda$ such that $m$ is decreasing for all $q\geq 0$, we take $q^*=0$ and the transformed value function is a flat horizontal line $n(q)=m(0)=\ell(0)$.} 
		\label{fig:shape_n}
	\end{figure}
	
	\section{Construction of the candidate value function and verification}
	\label{sect:veri}
	
	The heuristics in Section \ref{sect:heuristics} guide us to write down a first order system that the transformed value function should satisfy (with closed-form expressions available in some special cases). Conversely, given the solution to the first order system we can reverse the transformation to construct a second order smooth candidate value function and prove its optimality via a formal verification argument.
	
	We first construct the candidate value function in the special case of $\mu=\rho\leq r+\lambda$ in which case we expect the optimal strategy is to liquidate the firm immediately by transferring all equity to investors via dividends payment.
	\begin{prop}
	Suppose $\mu=\rho\leq r+\lambda$. On $(s,x)\in\mathbb{R}^2_{+}\setminus \{(0,0)\}$ define
	\begin{align}
	V^C(s,x)=\frac{1}{\beta}\ln(s+x)+\frac{1}{\beta}\left[\frac{r}{\beta}+\ln\beta-1\right].
	\label{eq:ValFunDegen}
	\end{align}
	Then $V^C(s,x)$ is a concave function increasing in both $s$ and $x$. Moreover, $\mathcal{L}V^C\leq 0$ and $\mathcal{M}V^C=0$, where $\mathcal{L}$ and $\mathcal{M}$ are the operators defined in \eqref{eq:LOperator} and \eqref{eq:MOperator}.
	\label{prop:ValFunPropDegen}
	\end{prop}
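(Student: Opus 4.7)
The concavity and monotonicity claims are essentially free: $V^C(s,x)=G(s+x)$ with $G(y)=\frac{1}{\beta}\ln y + \text{const}$, and $G$ is increasing and concave on $(0,\infty)$, so the composition with the linear form $s+x$ inherits both properties in each coordinate and jointly. For $\mathcal{M}V^C$ I would just differentiate: $V^C_s=V^C_x=\frac{1}{\beta(s+x)}$, whence $\mathcal{M}V^C=V^C_x-V^C_s=0$ on all of $\mathbb{R}^2_+\setminus\{(0,0)\}$.

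The only real content is the inequality $\mathcal{L}V^C\le 0$. Since $\mu=\rho$, the quadratic-in-$\pi$ term in \eqref{eq:LOperator} vanishes and the $\sup_\pi$ contributes nothing. I would then substitute $V^C_x=V^C_s=\frac{1}{\beta(s+x)}$, $V^C_{ss}=-\frac{1}{\beta(s+x)^2}$ and $F(x)=\frac{1}{\beta}\ln x+\frac{1}{\beta}[\frac{r}{\beta}+\ln\beta-1]$ into \eqref{eq:LOperator} and simplify. The constant pieces cancel cleanly: the $\frac{1}{\beta}[\frac{r}{\beta}+\ln\beta-1]$ terms coming from $V^C$ and from $F(x)$ combine with the $-\ln V^C_x-1$ term to leave no constant, and the $\ln(s+x)$ coefficients collapse from $1-\frac{\beta+\lambda}{\beta}=-\frac{\lambda}{\beta}$, producing (on $x>0$)
\begin{equation*}
\mathcal{L}V^C(s,x)=\frac{1}{\beta}\left[-\lambda\ln(1+z)+(\rho-r)\frac{z}{1+z}\right],\qquad z:=s/x.
\end{equation*}
On the axis $x=0$, $F(x)=-\infty$ so $\mathcal{L}V^C=-\infty\le 0$ trivially.

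Let $\phi(z)$ denote the bracketed expression. Then $\phi(0)=0$ and a short computation gives
\begin{equation*}
\phi'(z)=\frac{(\rho-r-\lambda)-\lambda z}{(1+z)^2}.
\end{equation*}
The standing assumption $\rho\le r+\lambda$ makes the numerator non-positive for every $z\ge 0$, hence $\phi$ is non-increasing with $\phi(0)=0$, and so $\phi(z)\le 0$ for all $z\ge 0$. This yields $\mathcal{L}V^C\le 0$ everywhere on $\mathbb{R}^2_+\setminus\{(0,0)\}$ and closes the proof.

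The only step that requires care is the algebraic simplification leading to $\phi$; the delicate part is tracking the cancellation of the $\ln(s+x)$, $\ln x$, $\ln\beta$, $r/\beta$ and constant pieces among the three sources $-\ln V^C_x-1$, $-(\beta+\lambda)V^C$ and $\lambda F(x)$. I would do this by grouping the logarithmic terms first to obtain the $-\frac{\lambda}{\beta}\ln\frac{s+x}{x}=-\frac{\lambda}{\beta}\ln(1+z)$ contribution, then grouping the linear-in-$s$ and linear-in-$x$ terms $rx/[\beta(s+x)]+\rho s/[\beta(s+x)]-r/\beta=(\rho-r)s/[\beta(s+x)]$ to obtain the remaining rational piece. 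Once the expression is reduced to $\phi(z)/\beta$, the sign analysis is immediate.
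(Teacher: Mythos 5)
Your proof is correct and follows essentially the same route as the paper: compute $V^C_s=V^C_x=\frac{1}{\beta(s+x)}$ to get $\mathcal{M}V^C=0$, observe the $\pi$-term in $\mathcal{L}$ vanishes when $\mu=\rho$, and simplify $\mathcal{L}V^C$ to $\frac{1}{\beta}\bigl[-\lambda\ln(1+z)+(\rho-r)\tfrac{z}{1+z}\bigr]$, which agrees with the paper's $\frac{\lambda}{\beta}\ln\frac{x}{s+x}+\frac{\rho-r}{\beta}\frac{s}{s+x}$. The only difference is in the final sign step: you show $\phi'(z)=\frac{(\rho-r-\lambda)-\lambda z}{(1+z)^2}\le 0$ and $\phi(0)=0$, whereas the paper first bounds $\rho-r\le\lambda$ and then invokes $\ln\alpha+1-\alpha\le 0$ for $\alpha\in(0,1)$; both are equally elementary and equivalent.
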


\begin{proof}
It is trivial that $V^C$ is concave and is increasing in both of its arguments. Direct computation gives $V^C_s=V^C_x=\frac{1}{\beta(s+x)}$ and in turn $\mathcal{M}V^C=0$. Finally,
\begin{align*}
\mathcal{L} V^C&=-\ln\frac{1}{\beta(s+x)}-1+\frac{r}{\beta}\frac{x}{s+x}+\frac{\rho}{\beta}\frac{s}{s+x}-(\beta+\lambda)\left[\frac{1}{\beta}\ln(s+x)+\frac{1}{\beta}\left(\frac{r}{\beta}+\ln\beta-1\right)\right] \\
&\qquad +\lambda\left[\frac{1}{\beta}\ln x+\frac{1}{\beta}\left(\frac{r}{\beta}+\ln\beta-1\right)\right] \\
&=\frac{\lambda}{\beta}\ln\frac{x}{s+x}+\frac{\rho-r}{\beta}\frac{s}{s+x}
\leq \frac{\lambda}{\beta}\ln\frac{x}{s+x}+\frac{\lambda}{\beta}\frac{s}{s+x}=\frac{\lambda}{\beta}\left(\ln\frac{x}{s+x}+1-\frac{x}{s+x}\right).
\end{align*}
Simple calculus exercise shows that $f(\alpha):=\ln\alpha +1-\alpha \leq 0$ for all $\alpha\in(0,1)$ and hence $\mathcal{L}V^C\leq 0$.
\end{proof}

Away from the special case of $\mu=\rho\leq r+\lambda$, we cannot write down the candidate value function explicitly. In the following two propositions, we describe how the transformation introduced in Section \ref{sect:heuristics} can be reversed and several important analytical properties of the constructed candidate value function are provided.
\begin{prop}
		
If $\mu\neq \rho$, let $n=(n(q))_{0\leq q\leq q^*}$ be the solution to the initial value problem in Proposition \ref{prop:theODE} where $q^*:=\inf\{q>0:n(q)\geq m(q)\}$. Otherwise if $\mu=\rho>r+\lambda$, define $n(q):=m(q)$ on $0\leq q\leq q^*:=\frac{\rho-r-\lambda}{\beta(\rho-r)}$. 

In both case, let $z^*:=\frac{\beta q^*}{1-\beta q^*}$ and $N(q):=\frac{1}{\beta}\left[n(q)-\ln\left(\frac{1}{\beta}-q\right)\right]$. Let $w$ be the inverse function of $N$. For $-\infty<u\leq \ln z^*$, define $h=h(u)$ as the solution to
		\begin{align}
		\int_{h}^{N(q^*)}\frac{dv}{w(v)}=\ln z^* - u
		\label{eq:defH2}
		\end{align}
		which is equivalent to
		\begin{align}
		\int_{w(h)}^{q^*}\frac{N'(v)}{v}dv=\ln z^* - u.
		\label{eq:defH}
		\end{align}
		$u\to h(u)$ is then a strictly increasing bijection from $(-\infty,\ln z^*]\to(N(0),N(q^*)]$. 
		
		Finally, set
		\begin{align}
		g^{C}(z):=
		\begin{cases}
		h(\ln z)+\frac{1}{\beta+\lambda}\left[\frac{\lambda}{\beta}\left(\frac{r}{\beta}+\ln\beta-1\right)+\frac{r}{\beta}-1\right],& 0< z\leq z^* ;\\
		\frac{1}{\beta}\ln(1+z)+\frac{1}{\beta}\left[n(q^*)+\left(\frac{r}{\beta}+\ln\beta -1\right)-\ell(0)\right] ,& z>z^*.
		\end{cases}
		\label{eq:CandidateG}
		\end{align}
		Then $g^C:(0,\infty)\to\mathbb{R}$ is a $C^2$ function.
		\label{prop:CandidateG}
	\end{prop}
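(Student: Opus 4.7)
The plan has three stages: first construct the inverse $w$ of $N$, then extend $h$ to the whole of $(-\infty,\ln z^*]$, and finally verify $C^2$-pasting of $g^C$ at $z=z^*$. The hardest step will be the last one, because it is the moment at which all the specific choices made earlier — the intersection condition defining $q^*$, the particular constants inserted into the second branch of $g^C$, and the chain of transformations from $n$ back to $g$ — must align exactly to deliver smooth fit, and the alignment relies on the boundary relation $n'(q^*)=0$.

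First I would check that $N:[0,q^*]\to\mathbb{R}$ is a strictly increasing $C^1$ bijection. Direct differentiation of $N(q)=\tfrac1\beta[n(q)-\ln(\tfrac1\beta-q)]$ gives
\begin{equation*}
N'(q)=\frac{1}{\beta}\left[n'(q)+\frac{\beta}{1-\beta q}\right].
\end{equation*}
Proposition \ref{prop:theODE} gives $n'>0$ on $(0,q^*)$, at the right endpoint $n(q^*)=m(q^*)$ forces $n'(q^*)=O(q^*,m(q^*))=0$, and the second term above is strictly positive because $q^*<1/\beta$. Hence $w:=N^{-1}$ is well-defined with the same smoothness as $n$, $w(N(0))=0$ and $w(N(q^*))=q^*$. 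Next I would establish that the improper integral $\int_{N(0)}^{N(q^*)}dv/w(v)$ diverges at its lower limit; after the substitution $v=N(q)$ this becomes $\int_0^{q^*}N'(q)/q\,dq$, and because $n'(0)$ is finite and positive by the proof of Proposition \ref{prop:theODE}, $N'$ extends continuously to $q=0$ with a positive value, so the integrand is comparable to $C/q$ near the origin and the integral diverges logarithmically. The intermediate value theorem then produces a unique $h(u)\in(N(0),N(q^*)]$ for every $u\in(-\infty,\ln z^*]$; implicit differentiation of \eqref{eq:defH2} yields $h'(u)=w(h(u))>0$, and iterating gives $h''(u)=w'(h(u))w(h(u))$, so $h$ is as smooth as $w$ on the open interval and the claimed bijection is established.

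Finally I would verify the $C^2$-pasting of $g^C$ at $z^*$. On each of the two pieces $g^C$ is smooth by construction, so only the seam needs work. Throughout I use the algebraic identity $1-\beta q^*=1/(1+z^*)$, equivalent to $z^*=\beta q^*/(1-\beta q^*)$, together with the boundary values $h(\ln z^*)=N(q^*)$, $h'(\ln z^*)=q^*$ and $h''(\ln z^*)=q^*/N'(q^*)$. Continuity of values is built into the additive constants chosen in \eqref{eq:CandidateG}, and continuity of the first derivative reduces to $q^*/z^*=1/[\beta(1+z^*)]$, which follows directly from the same identity. For the second derivative I would compute
\begin{equation*}
g^{C\prime\prime}(z^{*-})=\frac{h''(\ln z^*)-h'(\ln z^*)}{(z^*)^2}=\frac{q^*}{(z^*)^2}\left(\frac{1}{N'(q^*)}-1\right),
\end{equation*}
and the decisive fact $n'(q^*)=0$ forces $N'(q^*)=1/(1-\beta q^*)$; substituting, the right-hand side simplifies to $-1/[\beta(1+z^*)^2]$, which matches the right-derivative $g^{C\prime\prime}(z^{*+})=-1/[\beta(1+z^*)^2]$ read off the logarithmic branch. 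The same calculation works verbatim in the special case $\mu=\rho>r+\lambda$, since there $q^*$ is the turning point of $m=n$ and again $n'(q^*)=0$.
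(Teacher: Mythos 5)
Your proof is correct and follows essentially the same route as the paper: monotonicity of $N$ together with the divergence of $\int_{0+}N'(v)/v\,dv$ gives the bijection, and the smooth-fit identity $n'(q^*)=0$ (hence $N'(q^*)=1/(1-\beta q^*)$) delivers the $C^2$ pasting at $z=z^*$. The one step you gloss over, continuity of values at $z^*$, does require a short explicit computation using $h(\ln z^*)=N(q^*)$, the identity $\tfrac{1}{\beta}-q^*=\tfrac{1}{\beta(1+z^*)}$ and $\ell(0)=\tfrac{\lambda}{\beta+\lambda}\ln\tfrac{1}{\beta}$, but it checks out exactly as you claim.
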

	
	\begin{proof}
		
		For ease of notation we suppress the superscript $C$ in $g^C$ throughout the proof. We give the proof in the case of $\mu\neq\rho$. It is much easier to establish the results under $\mu=\rho$ since $n$ and $q^*$ are then available in closed-form.
		
		We first show that $h(u)$ is an increasing bijection. Recall from Proposition \ref{prop:theODE} that $n$ is increasing. Then $N'(v)=\frac{n'(v)}{\beta}+\frac{1}{1-\beta v}>0$ on $v<\frac{1}{\beta}$ and in turn the integrand on the left hand side of \eqref{eq:defH} is strictly positive such that $w(h(u))$ is strictly increasing in $u$. Moreover,
		\begin{align*}
		\int_{0+}^{\cdot}\frac{N'(v)}{v}dv>\int_{0+}^{\cdot}\frac{dv}{v(1-\beta v)}=+\infty.
		\end{align*}
		Hence $u\to w(h(u))$ is a bijection from $(-\infty,\ln z^*]$ to $(0,q^*]$. Since $N=w^{-1}$ is strictly increasing on $(0,q^*]$, $u\to h(u)$ is a bijection from $(-\infty,\ln z^*] \to (N(0),N(q^*)]$. 
		
		Now we proceed to show that $g$ is a $C^2$ function. On $z>z^*$, $g$ is trivially a $C^2$ function. On $0<z<z^*$, $n$ is a $C^1$ function and the continuity property is inherited by $(N,N')$ and then on integration by $(h,h',h'')$ and finally $(g,g',g'')$. It is thus sufficient to check the continuity of $g$, $g'$ and $g''$ at $z=z^*>0$. 
		
		From \eqref{eq:defH2}, $h(\ln z^*)=N(q^*)=\frac{1}{\beta}\left[n(q^*)-\ln\left(\frac{1}{\beta}-q^*\right)\right]=\frac{n(q^*)}{\beta}-\frac{1}{\beta}\ln\left(\frac{1}{\beta(1+z^*)}\right)$. Hence
		\begin{align*}
		g(z^*)&=h(\ln z^*)+\frac{1}{\beta+\lambda}\left[\frac{\lambda}{\beta}\left(\frac{r}{\beta}+\ln\beta-1\right)+\frac{r}{\beta}-1\right]\\
		&=\frac{1}{\beta}\ln(1+z^*)+\frac{1}{\beta}\left[n(q^*)+\left(\frac{r}{\beta}+\ln\beta -1\right)-\ell(0)\right] =g(z^*+).
		\end{align*}
		
		We now check the continuity of $zg'(z)$ at $z=z^*$. Let $u^*:=\ln z^*$ and $h^*:=h(u^*)$. Then by construction, $z^*g'(z^*)=h'(u^*)=w(h^*)=q^*$. Meanwhile, $(z^*+)g'(z^*+)=\frac{z^*}{\beta(1+z^*)}=q^*$. This implies the continuity of $g'$ at $z^*$.
		
		Similarly we check the continuity of $z^2g''(z)$ at $z=z^*$. From construction we can deduce
		\begin{align*}
		(z^*)^2g''(z^*)=h''(u^*)-h'(u^*)=w(h^*)[w'(h^*)-1]&=q^*\left(\frac{1}{N'(q^*)}-1\right)
		=-\beta(q^*)^2
		\end{align*}
		where we have used the fact $N'(q)=\frac{1}{\beta}\left(n'(q)+\frac{\beta}{1-\beta q}\right)$ and $n'(q^*)=0$. On the other hand,
		\begin{align*}
		(z^*+)^2g''(z^*+)=-\frac{(z^*)^2}{\beta(1+z^*)^2}=-\beta (q^*)^2.
		\end{align*}
		This completes the proof.
	\end{proof}

When we transform the original HJB equation in Section \ref{sect:heuristics}, a crucial step is a change of the independent variable via $q:=w(h)$ which leads to the transformed value function $n=(n(q))_{0\leq q\leq q^*}$. Proposition \ref{prop:CandidateG} is about the reversal of the transformation. While $q$ is a dummy independent variable associated with the candidate value function $n$ in the transformed system, one should keep in mind that $q$ is related to the original coordinate system through $q:=w[h(u)]=w[h(\ln z)]$. The following lemma provides an important link between the two coordinate systems which will be utilized extensively in many of the subsequent proofs in this paper.

\begin{lemma}

Recall the notations introduced in Proposition \ref{prop:CandidateG}. Write $q:=w[h(u)]=w[h(\ln z)]$. Then $z$ and $q$ are linked via
\begin{align}
z=z(q)=\frac{\beta q}{1-\beta q}\exp\left(-\int_{q}^{q^*}\frac{n'(v)}{\beta v}dv\right).
\label{eq:z_and_q}
\end{align}
In particular, $z:[0,q^*]\to [0,z^*]$ is an strictly increasing function, $z(q)\downarrow 0$ as $q\downarrow 0$ and $z(q)\uparrow z^*$ as $q\uparrow q^*$.
\label{lemma:zqlink}
\end{lemma}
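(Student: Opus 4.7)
The plan is to reduce everything to the defining relation \eqref{eq:defH} for $h$. Setting $u = \ln z$ in \eqref{eq:defH} gives $\int_q^{q^*} N'(v)/v \, dv = \ln z^* - \ln z$. The key first step is to split $N'(v) = n'(v)/\beta + 1/(1-\beta v)$, which comes from differentiating $N(q) = \frac{1}{\beta}[n(q) - \ln(1/\beta - q)]$. I would evaluate the elementary piece $\int_q^{q^*} \frac{dv}{v(1-\beta v)}$ via the partial fraction $\frac{1}{v(1-\beta v)} = \frac{1}{v} + \frac{\beta}{1-\beta v}$, obtaining $\ln\frac{q^*}{1-\beta q^*} - \ln\frac{q}{1-\beta q}$. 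Using the identity $\ln z^* = \ln\frac{\beta q^*}{1 - \beta q^*}$ (which is just the definition of $z^*$), the $q^*$-dependent boundary terms cancel cleanly, leaving $\ln z = \ln \frac{\beta q}{1-\beta q} - \int_q^{q^*}\frac{n'(v)}{\beta v}dv$. Exponentiating yields \eqref{eq:z_and_q}.

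For the monotonicity and boundary behavior, I would appeal directly to the bijection properties already established in Proposition \ref{prop:CandidateG}. The map $u \mapsto h(u)$ is a strictly increasing bijection from $(-\infty, \ln z^*]$ onto $(N(0), N(q^*)]$, and since $n$ is strictly increasing on $[0,q^*]$ (Proposition \ref{prop:theODE}) one has $N'(v) = n'(v)/\beta + 1/(1-\beta v) > 0$, so $w = N^{-1}$ is strictly increasing on $(N(0),N(q^*)]$. The composition $u \mapsto q = w(h(u))$ is therefore a strictly increasing bijection from $(-\infty, \ln z^*]$ onto $(0, q^*]$, which inverts to show that $q \mapsto z = e^u$ is a strictly increasing bijection $(0, q^*] \to (0, z^*]$. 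The right-endpoint value $z(q^*) = z^*$ can be read off \eqref{eq:z_and_q} directly since the integral vanishes at $q = q^*$, while $z(q) \downarrow 0$ as $q \downarrow 0^+$ follows from the bijection just described (the corresponding $u$ must tend to $-\infty$). Extending by $z(0) := 0$ gives the stated monotone map on the closed interval.

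I do not anticipate any substantial obstacle in this argument; the lemma is essentially a bookkeeping exercise that unpacks the chain of transformations $z \to u \to h \to q$ performed in Section \ref{sect:heuristics}. The only point that requires a little care is the cancellation of $\ln\frac{q^*}{1-\beta q^*}$ against the $q^*$-contribution from $\ln z^*$, which is what makes \eqref{eq:z_and_q} depend on $q^*$ only through the integration limit rather than through a prefactor.
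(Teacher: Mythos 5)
Your proposal is correct and follows essentially the same route as the paper: the identity \eqref{eq:z_and_q} is obtained by splitting $N'(v)=n'(v)/\beta+1/(1-\beta v)$ in \eqref{eq:defH}, integrating the elementary part by partial fractions, and cancelling the $q^*$-boundary term against $\ln z^*$. The only cosmetic difference is that the paper reads off strict monotonicity and $z(q)\downarrow 0$ directly from the formula (via $n'>0$ and the bound $0\le z(q)\le \beta q/(1-\beta q)$), whereas you re-invoke the bijection from Proposition \ref{prop:CandidateG}; both are fine.
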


\begin{proof}

Starting from \eqref{eq:defH},
\begin{align*}
\ln z^*-\ln z=\int_{q}^{q^*}\frac{N'(v)}{v}dv&=\int_{q}^{q^*}\left(\frac{n'(v)}{\beta v}+\frac{1}{v(1-\beta v)}\right)dv\\
&=\int_{q}^{q^*}\left(\frac{n'(v)}{\beta v}\right)dv+\ln\frac{q^*}{1-\beta q^*}-\ln\frac{q}{1-\beta q} \\
&=\int_{q}^{q^*}\left(\frac{n'(v)}{\beta v}\right)dv+\ln z^*-\ln\frac{\beta q}{1-\beta q} 
\end{align*}
and we can arrive at \eqref{eq:z_and_q} after a slight rearrangement of the terms. Since $n'(v)>0$ for all $v\in(0,q^*]$, $z=z(q)$ is increasing in $q$.

From the form of \eqref{eq:z_and_q} it is trivial that $z\uparrow z^*$ as $q\uparrow q^*$ because $n'(v)/v$ is bounded near $v=q^*$. To establish $z(q)\downarrow 0$ as $q\downarrow 0$, observe that
\begin{align*}
0\leq z(q)=\frac{\beta q}{1-\beta q}\exp\left(-\int_{q}^{q^*}\frac{n'(v)}{\beta v}dv\right)\leq \frac{\beta q}{1-\beta q}.
\end{align*}
Taking limit gives the desired result.

\end{proof}

Now we formally define the candidate value function in the non-degenerate case and provide a few useful properties.
	
	\begin{prop}
		For $\mu\neq \rho$ or $\mu=\rho>r+\lambda$, define
		\begin{align}
		V^{C}(s,x)=\frac{1}{\beta}\ln x+g^C\left(\frac{s}{x}\right) ,\qquad s>0,x>0
		\label{eq:CandidateValFun}
		\end{align}
		where $g^C$ is defined in Proposition \ref{prop:CandidateG}. Then:
		\begin{enumerate}
		\item $V^C$ can be extended to $s=0$ and $x=0$ by continuity leading to
			\begin{align*}
			V^C(0,x)&=\frac{1}{\beta}\ln x+\frac{1}{\beta}\left[\frac{r}{\beta}+\ln\beta-1\right],&x>0,\\
			V^C(s,0)&=\frac{1}{\beta}\ln s+\frac{1}{\beta}\left[m(q^*)+\left(\frac{r}{\beta}+\ln\beta -1\right)-\ell(0)\right],&s>0.
			\end{align*}
	\item  $V^C(s,x)$ is a concave function and is increasing in both $s$ and $x$.
	
	\item On $\{(s,x):s>xz^*,s\geq0,x\geq 0 \}$, $\mathcal{M}V^C=0$ and $\mathcal{L}V^C\leq 0$.
	\item On $\{(s,x):0\leq s\leq xz^*,x\geq 0,sx\neq 0 \}$, $\mathcal{L}V^C=0$; On $\{(s,x):0<s\leq xz^*,x\geq 0 \}$, $\mathcal{M}V^C\leq 0$.

	\end{enumerate}

	\label{prop:CandidateVProp}
	\end{prop}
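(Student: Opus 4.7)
\medskip

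\textbf{Proof proposal.} My plan is to handle the four items in the order stated, relying heavily on the identity $zg^{C\prime}(z)=q$ and $z^2g^{C\prime\prime}(z)=q[1/N'(q)-1]$ (which drop out of the chain rule applied to the definition of $h$) together with the $z\leftrightarrow q$ link from Lemma \ref{lemma:zqlink}. All computations below are understood modulo the degenerate case $\mu=\rho\leq r+\lambda$, which is already covered by Proposition \ref{prop:ValFunPropDegen}.

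For \textbf{(1)}, the extension to $s=0$ follows from the second branch of \eqref{eq:CandidateG} at $z=s/x\to 0$ once I verify that $h(-\infty)=N(0)=\frac{\ln\beta}{\beta+\lambda}$, which is immediate from $n(0)=\ell(0)=-\frac{\lambda}{\beta+\lambda}\ln\beta$ and a direct substitution into $N(q)=\tfrac{1}{\beta}[n(q)-\ln(\tfrac{1}{\beta}-q)]$. A short arithmetic check then collapses the constants to $\tfrac{1}{\beta}[\tfrac{r}{\beta}+\ln\beta-1]$. For the extension to $x=0$, $s>0$, I use the first branch of \eqref{eq:CandidateG} on $z>z^*$ to write $V^C(s,x)=\tfrac{1}{\beta}\ln(s+x)+\text{const}$, then let $x\downarrow 0$. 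For \textbf{(2)}, monotonicity is $V^C_x=\tfrac{1}{x}(\tfrac{1}{\beta}-q)>0$ (since $q\le q^*<1/\beta$) and $V^C_s=q/s\ge 0$. For concavity on the no-dividend region I will compute $\det D^2V^C=-\tfrac{1}{x^4}[g^{C\prime\prime}/\beta+(g^{C\prime})^2]$ and rewrite the bracket as $\frac{q}{\beta z^2}\cdot[1/N'(q)-1+\beta q]$; a one-line manipulation shows this equals $-\frac{q(1-\beta q)^2 n'(q)}{\beta z^2[(1-\beta q)n'(q)+\beta]}\le 0$ because $n'\ge 0$ by Proposition \ref{prop:theODE}. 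Together with $V^C_{ss}=g^{C\prime\prime}/x^2\le 0$ this gives concavity on $\{0<s<xz^*\}$; on $\{s>xz^*\}$, $V^C=\tfrac{1}{\beta}\ln(s+x)+\text{const}$ is trivially concave, and $C^2$ glueing (Proposition \ref{prop:CandidateG}) takes care of the seam.

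For \textbf{(3)}, on $s>xz^*$ the equalities $V^C_s=V^C_x=\tfrac{1}{\beta(s+x)}$ give $\mathcal{M}V^C=0$ directly. For $\mathcal{L}V^C$, I will substitute $V^C_s$, $V^C_{ss}$, $V^C$ and $F(x)$ into \eqref{eq:LOperator} and, after collecting terms, express
\begin{align*}
\mathcal{L}V^C(s,x)=K+\tfrac{rx+\rho s}{\beta(s+x)}-\tfrac{\lambda}{\beta}\ln\tfrac{s+x}{x}
\end{align*}
for a constant $K$ that depends only on $n(q^*)$, $\ell(0)$ and the model parameters. Continuity of $\mathcal{L}V^C$ across $z=z^*$ together with $\mathcal{L}V^C=0$ on the no-dividend side (item~(4)) pins down $K$ so that $\mathcal{L}V^C|_{z=z^*}=0$. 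Now I differentiate along the line $s+x=T$ constant; a calculation yields $\frac{d}{ds}\mathcal{L}V^C=\tfrac{1}{\beta(s+x)x}[(\rho-r-\lambda)x-\lambda s]$, which is non-positive on $s\ge z^*x$ precisely because $z^*\ge(\rho-r-\lambda)^+/\lambda$ by Proposition \ref{prop:policyrange}. Integrating from $z=z^*$ outward gives $\mathcal{L}V^C\le 0$ on the whole dividend region.

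For \textbf{(4)}, the equation $\mathcal{L}V^C=0$ is by construction: each transformation step carried out in Section \ref{sect:heuristics} (from $V\to g\to h\to w\to N\to n$) is reversible under the smoothness and monotonicity established in Propositions \ref{prop:theODE} and \ref{prop:CandidateG}, so from $n'(q)=O(q,n(q))$ together with the matching initial data I recover \eqref{eq:HJBg} and hence $\mathcal{L}V^C\equiv 0$ on $\{0<s\le xz^*\}$. The interesting check is $\mathcal{M}V^C\le 0$: using $V^C_x=\tfrac{1}{x}(\tfrac{1}{\beta}-q)$ and $V^C_s=q/(xz)$, the inequality is equivalent to $z\le \tfrac{\beta q}{1-\beta q}$, which is \emph{exactly} Lemma \ref{lemma:zqlink} (the exponential factor there is $\le 1$ because $n'\ge 0$). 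The main obstacle in the whole proof will be organizing the second-order chain-rule computations cleanly enough to make the concavity bracket in item~(2) and the sign analysis of $\mathcal{L}V^C$ in item~(3) transparent; once these are set up, the role of Lemma \ref{lemma:zqlink} and the lower bound $z^*\ge(\rho-r-\lambda)^+/\lambda$ become the substantive ingredients.
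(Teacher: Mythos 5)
Your proposal is correct and, for items (1), (2) and (4), follows essentially the same route as the paper: the closed-form limits via $N(0)$ and the dividend-region expression $\tfrac{1}{\beta}\ln(s+x)+\text{const}$, the Hessian computation reducing concavity to $n'\geq 0$ (your explicit rewriting of the determinant bracket as $-\frac{q(1-\beta q)^2 n'(q)}{\beta z^2[(1-\beta q)n'(q)+\beta]}$ checks out and is equivalent to the paper's criterion $N'(q)\geq \frac{1}{1-\beta q}$), and the reduction of $\mathcal{M}V^C\leq 0$ to $z\leq \frac{\beta q}{1-\beta q}$ via Lemma \ref{lemma:zqlink}. The one place you diverge is item (3): the paper identifies $\mathcal{L}V^C$ on the dividend region directly as $\frac{\beta+\lambda}{\beta}\bigl[m\bigl(\tfrac{1}{\beta}\tfrac{z}{1+z}\bigr)-m(q^*)\bigr]$ and invokes the fact that $m$ is decreasing beyond $q^*$, whereas you fix the constant $K$ by $C^2$-continuity across $z=z^*$ and then differentiate along lines $s+x=\text{const}$, using $z^*\geq (\rho-r-\lambda)^+/\lambda$. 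These are the same fact in different coordinates — your derivative $\frac{(\rho-r-\lambda)x-\lambda s}{\beta(s+x)x}$ is exactly $m'\bigl(\tfrac{1}{\beta}\tfrac{s}{s+x}\bigr)$ up to a positive factor, and the lower bound on $z^*$ is equivalent to $m'(q^*)\leq 0$ — so both are valid; the paper's version is slightly more self-contained since it does not route through Proposition \ref{prop:policyrange}. Two trivial slips to fix: in item (1) you have swapped the labels of the two branches of \eqref{eq:CandidateG} (the $z\to 0$ limit uses the first, no-dividend branch; the $x\to 0$ limit uses the second, $z>z^*$ branch — the formulas you actually invoke are the right ones), and in item (3) you should note, as the paper does, that the inequality extends to $x=0$ by continuity (your expression sends $\mathcal{L}V^C\to-\infty$ there, which is consistent).
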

	\begin{proof}
	Again we will suppress the superscript $C$ in $g^C$ throughout the proof for brevity.
	
		\begin{enumerate}
		
\item Recall that $u\to h(u)$ is a bijection from $(-\infty,\ln z^*]$ to $(N(0),N(q^*)]$. Then
\begin{align*}
\lim_{z\downarrow 0}g(z)&=\lim_{u\downarrow -\infty}h(u)+\frac{1}{\beta+\lambda}\left[\frac{\lambda}{\beta}\left(\frac{r}{\beta}+\ln\beta-1\right)+\frac{r}{\beta}-1\right] \\
&=N(0)+\frac{1}{\beta+\lambda}\left[\frac{\lambda}{\beta}\left(\frac{r}{\beta}+\ln\beta-1\right)+\frac{r}{\beta}-1\right] \\
&=\frac{1}{\beta}\left[\ell(0)-\ln\frac{1}{\beta}\right]+\frac{1}{\beta+\lambda}\left[\frac{\lambda}{\beta}\left(\frac{r}{\beta}+\ln\beta-1\right)+\frac{r}{\beta}-1\right]
=\frac{1}{\beta}\left[\frac{r}{\beta}+\ln\beta-1\right].
\end{align*}
Thus
\begin{align*}
V^C(0,x):=\lim_{s\downarrow 0}V^C(s,x)=\frac{1}{\beta}\ln x +\lim_{z\downarrow 0}g(z) =\frac{1}{\beta}\ln x+\frac{1}{\beta}\left[\frac{r}{\beta}+\ln\beta-1\right].
\end{align*}

On the other hand, for all $x\neq 0$ and $\frac{s}{x}=z> z^*$ we have
			\begin{align*}
			V^C(s,x)&=\frac{1}{\beta}\ln x+\frac{1}{\beta}\ln(1+s/x)+\frac{1}{\beta}\left[n(q^*)+\left(\frac{r}{\beta}+\ln\beta -1\right)-\ell(0)\right] \\
			&=\frac{1}{\beta}\ln(x+s)+\frac{1}{\beta}\left[n(q^*)+\left(\frac{r}{\beta}+\ln\beta -1\right)-\ell(0)\right]
			\end{align*}
			and hence 
			\begin{align*}
			V^C(s,0):=\lim_{x\downarrow 0}V^C(s,x)=\frac{1}{\beta}\ln s+\frac{1}{\beta}\left[n(q^*)+\left(\frac{r}{\beta}+\ln\beta -1\right)-\ell(0)\right]
			\end{align*}
			since $z^*<\infty$.
			
			\item On $s>xz^*$,
			\begin{align*}
			V^C(s,x)&=\frac{1}{\beta}\ln x+\frac{1}{\beta}\ln(1+s/x)+\frac{1}{\beta}\left[n(q^*)+\left(\frac{r}{\beta}+\ln\beta -1\right)-\ell(0)\right]  \\
			&=\frac{1}{\beta}\ln(s+x)+\frac{1}{\beta}\left[n(q^*)+\left(\frac{r}{\beta}+\ln\beta -1\right)-\ell(0)\right]
			\end{align*}
			which is obviously a concave function increasing in both $s$ and $x$. On $s\leq x z^*$ or equivalently $q\leq q^*$,			
			\begin{align*}
			V^C_x=\frac{1}{x}\left(\frac{1}{\beta}-zg'(z)\right) =\frac{1}{x}\left(\frac{1}{\beta}-h'(u)\right)
			=\frac{1}{x}\left(\frac{1}{\beta}-w(h)\right)
			\geq\frac{1}{x}\left(\frac{1}{\beta}-q^*\right)> 0
			\end{align*}
			as $q^*<1/\beta$ from Proposition \ref{prop:theODE}, and
			\begin{align*}
			V^C_s=\frac{1}{x}g'(z) =\frac{1}{s}zg'(z)
			=\frac{1}{s}h'(u)
			=\frac{1}{s}w(h)> 0.
			\end{align*}
			Since $g$ is second-order smooth at $z=z^*$ by Proposition \ref{prop:CandidateG}, to show that $V^C$ concave it is sufficient to check that the Hessian matrix
			\begin{align*}
			H:=\begin{pmatrix} 
			V^C_{ss} & V^C_{sx} \\
			V^C_{xs} & V^C_{xx} 
			\end{pmatrix}
			\end{align*}
			is semi-negative definite on $s\leq xz^*$. From the transformation adopted,
			\begin{align*}
			V_{ss} &=\frac{g''(z)}{x^2}=\frac{1}{s^2}z^2g''(z)=\frac{1}{s^2}[h''-h']=\frac{w(h)}{s^2}[w'(h)-1]=\frac{q}{s^2}\left[1/N'(q)-1\right]<0
			\end{align*}
			as
			\begin{align}
			N'(q)=\frac{n'(q)}{\beta}+\frac{1}{1-\beta q}\geq\frac{1}{1-\beta q}>1
			\label{eq:NDerRange}
			\end{align}
			given $n$ is increasing. Meanwhile, the determinant of $H$ can be evaluated as
			\begin{align*}
			V_{ss}^C V_{xx}^C-[V^{C}_{xs}]^2 &=\frac{1}{x^2}\left[g''(z)\left(z^2 g''(z)+2z g'(z)-\frac{1}{\beta}\right)-(g'(z)+zg''(z))^2\right]\\ 
			&=-\frac{1}{x^4}\left[\frac{g''(z)}{\beta}+(g'(z))^2\right] \\
			&=-\frac{1}{z^2x^4}\left[\frac{z^2g''(z)}{\beta}+(zg'(z))^2\right] 
			=-\frac{1}{z^2x^4}\left[\frac{h''-h'}{\beta}+(h')^2\right] \\
			&=-\frac{w(h)}{z^2x^4}\left[\frac{w'(h)-1}{\beta}+w(h)\right] =-\frac{q}{z^2x^4}\left[\frac{1/N'(q)-1}{\beta}+q\right].
			\end{align*}
			Hence $\det(H)\geq 0$ if and only if $\frac{1/N'(q)-1}{\beta}+q\leq 0$ which is equivalent to $N'(q)\geq\frac{1}{1-\beta q}$. But again the latter holds due to \eqref{eq:NDerRange}. Thus $V^C$ is concave.

	\item From construction of $V^C$ on $s> xz^*$ it is trivial that and $V^C_s=V^C_x=\frac{1}{\beta(s+x)}$ and hence $\mathcal{M}V^C=0$. Then it remains to show $\mathcal{L}V^C\leq 0$. Suppose $x\neq 0$. A direct evaluation of $\mathcal{L}V^C$ on $\frac{s}{x}=z > z^*$ gives
	\begin{align*}
	\mathcal{L}V^C&=-\ln\frac{1}{\beta x(1+z)}-1+\frac{r}{\beta(1+z)}+\frac{\rho z}{\beta(1+z)}+\frac{(\mu-\rho)^2}{2\sigma^2 \beta} \\
	&\qquad -(\beta+\lambda)\left[\frac{1}{\beta}\ln x+\frac{1}{\beta}\ln(1+z)+\frac{1}{\beta}\left[n(q^*)+\left(\frac{r}{\beta}+\ln\beta -1\right)-\ell(0)\right]\right] \\
	&\qquad +\frac{\lambda}{\beta}\ln x+\frac{\lambda}{\beta}\left[\frac{r}{\beta}+\ln \beta -1\right] \\
	&=(\rho-r)\left[\frac{1}{\beta}\frac{z}{1+z}\right]+\frac{\lambda}{\beta}\ln\left[\frac{1}{\beta}-\frac{1}{\beta}\frac{z}{1+z}\right]+\frac{(\mu-\rho)^2}{2\sigma^2\beta}-\frac{\beta+\lambda}{\beta}n(q^*) \\
	&=\frac{\beta+\lambda}{\beta}\left[m\left(\frac{1}{\beta}\frac{z}{1+z}\right)-n(q^*)\right]  \\
	&\leq \frac{\beta+\lambda}{\beta}\left[m\left(\frac{1}{\beta}\frac{z^*}{1+z^*}\right)-n(q^*)\right]
	=\frac{\beta+\lambda}{\beta}\left[m(q^*)-n(q^*)\right]=0
	\end{align*}
	since $m(q)$ is decreasing on $q\geq q^*=\frac{z^*}{\beta(1+z^*)}$ and $n(q^*)=m(q^*)$ by the definition of $q^*$. The inequality can be extended to $x=0$ by continuity on observing that $m\left(\frac{1}{\beta}\frac{z}{1+z}\right)=m\left(\frac{1}{\beta}\frac{s}{s+x}\right)$.
	
	\item On $z\leq z^*$ the candidate value function $V^C$ is constructed from $n=(n(q))_{0\leq q\leq q^*}$ which by definition solves $\mathcal{L}V^C=0$. Thus we only have to verify that $\mathcal{M}V^C=V^C_x-V^C_s\leq 0$ on $0<z\leq z^*$. Under the transformation adopted the desired inequality is $\frac{1}{x}\left(\frac{1}{\beta}-(1+z)g'(z)\right)\leq 0$ which is equivalent to $zg'(z)\geq  \frac{z}{\beta(1+z)}$ and in turn $q=w(h)=h'=zg'(z)\geq \frac{z}{\beta(1+z)}$ or equivalently $z\leq \frac{\beta q}{1-\beta q}$. But this immediately follows from \eqref{eq:z_and_q}.
	
	
	
	\end{enumerate}
	\end{proof}

\begin{remark}
While we can extend the definition of $\mathcal{L}V^C$ to $x=0$ by continuity, we do not require $\mathcal{M}V^C$ at $s=0$. The rationale is that along $s=0$ the net equity value of the firm is zero and hence no dividend can be paid out, i.e. $d\Phi_{t}=0$ is the only admissible strategy whenever $S_t=0$. The marginal utility contributed by the dividend term $\mathcal{M}V^C d\Phi_t$ is thus zero.
\end{remark}

The following lemma provides some useful results which will facilitate the proof of the verification theorem.
\begin{lemma}
\begin{enumerate}
\item For $V^C$ defined in Proposition \ref{prop:ValFunPropDegen} or \ref{prop:CandidateVProp}, $sV^C_s$ and $\frac{1}{xV_x^C}$ are bounded everywhere.
\item Suppose $\mu\neq \rho$ or $\mu=\rho>\lambda+r$ such that $z^*>0$. Then $\frac{V^C_s}{sV^C_{ss}}$ and $\frac{(V^C_s)^2}{V^C_{ss}}$ are bounded on $0\leq s\leq xz^*$.
\end{enumerate}
\label{lemma:bound}
\end{lemma}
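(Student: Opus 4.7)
The approach is direct computation: both parts reduce to reading off bounds from the formulas for $V^C_s$, $V^C_x$, and $V^C_{ss}$ that were already recorded in the proof of Proposition~\ref{prop:CandidateVProp}. On the no-dividend region, in the parametrization $q = w[h(\ln z)]$ supplied by Lemma~\ref{lemma:zqlink}, those formulas read
\begin{equation*}
s V^C_s = q, \qquad x V^C_x = \frac{1}{\beta} - q, \qquad s^2 V^C_{ss} = q\Big(\frac{1}{N'(q)} - 1\Big),
\end{equation*}
whereas on the dividend region $\{s > xz^*\}$ (and throughout the degenerate setting of Proposition~\ref{prop:ValFunPropDegen}) one has $V^C(s,x) = \frac{1}{\beta}\ln(s+x) + \mathrm{const}$, so $V^C_s = V^C_x = 1/[\beta(s+x)]$.

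For part~(1), I would first note that $sV^C_s = q \in [0, q^*]$ on the no-dividend region, with $q^* < 1/\beta$ by Proposition~\ref{prop:theODE}, and $sV^C_s = s/[\beta(s+x)] \leq 1/\beta$ elsewhere, so $sV^C_s \leq 1/\beta$ uniformly. For the second quantity, on the no-dividend region $xV^C_x = \frac{1}{\beta} - q \geq \frac{1}{\beta} - q^* > 0$, yielding $1/(xV^C_x) \leq 1/(\frac{1}{\beta} - q^*)$; in the degenerate case and on the $s = 0$ boundary this simplifies further to $\beta$. Together these cover the full region the optimally controlled state visits after the initial lump-dividend adjustment, namely $\{0 \leq s \leq xz^*\}$ (with $z^* = 0$ in the degenerate liquidation case).

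For part~(2), substituting the formulas above gives
\begin{equation*}
\frac{V^C_s}{s V^C_{ss}} = \frac{1}{1/N'(q) - 1}, \qquad \frac{(V^C_s)^2}{V^C_{ss}} = \frac{q}{1/N'(q) - 1},
\end{equation*}
so everything hinges on a uniform lower bound $N'(q) \geq c > 1$ over $q \in [0, q^*]$, which I expect to be the main obstacle. From $N'(q) = n'(q)/\beta + 1/(1 - \beta q)$ and the fact that $n'$ is continuous and non-negative on $[0, q^*]$ (with $n'(0) > 0$ from the crossing analysis in Proposition~\ref{prop:theODE} and $n'(q^*) = 0$), one checks $N'(q) > 1$ pointwise: at $q = 0$ via the strict positivity of $n'(0)/\beta$, on $(0, q^*)$ because $1/(1 - \beta q) > 1$ and $n' \geq 0$, and at $q^*$ because $1/(1 - \beta q^*) > 1$. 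Compactness of $[0, q^*]$ and continuity of $N'$ then upgrade this to a uniform $c > 1$ with $N'(q) \geq c$, so $|1/N'(q) - 1| \geq 1 - 1/c > 0$ and the two quotients are bounded in absolute value by $1/(1 - 1/c)$ and $q^*/(1 - 1/c)$ respectively. The apparent singularity at $s = 0$ (where $V^C_s$ and $V^C_{ss}$ individually diverge) disappears once the quotients are written in the $q$-variable: both expressions extend continuously to $q = 0$, so the bounds hold on the closed region $\{0 \leq s \leq xz^*\}$ as stated.
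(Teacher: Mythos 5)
Your proof is correct and follows the paper's own argument almost verbatim: both pass to the $q$-variable via $sV^C_s=q$, $xV^C_x=\tfrac{1}{\beta}-q$, and $s^2V^C_{ss}=q\bigl(1/N'(q)-1\bigr)$, and read off bounds from $q^*<1/\beta$ and $N'(q)>1$ on the compact interval $[0,q^*]$. One small point worth noting: you restrict the $1/(xV^C_x)$ bound to the no-dividend region $\{0\le s\le xz^*\}$, whereas the paper claims boundedness ``everywhere''; strictly the paper's assertion fails on $\{s>xz^*\}$, where $1/(xV^C_x)=\beta(s+x)/x$ diverges as $x\downarrow0$, but this is harmless because the verification argument only invokes that bound along the optimally controlled trajectory, which remains in $\{0\le s\le xz^*\}$ after the initial lump dividend.
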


\begin{proof}
\begin{enumerate}
	
\item In the case of $\mu=\rho\leq r+\lambda$ we have a closed-form expression of $V^C$ as in Proposition \ref{prop:ValFunPropDegen} where the desired results can be established easily. 

For the more general case where $V^C$ is defined in Proposition \ref{prop:CandidateVProp}, on $0\leq s \leq xz^*$ we have
\begin{align*}
sV^C_s=zg'(z)=h'(u)=w(h)\in [0,q^*]
\end{align*}
and
\begin{align}
\frac{1}{xV_{x}^C}=\frac{1}{1/\beta-zg'(z)}=\frac{1}{1/\beta-h'(u)}=\frac{1}{1/\beta-w(h)}
\label{eq:consumbound}
\end{align}
which is bounded as $w(h)=q\in [0,q^*]\subset [0,1/\beta)$. Meanwhile, on $s>xz^*$ $V^C$ equals $\frac{1}{\beta}\ln(s+x)$ plus a constant such that $sV^C_s$ and $\frac{1}{xV_x^C}$ are trivially bounded.

\item Similarly,
\begin{align}
\frac{V_s^C}{sV_{ss}^C}=\frac{zg'(z)}{z^2g''(z)}=\frac{h'(u)}{h''(u)-h'(u)}=\frac{1}{w'(h)-1}=\frac{1}{1/N'(q)-1}
\label{eq:pibound}
\end{align}
which is bounded on $0<q\leq q^*$ as $N'(q)$ is continuous, $N'(0)=\frac{n'(0)}{\beta}+1$ and $n'(0)$ is non-zero from Proposition \ref{prop:theODE}. Finally,
\begin{align}
\frac{[V_s^C]^2}{V_{ss}^C}=\frac{[zg'(z)]^2}{z^2g''(z)}=\frac{[h'(u)]^2}{h''(u)-h'(u)}=\frac{w(h)}{w'(h)-1}=\frac{q}{1/N'(q)-1}
\label{eq:integrandbound}
\end{align}
which is also bounded on $0<q\leq q^*$.
\end{enumerate}
\end{proof}

We are now ready to prove Theorem \ref{thm:main}.
	\begin{proof}[Proof of Theorem \ref{thm:main}]
		
		To show that $V^C$ is indeed the value function, it is sufficient to show that $V^C$ is simultaneously an upper bound and a lower bound of $V$ defined in \eqref{eq:OptProbInfinite}.
		
		\begin{enumerate}
		
		\item In this case the candidate value function is given by \eqref{eq:ValFunDegen}. We first show that $V\leq V^C$ which relies on a perturbation argument based on \cite{davis-norman90}. Fix $\epsilon >0$ and define $\tilde{V}^C(s,x):=V^C(s,x+\epsilon)$ such that $\tilde{V}^C$ is bounded below by $V^{C}(0,\epsilon)=\frac{1}{\beta}\ln\epsilon + \frac{1}{\beta}\left[\frac{r}{\beta}+\ln\beta-1\right]$. For an arbitrary admissible strategy $(c,\pi,\Phi)$, let 
		\begin{align*}
			\tilde{M}_t:=\int_{0}^{t}\left[e^{-(\lambda+\beta)u}\ln c_u+\lambda F(X_u)\right] du +e^{-(\lambda+\beta)t} \tilde{V}^C(S_t,X_t). 
		\end{align*}
		Since $\tilde{V}^C$ is $C^{2\times 1}$, generalized Ito's lemma gives
		\begin{align}
			\tilde{M}_t&=\tilde{M}_0+
			\int_{0}^{t}e^{-(\beta+\lambda)u}\Biggl\{\ln c_u-\tilde{V}^C_x c_u+r \tilde{V}^C_x X_u+\rho \tilde{V}^C_s S_u+(\mu-\rho)\tilde{V}^C_s S_u\pi_u\nonumber \\
			&\qquad+\frac{\sigma^2}{2}\tilde{V}^C_{ss} S_u^2\pi_u^2-(\beta+\lambda)\tilde{V}^C+\lambda F(X_u)\Biggr\}du 
			+\int_{0}^{t}e^{-(\beta+\lambda)u}(\tilde{V}^C_s-\tilde{V}^C_x) d\Phi_u \nonumber\\
			&\qquad +\sum_{\upsilon\leq t}e^{-(\beta+\lambda)\upsilon}\left[\tilde{V}^C(S_\upsilon,X_\upsilon)-\tilde{V}^C(S_{\upsilon-},X_{\upsilon-})-\tilde{V}^C_s\Delta S_\upsilon-\tilde{V}^C_x\Delta X_\upsilon\right] \nonumber\\
			&\qquad+\int_{0}^{t}e^{-(\beta+\lambda)u}\sigma\pi_u S_u \tilde{V}_{s}^C dB_u \nonumber \\
			&=:\tilde{M}_0+N^1_t+N^2_t+N^3_t+N^4_t.
			\label{eq:VeriIto} 
		\end{align}
		
		Consider a sequence of stopping times $T_n:=\inf\{t>0:\int_0^t(\pi_u S_u \tilde{V}_{s}^C)^2 du\geq n\}$ under which the stopped local martingale $N^4_{t\wedge T_n}=\int_{0}^{t\wedge T_n}e^{-(\beta+\lambda)u}\sigma\pi_u S_u \tilde{V}_{s}^C dB_u$ is a true martingale for each $n$. Since $\int_{0}^{t}\pi_{u}^{2}du<\infty$ and $sV^C_s$ (and in turn $s\tilde{V}^C_s$) is bounded as shown in part (1) of Lemma \ref{lemma:bound}, we have $T_n\uparrow \infty$ almost surely. 
		
		Using Proposition \ref{prop:ValFunPropDegen},
		\begin{align*}
			\mathcal{M}\tilde{V}^C(s,x)=V^C_x(s,x+\epsilon)-V^C_s(s,x+\epsilon)=\mathcal{M}V^C(s,x+\epsilon)= 0
		\end{align*}
		and
		\begin{align*}
			\mathcal{L}\tilde{V}^C&=-\ln V^C_x(s,x+\epsilon)-1+r xV^C_x (s,x+\epsilon)+\rho sV^C_s(s,x+\epsilon) 
			-(\beta+\lambda)f+\lambda F(x) \\
			&=\Biggl[-\ln V^C_x(s,x+\epsilon)-1+r (x+\epsilon)V^C_x (s,x+\epsilon)+\rho sV^C_s(s,x+\epsilon) 
			-(\beta+\lambda)f+\lambda F(x+\epsilon)\Biggr] \\ &\qquad-r\epsilon V_x^C(s,x+\epsilon)-\lambda (F(x+\epsilon)-F(x)) \\
			&\leq  -r\epsilon V_x^C(s,x+\epsilon)-\lambda (F(x+\epsilon)-F(x)) \leq 0
		\end{align*}
		where the second last inequality is due to the fact that $V^C$ solves $\mathcal{L}V^C=0$ and the last inequality is due to $V^C$ and $F$ being both increasing in $x$. Thus 
		\begin{align*}
		N^1_t\leq \int_{0}^{t}e^{-(\beta+\lambda)u}\mathcal{L}\tilde{V}^C du\leq 0,\qquad N^2_t=\int_{0}^{t}e^{-(\beta+\lambda)u}\mathcal{M}\tilde{V}^C d\Phi_u=0.
		\end{align*}
		Moreover, the concavity property of $V^C$ is inherited by $\tilde{V}^C$ and as such $N^3_t\leq 0$.
		
		Taking expectation on both side of \eqref{eq:VeriIto} at $t\wedge T_n$ leads to
		\begin{align*}
			\mathbb{E}\left[\int_{0}^{t\wedge T_n}\left[e^{-(\lambda+\beta)u}\ln c_u+\lambda F(X_u)\right] du +e^{-(\lambda+\beta)(t\wedge T_n)} \tilde{V}^C(S_{t\wedge T_n},X_{t\wedge T_n})\right]=\mathbb{E}(\tilde{M}_{t\wedge T_n})\leq \tilde{M}_0=\tilde{V}^C(s,x).
		\end{align*}
		Sending $n\uparrow \infty$, monotone convergence and 
		bounded convergence theorem give respectively
		\begin{align*}
			&\lim_{n\uparrow \infty}\mathbb{E}\left[\int_{0}^{t\wedge T_n}\left[e^{-(\lambda+\beta)u}\ln c_u+\lambda F(X_u)\right] du\right]\\
			&=\lim_{n\uparrow \infty}\mathbb{E}\left[\int_{0}^{t\wedge T_n}\left[e^{-(\lambda+\beta)u}\ln c_u+\lambda F(X_u)\right]^{+} du\right]-\lim_{n\uparrow \infty}\mathbb{E}\left[\int_{0}^{t\wedge T_n}\left[e^{-(\lambda+\beta)u}\ln c_u+\lambda F(X_u)\right]^{-} du\right]\\
			&=\mathbb{E}\left[\int_{0}^{t}\left[e^{-(\lambda+\beta)u}\ln c_u+\lambda F(X_u)\right] du\right]
		\end{align*}
		and
		\begin{align*}
			\lim_{n\uparrow \infty}\mathbb{E}\left[e^{-(\lambda+\beta)(t\wedge T_n)} \tilde{V}^C(S_{t\wedge T_n},X_{t\wedge T_n})\right]&\geq \lim_{n\uparrow \infty}\mathbb{E}\left[e^{-(\lambda+\beta)(t\wedge T_n)} \min(\tilde{V}^C(S_{t\wedge T_n},X_{t\wedge T_n}),0)\right] \\&=\mathbb{E}\left[e^{-(\lambda+\beta)t} \min(\tilde{V}^C(S_{t},X_{t}),0)\right].
		\end{align*} 
		Hence we obtain
		\begin{align*}
			\mathbb{E}\left[\int_{0}^{t}\left[e^{-(\lambda+\beta)u}\ln c_u+\lambda F(X_u)\right] du +e^{-(\lambda+\beta)t} \min(\tilde{V}^C(S_{t},X_{t}),0)\right]\leq \tilde{V}^C(s,x)=V^C(s,x+\epsilon).
		\end{align*}
		On letting $t\uparrow \infty$ and then $\epsilon \downarrow 0$, we can conclude 
		\begin{align*}
			\mathbb{E}\left[\int_{0}^{\infty}\left[e^{-(\lambda+\beta)u}\ln c_u+\lambda F(X_u)\right] du \right]\leq V^C(s,x)
		\end{align*}
		and for any admissible $(c,\pi,\Phi)$ and thus $V(s,x)\leq V^C(s,x)$.
		
		To show that $V^C\leq V$ it is sufficient to demonstrate an admissible strategy which attains the candidate value function. Consider the strategy of liquidating the firm by distributing the entire equity to investors in form of dividends, and then investors consume their private wealth at a rate of $\beta$. In other words, the candidate optimal dividend and consumption policy are given by $\Phi^*_t=s$ and $c^*_t=\beta X^*_t$ respectively for $t\geq 0$. The resulting wealth process $X^*$ is thus the solution to
		\begin{align*}
		dX^*_t=(r-\beta)X^*_tdt,\qquad X^*_0=s+x
		\end{align*}
		and hence $ X_t^*=(s+x)e^{(r-\beta)t}$. The candidate optimal consumption policy can be written as
		\begin{align*}
		c^*_t=\beta X^*_t=\beta (s+x)e^{(r-\beta)t}.
		\end{align*}
		The resulting expected lifetime utility is
		\begin{align*}
		\mathbb{E}\left[\int_0^\infty e^{-\beta t}\ln\left(\beta (s+x)e^{(r-\beta)t}\right) dt \right]=\frac{1}{\beta}\ln(s+x)+\frac{1}{\beta}\left[\frac{r}{\beta}+\ln\beta-1\right]=V^C(s,x).
		\end{align*}
		Thus $V^C\leq V$.
			
		\item The proof of $V\leq V^C$ is omitted since it is almost identical to part (1), except that $N^1_t\leq 0$ and $N^2_t\leq 0$ are now established by part (3) and (4) of Proposition \ref{prop:CandidateVProp}. To show that $V\geq V^C$, we again want to demonstrate there exists an admissible strategy under which $		 V^C(s,x)=\mathbb{E}\left[\int_{0}^\infty e^{-(\beta+\lambda)t}[\ln c_t+\lambda F(X_t)]dt\right]$. Suppose the initial value $(s,x)$ is such that $\frac{s}{x}\leq z^*$. Define the feedback controls $c^*=(c^*_t)_{t\geq 0}$, $\pi^*=(\pi^*_t)_{t\geq 0}$ as in \eqref{eq:FeedbackControls}, and $\Phi^*:=(\Phi^*_t)_{t\geq 0}$ an adapted, local time strategy which keeps $\frac{S_t}{X_t}\leq z^*$. By part (2) of Lemma \ref{lemma:bound}, $\pi^*(s,x)$ and $c^*(s,x)/x$ are bounded and thus $(c^*,\pi^*)$ is a pair of valid consumption/investment policy. Denote by $(S^*,X^*)=(S^*_t,X^*_t)_{t\geq 0}$ the state variable processes evolving under these controls. 
		
		Let $M^*_t:=\int_{0}^{t}\left[e^{-(\lambda+\beta)u}\ln c^*_u+\lambda F(X^*_u)\right] du +e^{-(\lambda+\beta)t} V^{C}(S^*_t,X^*_t)$ be the value process under $(c^*,\pi^*,\Phi^*)$. Using part (4) of Proposition \ref{prop:CandidateVProp} and the fact that $d\Phi^*_t=0$ on $Z^*_t:=S^*_t/X^*_t\leq z^*$, Ito's lemma gives
		$M_t^*=M_0^*+\int_0^t \sigma \pi_u^* S_u^* V^C_s dB_u$. By part (2) of Lemma \ref{lemma:bound} the integrand of the stochastic integral is bounded and thus it is a true martingale such that
		\begin{align}
		\mathbb{E}\left[\int_{0}^{t}e^{-(\lambda+\beta)u}\left(\ln c^*_u + \lambda F(X^*_u)\right) du\right]+ \mathbb{E}\left[e^{-(\lambda+\beta)t} V^{C}(S^*_t,X^*_t)\right]=\mathbb{E}(M^*_t)= M^*_0 = V^{C}(s,x).
		\label{eq:VeriMg}
		\end{align}
		
		To show that $(c^*,\pi^*,\Phi^*)$ is admissible, we want to demonstrate that $S^*_t\geq 0$ and $X^*_t\geq 0$ for all $t$ and also $T:=\inf\{t\geq 0: (S^*_t,X^*_t)\in (0,0)\}=\infty$. The design of $\Phi^*$ immediately implies $S^*_t\geq 0$ and $X^*_t\geq 0$. Applying Ito's lemma directly to $V^C(S_t^*,X_t^*)$ gives
		\begin{align}
		V^C(S_t^*,X_t^*)&=V^C(s,x)+\int_{0}^{t}\left[rV_x^C X_u^* + \rho V_s^C S^*_u +(\mu-\rho)V^C_s S_u^* \pi^*_u+\frac{\sigma^2}{2}V_{ss}(\pi_u^*S_u^*)^2-V_x c_u^*\right]du \nonumber \\
		&\qquad +\int_0^t(V^C_x-V^C_s)d\Phi^*_u+\sigma\int_0^t\pi^*_uV_s^CS^*_u dB_u \nonumber \\
		&=V^C(s,x)+\int_0^t\left[-\ln c_u^*+(\beta+\lambda)V^C-\lambda F(X^*_u)\right]du+\sigma\int_0^t\pi^*_uV_s^CS^*_u dB_u \nonumber \\
		&=V^C(s,x)+\int_0^t\left[\ln V^C_x+(\beta+\lambda)V^C-\lambda F(X^*_u)\right]du-\frac{\mu-\rho}{\sigma}\int_0^t\frac{(V^C_s)^2}{V^C_{ss}} dB_u \nonumber \\
		&=:V^C(s,x)+\int_0^t f_1(S^*_u,X^*_u)du+\int_0^t f_2(S^*_u,X^*_u) dB_u 
		\label{eq:ItoVC}
		\end{align}
		where we have used the fact that $\mathcal{L}V^C=0$ on $z\leq z^*$.
		But
		\begin{align*}
		f_1(s,x)&=\ln V_x^C+(\beta+\lambda)V^C-\lambda F(x) \\
		&=\ln\left(\frac{1}{\beta}-zg'(z)\right)+(\beta+\lambda)g(z)-\frac{\lambda}{\beta}\left(\frac{r}{\beta}+\ln \beta -1 \right) \\
		&=\ln\left(\frac{1}{\beta}-h'(u)\right)+(\beta+\lambda)h(u) +\frac{r}{\beta}-1\\
		&=\ln\left(\frac{1}{\beta}-q\right)+(\beta+\lambda)N(q) +\frac{r}{\beta}-1
		=n(q)
		\end{align*}
		which is bounded on $q\leq q^*$ and $f_2(s,x)$ is bounded as well by part (2) of Lemma \ref{lemma:bound}. Now
		\begin{align*}
		V^C(S^*_{t\wedge T},X^*_{t\wedge T})=V^C(s,x)+\int_0^{t\wedge T} f_1(S^*_u,X^*_u)du+\int_0^{t\wedge T} f_2(S^*_u,X^*_u) dB_u
		\end{align*}
		and thus we must have $T=\infty$ because $V^C(S^*_{T},X^*_{T})=V^C(0,0)=-\infty$ but the integrands on the right hand side are bounded.
		
		\eqref{eq:ItoVC} and the boundedness of $f_1$ and $f_2$ also allow us to deduce the transversality condition $\lim_{t\to\infty}\mathbb{E}\left[e^{-(\beta+\lambda)t}V^{C}(S^*_t,X^*_t)\right]=0$. Upon taking limit $t\to\infty$ in \eqref{eq:VeriMg} we can conclude $V(s,x)\geq \mathbb{E}\left[\int_{0}^{\infty}e^{-(\lambda+\beta)u}\left(\ln c^*_u + \lambda F(X^*_u)\right) du\right] = V^C(s,x)$ on $\frac{s}{x}\leq z^*$.
		
		Finally, if the initial value $(s,x)$ is such that $\frac{s}{x}>z^*$, then consider a strategy of paying a discrete dividend $D^*=\frac{s-z^*x}{1+z^*}$ at time zero such that the ex-dividend equity to private wealth ratio is restored to $z^*$, and then follow the candidate optimal strategies $(c^*,\pi^*,\Phi^*)$ described in the regime of $\frac{s}{x}\leq z^*$ thereafter. By construction of $V^C$ on $z>z^*$, $V^C(s,x)=V^C(s-D^*,x+D^*)$. Then \eqref{eq:VeriMg} gives
		\begin{align*}
		\mathbb{E}\left[\int_{0}^{t}e^{-(\lambda+\beta)u}\left(\ln c^*_u + \lambda F(X^*_u)\right) du\right]+ \mathbb{E}\left[e^{-(\lambda+\beta)t} V^{C}(S^*_t,X^*_t)\right]=V^{C}(s-D^*,x+D^*) = V^{C}(s,x).
		\end{align*}
		and again we can conclude $V(s,x)\geq V^C(s,x)$.
		
	\end{enumerate}
		
	\end{proof}
	
\section{Concluding remarks}
\label{sect:conclude}

We develop a continuous-time stochastic control model which jointly determines the optimal dividend policy and capital structure of a defaultable firm as well as the consumption strategy of its risk averse equity investors. We give a complete characterization of the solution to the problem. The optimal dividend policy is a local time strategy which keeps the ratio of the firm's equity value to investors' wealth below a target threshold. Comparative statics of economic importance are derived where the impact of default risk on the corporate policies is highlighted. A firm facing a higher default risk has stronger incentive to pay out dividends aggressively as a precautionary move to preserve value for investors against potential default. To offset the negative effect on investment due to the shrunk equity basis, the firm adopts a higher leverage level to boost its return. This feature can potentially be interesting to mainstream finance literature as dividends and leverage decision of a firm now reflect its riskiness (default probability), and hence they could have important asset pricing implications.

We have exclusively focused on the equity value evolution without considering the payoff to the bondholders. In particular, the corporate yield $\rho$ is a given constant. A possible variant of the current model may involve bondholders who understand the default probability of the firm and adjust the cost of debt accordingly. An example of the modeling strategy can be found in \cite{lambrecht-tse18}, where risk-neutral bondholders charge a fair corporate yield as a function of the leverage level $\pi$ and default risk $\lambda$. Our analysis can be extended in a similar fashion and this can potentially shed colors on the issues of agency cost of debt and their impacts on the corporate policies although the analysis might then rely more heavily on numerical studies.

Investors have logarithm utility function in the current model. A natural and tempting extension of the model is to consider a more general power utility function such that the effect of risk aversion can be investigated. Unfortunately, it appears difficult to apply the same set of transformation scheme to facilitate the analysis of the HJB system since the ``bequest'' term $\lambda F(x)$ now has a multiplicative (rather than additive) form and the resulting first order system $n'=O(q,n)$ is much more complicated. Moreover, we also expect that under power utility function the issue of well-posedness will lead to extra complications of the problem.\footnote{For utility function in form of $u(c)=c^{1-R}/(1-R)$ where $R\in(0,\infty)\setminus \{1\}$ is the risk aversion level, the parameter combination of $R<1$ and $\beta\leq (1-R)r$ leads to an ill-posed problem since the deterministic optimal consumption problem under such parameters is ill-posed and thus a version of the problem with the defaultable firm is also ill-posed.} Identification of the exact conditions under which well-posedness holds for stochastic control problems of this type has historically been a very difficult task. For example, since the rigorous formulation of the Merton consumption/investment problem under transaction costs by \cite{davis-norman90}, it has taken more than two decades for the precise well-posedness conditions to be established by \cite{choi-sirbu-zitkovic13}. A full generalization of the model in the current paper to power utility function should prove to be a challenging open problem for future research.

In our framework, the only outside investment option available for the investors is the retail saving account. Another possible direction of future research is to allow equity investors to also invest in another risky market asset which can potentially be correlated to the risky asset of the firm. While it is expected that the extra dimension introduced will bring significant challenges to the analysis of the underlying HJB equation, it is perhaps not an impossible task in view of the recent progress by \cite{hobson-tse-zhu16} who completely solve a multi-asset Merton problem with transaction costs (albeit the special case where transaction cost is only payable for one of the assets).

\bibliographystyle{apalike}
\bibliography{ref}

\section{Appendix}

\subsection{A bound of $n$}

The following lemma will be useful when establishing the proofs related to the dependence of the optimal controls on the state variables.

\begin{lemma}
	Suppose $\mu\neq \rho$ and recall the notations introduced in Proposition \ref{prop:theODE}. Let $\chi(q):=n(q)-\ell(q)$. Then $\chi(q)\leq \alpha q$ for $0\leq q\leq q^*$ where $\alpha$ is defined to be the positive root to the quadratic equation \eqref{eq:QuadEq}.
\label{lemma:nbound}
\end{lemma}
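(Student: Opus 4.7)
The plan is to establish the bound via a comparison argument against the affine barrier $\bar{\chi}_\epsilon(q) := \alpha q + \epsilon$ for arbitrary $\epsilon > 0$. The key algebraic identity driving the argument is
\begin{equation*}
\hat{O}(q, \alpha q) - \alpha \;=\; -\frac{\beta^3 q}{(\beta+\lambda)(1-\beta q)}, \qquad 0 < q < \tfrac{1}{\beta},
\end{equation*}
which I would obtain by substituting $\chi = \alpha q$ into $\hat{O}(q,\chi) = A(q,\chi) q/\chi + B(q)$, bringing everything over the common denominator $(\beta+\lambda)(1-\beta q)$, and using the defining quadratic relation $A(0,0) + B(0)\alpha - \alpha^2 = 0$ to kill both the constant term and the term proportional to $(\mu-\rho)^2$. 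The residue is strictly negative on $(0, q^*] \subset (0, 1/\beta)$.

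The second ingredient is the monotonicity property that, for $q > 0$ and $\chi > 0$,
\begin{equation*}
\partial_\chi \hat{O}(q,\chi) \;=\; -\frac{\beta^2 (\mu-\rho)^2 q}{2\sigma^2 (\beta+\lambda)\chi^2} \;<\; 0,
\end{equation*}
which follows from a short calculation showing that $A_\chi \chi - A = -\beta^2 (\mu-\rho)^2/[2\sigma^2(\beta+\lambda)]$ is in fact independent of $q$ and $\chi$. With these in hand, I would argue by contradiction. Fix $\epsilon > 0$ and suppose the set $E_\epsilon := \{q \in (0, q^*] : \chi(q) \geq \bar{\chi}_\epsilon(q)\}$ is nonempty. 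Set $q^{**} := \inf E_\epsilon$; since $\bar{\chi}_\epsilon(0) = \epsilon > 0 = \chi(0)$, continuity forces $q^{**} > 0$, $\chi(q^{**}) = \bar{\chi}_\epsilon(q^{**}) = \alpha q^{**} + \epsilon$, and $\chi < \bar{\chi}_\epsilon$ on $[0, q^{**})$. The left-derivative inequality then yields $\chi'(q^{**}) \geq \bar{\chi}_\epsilon'(q^{**}) = \alpha$, whereas the two displayed facts above force
\begin{equation*}
\chi'(q^{**}) \;=\; \hat{O}(q^{**}, \alpha q^{**} + \epsilon) \;<\; \hat{O}(q^{**}, \alpha q^{**}) \;=\; \alpha - \frac{\beta^3 q^{**}}{(\beta+\lambda)(1-\beta q^{**})} \;<\; \alpha,
\end{equation*}
a contradiction. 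Hence $\chi < \bar{\chi}_\epsilon$ on all of $[0, q^*]$, and sending $\epsilon \downarrow 0$ delivers $\chi(q) \leq \alpha q$.

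The main obstacle is purely computational: the cancellation producing the clean expression for $\hat{O}(q, \alpha q) - \alpha$ only surfaces after carefully combining the $A$ and $B$ pieces over the common denominator and applying the quadratic relation defining $\alpha$. Without that simplification one cannot see why the residue keeps constant sign. Once the identity and the monotonicity of $\hat{O}(q,\cdot)$ are recorded, the perturbation/comparison step is entirely routine and sidesteps any delicate asymptotic analysis near the singular initial point $(0,0)$, where a naive first-order Taylor expansion $\chi(q) = \alpha q + o(q)$ would leave the sign of the $o(q)$ remainder ambiguous.
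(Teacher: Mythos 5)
Your proof is correct, and it takes a genuinely different route from the paper's. The paper handles the singular initial point $(0,0)$ head-on: it differentiates the ODE, sends $q\to 0$, applies L'H\^opital to obtain $\chi''(0+)<0$, concludes $\chi$ is concave near the origin and hence initially strictly below $L(q)=\alpha q$, and then invokes a downcrossing argument based on $\hat O(q,\alpha q)\le\alpha$. You sidestep the second-derivative analysis entirely by perturbing the barrier to $\alpha q+\epsilon$: since $\chi(0)=0<\epsilon$, any first touching point $q^{**}$ lies strictly inside $(0,q^*]$, away from the singular origin, where $\chi$ is $C^1$ and the ODE formula is unambiguous. Your two algebraic ingredients are both verified: $\hat O(q,\alpha q)-\alpha=-\beta^3 q/[(\beta+\lambda)(1-\beta q)]$ (the cancellation hinges on $1/\beta-q=(1-\beta q)/\beta$ collapsing $A(q,\alpha q)/\alpha$ to $A(0,0)/\alpha-\beta^2 q/(1-\beta q)$ and the quadratic relation $A(0,0)/\alpha+B(0)=\alpha$), and $A_\chi\chi-A=-\beta^2(\mu-\rho)^2/[2\sigma^2(\beta+\lambda)]$ giving $\partial_\chi\hat O<0$. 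Combined, these force $\chi'(q^{**})=\hat O(q^{**},\alpha q^{**}+\epsilon)<\hat O(q^{**},\alpha q^{**})<\alpha$, contradicting the left-derivative bound $\chi'(q^{**})\ge\alpha$ at a first upcrossing. What your approach buys is a cleaner argument that trades the L'H\^opital computation of $\chi''(0+)$ (which must resolve the sign of the $o(q)$ remainder at a point where the ODE is singular) for a monotonicity check of $\hat O$ in its second argument, which is a one-line calculation. The paper's approach is more direct in spirit but requires more care near the origin; yours is slightly longer to state but each step is elementary and robust.
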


\begin{proof}
	On differentiating both side of \eqref{eq:chiODE}, setting $q\to 0$ and applying L'Hopital's rule we can obtain
	\begin{align*}
	\chi''(0+)=-\frac{\beta^2(\mu-\rho)^2}{4(\beta+\lambda)^2\sigma^2\alpha^2}\chi''(0+)-\frac{\beta^3}{\beta+\lambda}
	\end{align*}
	which give
	\begin{align*}
	\chi''(0+)=-\frac{\beta^3/(\beta+\lambda)}{1+\frac{\beta^2(\mu-\rho)^2}{4(\beta+\lambda)^2\sigma^2\alpha^2}}<0.
	\end{align*}
	Hence $\chi$ is concave near $q=0$ and then $\chi$ must be initially lying below $L(q):=\chi'(0)q=\alpha q$.
	
	Write the ODE \eqref{eq:chiODE} as $\chi'(q)=\hat{O}(q,\chi(q))$. Then
	\begin{align*}
	\hat{O}(q,L(q))=\hat{O}(q,\alpha q)&=\frac{\beta^2}{\beta+\lambda}\frac{(\mu-\rho)^2}{2\sigma^2\alpha}-\frac{\beta}{\beta+\lambda}\left(\rho-r+\frac{(\mu-\rho)^2}{2\sigma^2}\right)-\frac{\beta}{1-\beta q}\left(\beta q-\frac{\lambda}{\beta+\lambda}\right) \\
	&=\frac{\beta^2}{\beta+\lambda}\frac{(\mu-\rho)^2}{2\sigma^2\alpha}-\frac{\beta}{\beta+\lambda}\left(\rho-r+\frac{(\mu-\rho)^2}{2\sigma^2}\right)-\frac{1}{\lambda+\beta}\frac{\beta^3 q}{1-\beta q}+\frac{\lambda \beta}{\lambda+\beta} \\
	&\leq \frac{\beta^2}{\beta+\lambda}\frac{(\mu-\rho)^2}{2\sigma^2\alpha}-\frac{\beta}{\beta+\lambda}\left(\rho-r+\frac{(\mu-\rho)^2}{2\sigma^2}\right)+\frac{\lambda \beta}{\lambda+\beta}  \\
	&=\frac{A(0,0)}{\alpha}+B(0) =\alpha = L'(q).
	\end{align*}
	Thus $\chi(q)$ can only downcross $L(q)$ and from this we conclude $\chi(q)\leq L(q)=\alpha q$. 
	
\end{proof}

\subsection{Dependence of the optimal controls on the state variables}
\label{appsect:statevar}

	\begin{proof}[Proof of Proposition \ref{prop:policyrange}]
	\begin{enumerate}
		\item Recall from Proposition \ref{prop:theODE} that $q^*\in\left(\frac{1}{\beta}\frac{(\rho-r-\lambda)^{+}}{(\rho-r-\lambda)^{+}+\lambda},\frac{1}{\beta}\right)$. The result follows immediately under the relationship $z^*=\frac{\beta q^*}{1-\beta q^*}$.
		
		\item The result mainly follows from part (2) of Lemma \ref{lemma:bound} that
		\begin{align}
		\pi^*(s,x)=\frac{\mu-\rho}{\sigma^2}\left[-\frac{V_s}{sV_{ss}}\right]=\frac{\mu-\rho}{\sigma^2}\left[- \frac{zg'(z)}{z^2g''(z)}\right]=:\frac{\mu-\rho}{\sigma^2}\theta(z)
		\label{eq:pi}
		\end{align} 
		such that $\theta(z)=- \frac{zg'(z)}{z^2g''(z)}=\frac{1}{1-1/N'(q)}$ which is positive, strictly larger than one and bounded since $N'$ is a positive function bounded away from $1$ on $q\in(0,q^*]$.
		
		It remains to show that $\theta(z)$ is decreasing in $z$ which is equivalent to showing that $N'(q)$ is increasing in $q$. By definition of $N$, we can obtain $	N''(q)=\frac{n''(q)}{\beta}+\frac{\beta}{(1-\beta q)^2}$. We work out the second derivative of $n$ as
		\begin{align*}
		n''(q)&=\frac{d}{dq}O(q,n(q)) =\frac{d}{dq}\left[\frac{\beta^2 q}{1-\beta q}\frac{m(q)-n(q)}{n(q)-\ell(q)}\right] 
		=\frac{d}{dq}\left\{\frac{\beta^2 q}{1-\beta q}\left[\frac{m(q)-\ell(q)}{n-\ell(q)}-1\right]\right\} \\
		&=\frac{\beta^2}{\beta+\lambda}\frac{(\mu-\rho)^2}{2\sigma^2}\frac{[n(q)-\ell(q)]-q[n'(q)-\ell'(q)]}{[n(q)-\ell(q)]^2}-\frac{\beta^2}{(1-\beta q)^2}
		\end{align*}
		and hence
		\begin{align*}
		N''(q)=\frac{\beta}{\beta+\lambda}\frac{(\mu-\rho)^2}{2\sigma^2}\frac{[n(q)-\ell(q)]-q[n'(q)-\ell'(q)]}{[n(q)-\ell(q)]^2}=\frac{\beta}{\beta+\lambda}\frac{(\mu-\rho)^2}{2\sigma^2}\frac{\chi(q)-q\chi'(q)}{[\chi(q)]^2}
		\end{align*}
		where $\chi(q):=n(q)-\ell(q)$ as introduced in the proof of Proposition \ref{prop:theODE}. Hence to show that $N'$ is increasing it is necessary and sufficient to show that $\chi'(q)\leq \frac{\chi(q)}{q}$ for all $0<q\leq q^*$.
		
		Suppose on contrary that there exists $\bar{q}\in(0,q^*]$ such that $\chi'(\bar{q})> \frac{\chi(\bar{q})}{\bar{q}}=:\bar{\alpha}$. From Lemma \ref{lemma:nbound}, $\chi(q)\leq \alpha q$ for $q\in(0,\bar{q}]$ where $\alpha:=\chi'(0)$ and thus we must have $\bar{\alpha} \leq \alpha$. Let $L(q):=\bar{\alpha} q$. Then we have $L'(\bar{q})=\bar{\alpha} <\chi'(\bar{q})$ and hence $\chi(q)$  upcrosses $L(q)$ at $q=\bar{q}$.
		
		Since $\chi'(0)=\alpha>\bar{\alpha}$, $\chi(q)$ must initially be large than $L(q)$ for $q$ near zero. Hence there must exist some $\tilde{q}<\bar{q}$ where $\chi$ downcrosses $L$ at $q=\tilde{q}$. But, recall the definition of $\hat{O}$ in \eqref{eq:chiODE},
		\begin{align*}
		\bar{\alpha}<\chi'(\bar{q})&=\hat{O}(\bar{q},\chi(\bar{q}))=\hat{O}(\bar{q},\bar{\alpha} \bar{q})\\
		&=\frac{\beta^2}{\beta+\lambda}\frac{(\mu-\rho)^2}{2\sigma^2\bar{\alpha}}-\frac{\beta}{\beta+\lambda}\left(\rho-r+\frac{(\mu-\rho)^2}{2\sigma^2}\right)-\frac{\beta}{1-\beta \bar{q}}\left(\beta \bar{q}-\frac{\lambda}{\beta+\lambda}\right) \\
		&=\frac{\beta^2}{\beta+\lambda}\frac{(\mu-\rho)^2}{2\sigma^2\bar{\alpha}}-\frac{\beta}{\beta+\lambda}\left(\rho-r+\frac{(\mu-\rho)^2}{2\sigma^2}\right)-\frac{1}{\lambda+\beta}\frac{\beta^3 \bar{q}}{1-\beta \bar{q}}+\frac{\lambda \beta}{\lambda+\beta} \\
		&< \frac{\beta^2}{\beta+\lambda}\frac{(\mu-\rho)^2}{2\sigma^2\bar{\alpha}}-\frac{\beta}{\beta+\lambda}\left(\rho-r+\frac{(\mu-\rho)^2}{2\sigma^2}\right)-\frac{1}{\lambda+\beta}\frac{\beta^3 \tilde{q}}{1-\beta \tilde{q}}+\frac{\lambda \beta}{\lambda+\beta} \\
		&=\hat{O}(\tilde{q},\bar{\alpha} \tilde{q})=\hat{O}(\tilde{q},\chi(\tilde{q}))=\chi'(\tilde{q}).
		\end{align*}
		But this contradicts the fact that $\chi$ downcrosses $L$ at $q=\tilde{q}$.
		
		\item The result is immediate from \eqref{eq:consumbound} where
		\begin{align}
		\frac{c^*(s,x)}{x}=\frac{1}{1/\beta-zg'(z)}= \frac{1}{1/\beta-q}.
		\label{eq:consume}
		\end{align} 
		The above is increasing in $q$ and in turn $z$ and it tends to $\beta$ as $q\downarrow 0$ or equivalently $z\downarrow 0$.
	
	\end{enumerate}
\end{proof}

\begin{proof}[Proof of Proposition \ref{prop:PolicySpecialCase}]

$q^*=\frac{\rho-r-\lambda}{\beta(\rho-r)}$ under $\mu=\rho> \lambda+r$ and hence $z^*=\frac{\beta q^*}{1-\beta q^*}=\frac{\rho-r-\lambda}{\lambda}$. $\pi^{*}=0$ is trivial when $\mu=\rho$. Finally, the expression of $c^*(s,x)$ can be established using \eqref{eq:consume} and the fact that $n=m$ when $\mu=\rho>\lambda+r$. Thus \eqref{eq:z_and_q} in Lemma \ref{lemma:zqlink} can be further simplified to
\begin{align}
z=z(q)&=\frac{\beta q}{1-\beta q}\exp\left(-\int_{q}^{\frac{\rho-r-\lambda}{\beta(\rho-r)}}\frac{m'(v)}{\beta v}dv\right) \nonumber\\
&=\frac{\beta q}{1-\beta q}\exp\left[-\int_{q}^{\frac{\rho-r-\lambda}{\beta(\rho-r)}}\frac{1}{\beta +\lambda}\left(\frac{\rho-r}{v}-\frac{\lambda}{v(1-\beta v)}\right)dv\right]
\label{eq:zqSpeicalCaseIntegral}
\end{align}
and we can arrive at \eqref{eq:closeform_zq}.

\end{proof}

\subsection{Comparative statics}
\label{appsect:compstat}

\begin{proof}[Proof of Proposition \ref{prop:compstat}]
	
	In the case of $\mu= \rho>\lambda+r$, $z^*=\frac{\rho-r-\lambda}{\lambda}$ and $\pi^*=0$ and hence their comparative statics are trivial. While it may be less trivial to deduce the comparative statics of $c^*(s,x)$ directly from \eqref{eq:closeform_zq}, one trick is to observe its integral form as in \eqref{eq:zqSpeicalCaseIntegral} to deduce that $z=z(q)$ is increasing in $\lambda$ and $r$. As $z=z(q)$ is an increasing bijection, its inverse function $q=q(z)$ is decreasing in $\lambda$ and $r$. This monotonicity is then inherited by $c^*(s,x)=\frac{x}{1/\beta-q(s/x;\lambda,r)}$.
	
	Now we proceed to give the proof in the general case of $\mu\neq \rho$:
	\begin{enumerate}
		
		\item Recall that the transformed value function $n$ is the solution to the ODE $n'=O(q,n)$ where $O$ is defined in Proposition \ref{prop:theODE}, and the transformed dividend payment boundary is given by $q^*:=\inf\{q>0:n(q)\geq m(q)\}$.
		
		Let $b(q):=\frac{\beta+\lambda}{\beta}(m(q)-n(q))$. Then the ODE becomes $m'(q)-\frac{\beta}{\beta+\lambda}b'(q)=O\left(q,m(q)-\frac{\beta}{\beta+\lambda}b(q)\right)$ which can be written as
		\begin{align}
		b'(q)=\rho-r-\frac{\lambda}{1-\beta q}-\frac{(\beta+\lambda)\beta q}{1-\beta q}\frac{b(q)}{\frac{(\mu-\rho)^2}{2\sigma^2}\left(\frac{1}{\beta}-q\right)-b(q)}=:P(q,b(q))
		\label{eq:FormP}
		\end{align}
		subject to initial condition $b(0)=\frac{\beta+\lambda}{\beta}(m(0)-\ell(0))=\frac{(\mu-\rho)^2}{2\beta\sigma^2}$. The dividend payment boundary can now be expressed as $q^*:=\inf\{q>0:b(q)\leq 0\}$. Note that $P(q,b)$ is decreasing in $\lambda$ for as long as $0\leq \frac{(\mu-\rho)^2}{2\sigma^2}\left(\frac{1}{\beta}-q\right)$ which must be satisfied along the solution trajectory $b=(b(q))_{0\leq q\leq q^*}$ since the transformed value function $n=n(q)$ always lies between $m(q)$ and $\ell(q)$ on $[0,q^*]$. 
		
		Consider $\lambda^{hi}>\lambda^{lo}$ and denote by $b^{hi}$ (resp. $b^{lo}$) the solution to the ODE $b'=P(q,b(q);\lambda^{hi})$ (resp. $b'=P(q,b(q);\lambda^{lo})$) with initial condition $b(0)=\frac{(\mu-\rho)^2}{2\beta\sigma^2}$. Since $P(q,b;\lambda^{hi})<P(q,b;\lambda^{lo})$, $b^{hi}$ can only downcross $b^{lo}$. Thus $b^{hi}$ is dominated by $b^{lo}$ at least up to $\min(q^{*}_{hi},q^{*}_{lo})$, where $q^*_{hi}:=\inf\{q>0:b^{hi}(q)\leq 0\}$ (and $q^*_{lo}$ is defined similarly). Hence we must have $q^{*}_{hi}<q^{*}_{lo}$ from which we conclude $q^*$ and in turn $z^*=\frac{\beta q^*}{1-\beta q^*}$ are both decreasing in $\lambda$.
		
		The exact same argument can be used to establish the comparative statics of $z^*$ with respect to $\mu$, $\sigma$ and $r$. From \eqref{eq:FormP} it is easy to see that $P(q,b)$ is increasing in $\frac{(\mu-\rho)^2}{\sigma^2}$ (while keeping all the other parameters fixed) and is decreasing in $r$. The result follows immediately.
		
		\item From \eqref{eq:pi}, $(\mu-\rho)\pi^*(s,x)\propto\frac{1}{1-1/N'(q)}$ and hence to show that $(\mu-\rho)\pi^*$ is increasing in $\lambda$ it is sufficient to show that $N'(q)=N'(q(z;\lambda); \lambda)$ is decreasing in $\lambda$. Using the substitution of $b(q):=\frac{\beta+\lambda}{\beta}(m(q)-n(q))$ again, we have
		\begin{align*}
		N'(q;\lambda)=\frac{n'(q;\lambda)}{\beta}+\frac{1}{1-\beta q} 
		&=\frac{m'(q;\lambda)-\frac{\beta}{\beta+\lambda}b'(q;\lambda)}{\beta}+\frac{1}{1-\beta q}\\
		&=\frac{\beta q}{1-\beta q}\frac{b(q;\lambda)}{\frac{(\mu-\rho)^2}{2\sigma^2}\left(\frac{1}{\beta}-q\right)-b(q;\lambda)}+\frac{1}{1-\beta q}
		\end{align*}
		and thus $N'(q;\lambda)$ is decreasing in $\lambda$ under a fixed $q$ as $b(q;\lambda)$ is decreasing in $\lambda$ as shown in part (1) of this proof. Then
		\begin{align*}
		\frac{d}{d\lambda} N'(q(\lambda);\lambda)=\frac{\partial}{\partial \lambda} N'(q(\lambda);\lambda)+\frac{\partial}{\partial q} N'(q(\lambda);\lambda)\frac{\partial }{\partial\lambda}q(\lambda)<0
		\end{align*}
		since $q(\lambda)$ is decreasing in $\lambda$ and $N'(q)$ is increasing in $q$ as already shown in  part (1) of this proof and part (2) of the proof of Proposition \ref{prop:policyrange} respectively. Similarly, we can show that $(\mu-\rho)\pi^*$ is increasing in $r$.
		
		\item From \eqref{eq:consume} the optimal consumption rate per unit wealth is given by $\frac{c^*(s,x)}{x}=\frac{1}{1/\beta-q}$. We first show that the expression is decreasing in $\lambda$ which is equivalent to showing that $q=q(z;\lambda)$ is decreasing in $\lambda$. 
		
		Using \eqref{eq:z_and_q} and the substitution of $b(q):=\frac{\beta+\lambda}{\beta}(m(q)-n(q))$ again, we have
		\begin{align*}
		z=z(q;\lambda)&=\frac{\beta q}{1-\beta q}\exp\left[-\int_{q}^{q^*(\lambda)}\left(\frac{\beta}{1-\beta v}\frac{b(v;\lambda)}{\frac{(\mu-\rho)^2}{2\sigma^2}\left(\frac{1}{\beta}-v\right)-b(v;\lambda)}\right)dv\right].
		\end{align*}
		We have shown in part (1) of the proof that both $q^*(\lambda)$ and $b({}\cdot{};\lambda)$ are decreasing in $\lambda$. Hence $z=z(q;\lambda)$ is increasing in $\lambda$. As $z=z(q;\lambda)$ is increasing in $q$, $q=q(z;\lambda)$ is decreasing in $\lambda$. Hence the result follows. Using the exact same argument, it can be shown that $q=q(z;r)$ and in turn $\frac{c^*(s,x)}{x}$ are decreasing in $r$.
		
	\end{enumerate}

\end{proof}

\end{document}